\DeclareMathOperator*{\argmin}{argmin}
\newtheorem{remark}{Remark}
\newtheorem{theorem}{Theorem}
\newtheorem{lemma}{Lemma}
\newtheorem{corollary}{Corollary}
\newcommand{\biggg}{\bBigg@{3}}
\newcommand{\Biggg}{\bBigg@{3.5}}
\def\BibTeX{{\rm B\kern-.05em{\sc i\kern-.025em b}\kern-.08em
    T\kern-.1667em\lower.7ex\hbox{E}\kern-.125emX}}
\begin{document}
	\newcommand{\her}{\mathsf{H}}
	\newcommand{\setX}{\mathbbmss{X}}
	\newcommand{\setL}{\mathbbmss{L}}
	\newcommand{\setA}{\mathbbmss{A}}
	\newcommand{\setP}{\mathbbmss{P}}
	\newcommand{\setR}{\mathbbmss{R}}
	\newcommand{\setV}{\mathbbmss{V}}
	\newcommand{\setS}{\mathbbmss{S}}
	\newcommand{\setW}{\mathbbmss{W}}
	\newcommand{\setJ}{\mathbbmss{J}}
	\newcommand{\setZ}{\mathbbmss{Z}}
	\newcommand{\setB}{\mathbbmss{B}}
	\newcommand{\setO}{\mathbbmss{O}}
	\newcommand{\setC}{\mathbbmss{C}}
	\newcommand{\setU}{\mathbbmss{U}}
	\newcommand{\setD}{\mathbbmss{D}}
	\newcommand{\setF}{\mathbbmss{F}}
	\newcommand{\setH}{\mathbbmss{H}}
	
	\newcommand{\Ex}[2]{ \mathbbmss{E}_{#2} \left\lbrace #1 \right\rbrace }
	\newcommand{\Ind}[1]{ \mathds{I} \left\lbrace #1 \right\rbrace }
	
	\newcommand{\minSub}[4]{
		&\min_{#2} #1 \tag{$#4$} \\
		&{\rm{s.t.}}\quad
		\begin{array}{l}
			#3
		\end{array}
		\nonumber
	}

	\newcommand{\rmP}{P_{\max}}
	\newcommand{\ave}{\mathrm{P}}
	\newcommand{\rmj}{\mathrm{j}}
	\newcommand{\rmc}{\mathrm{c}}
	\newcommand{\rmp}{\mathrm{p}}
	\newcommand{\rmq}{\mathrm{q}}
	\newcommand{\rmF}{\mathrm{F}}
	\newcommand{\rmH}{\mathrm{H}}
	\newcommand{\rmR}{\mathrm{R}}
	\newcommand{\rmg}{\mathrm{g}}
	\newcommand{\rmG}{\mathrm{G}}
	\newcommand{\rmQ}{\mathrm{Q}}
	\newcommand{\rmT}{\mathrm{T}}
	\newcommand{\rmA}{\mathrm{A}}
	\newcommand{\rmK}{\mathrm{K}}
	\newcommand{\rmI}{\mathrm{I}}
	\newcommand{\rmS}{\mathrm{S}}
	\newcommand{\lams}{\lambda^{\mathsf{s}}}
	
	\newcommand{\out}{\mathrm{out}}
	\newcommand{\In}{\mathrm{in}}
	
	\newcommand{\rsim}{\mathsf{Sim}^\mathsf{R}}
	\newcommand{\vecc}{\mathrm{vec}}
	\newcommand{\subto}{\mathop{\mathrm{s.t.}\;}}

	\newcommand{\sfF}{\mathsf{F}}
	\newcommand{\sfI}{\mathsf{I}}
	\newcommand{\sfT}{\mathsf{T}}
	\newcommand{\sfC}{\mathsf{C}}
	\newcommand{\sfM}{\mathsf{M}}
	\newcommand{\sfQ}{\mathsf{Q}}
	\newcommand{\sfE}{\mathsf{E}}
	\newcommand{\sfD}{\mathsf{D}}
	\newcommand{\sfW}{\mathsf{W}}
	\newcommand{\sfr}{\mathsf{r}}
	\newcommand{\sfR}{\mathsf{R}}
	\newcommand{\sfd}{\mathsf{d}}
	\newcommand{\sfZ}{\mathsf{Z}}
	\newcommand{\sfL}{\mathsf{L}}
	\newcommand{\sfK}{\mathsf{K}}
	\newcommand{\sfp}{\mathsf{p}}
	\newcommand{\sfa}{\mathsf{a}}

	\newcommand{\mah}{\mathcal{H}}
	\newcommand{\maf}{\mathcal{F}}
	\newcommand{\maT}{\mathcal{T}}
	\newcommand{\maP}{\mathcal{P}}
	\newcommand{\mae}{\mathcal{E}}
	\newcommand{\maz}{\mathcal{Z}}
	\newcommand{\mas}{\mathcal{S}}
	\newcommand{\mav}{\mathcal{V}}
	\newcommand{\mab}{\mathcal{B}}
	\newcommand{\man}{\mathcal{N}}
	\newcommand{\mam}{\mathcal{M}}
	\newcommand{\maD}{\mathcal{D}}
	\newcommand{\mao}{\mathcal{O}}
	\newcommand{\maq}{\mathcal{Q}}
	\newcommand{\mg}{\mathcal{G}}
	\newcommand{\mai}{\mathcal{I}}
	\newcommand{\maj}{\mathcal{J}}
	\newcommand{\mak}{\mathcal{K}}
	\newcommand{\mac}{\mathcal{C}}
	\newcommand{\maw}{\mathcal{W}}
	\newcommand{\mal}{\mathcal{L}}

	\newcommand{\bhxx}{\mathbf{x}}
	\newcommand{\bo}{\mathbf{o}}
	\newcommand{\bxx}{\mathbf{x}}
	\newcommand{\mh}{\mathbf{h}}
	\newcommand{\bss}{\mathbf{s}}
	\newcommand{\bww}{\mathbf{w}}
	\newcommand{\buu}{\mathbf{u}}
	\newcommand{\bvv}{\mathbf{v}}
	\newcommand{\byy}{\mathbf{y}}
	\newcommand{\bgg}{\mathbf{g}}
	\newcommand{\bff}{\mathbf{f}}
	\newcommand{\bhzz}{{\mathbf{z}}}
	\newcommand{\bmm}{{\mathbf{m}}}
	
	\newcommand{\Th}{{\mathrm{Th}}}
	
	\newcommand{\bd}{\mathbf{d}}
	
	\newcommand{\rmx}{\mathrm{x}}
	\newcommand{\rms}{\mathrm{s}}
	\newcommand{\rmw}{\mathrm{w}}
	\newcommand{\rmu}{\mathrm{u}}
	\newcommand{\rmh}{\mathrm{h}}
	\newcommand{\rmv}{\mathrm{v}}
	\newcommand{\rmy}{\mathrm{y}}
	\newcommand{\papr}{\mathrm{PAPR}}

	\newcommand{\bh}{{\mathbf{h}}}
	\newcommand{\btheta}{{\boldsymbol{\theta}}}
	\newcommand{\bzeta}{{\boldsymbol{\zeta}}}
	\newcommand{\blambda}{{\boldsymbol{\lambda}}}
	\newcommand{\bxi}{{\boldsymbol{\xi}}}
	\newcommand{\bmu}{\boldsymbol{\mu}}
	\newcommand{\bx}{{\boldsymbol{x}}}
	\newcommand{\hx}{{\hat{x}}}
	\newcommand{\tx}{{\tilde{x}}}
	\newcommand{\tv}{{\tilde{v}}}
	\newcommand{\hv}{{\hat{v}}}
	\newcommand{\htt}{{\hat{t}}}
	\newcommand{\vv}{\mathrm{v}}
	\newcommand{\hxx}{\hat{\mathrm{x}}}
	\newcommand{\xx}{\mathrm{x}}
	\newcommand{\yy}{\mathrm{y}}
	\newcommand{\btx}{\boldsymbol{\tilde{x}}}
	\newcommand{\btv}{\boldsymbol{\tilde{v}}}
	\newcommand{\bhx}{{\boldsymbol{\hat{x}}}}
	\newcommand{\set}[1]{\left\lbrace#1\right\rbrace}
	\newcommand{\Diag}[1]{\mathrm{Diag}\left\lbrace #1 \right\rbrace}
	\newcommand{\brc}[1]{\left( #1 \right) }
	\newcommand{\abs}[1]{\left\vert #1 \right\vert }
	\newcommand{\norm}[1]{\left\Vert #1 \right\Vert }
	\newcommand{\inner}[1]{\left\langle #1 \right\rangle }
	\newcommand{\dbb}[1]{\Pi_{\rm d}\left( #1 \right) }
	\newcommand{\dbc}[1]{\left[ #1 \right] }
	\newcommand{\itr}[1]{^{\left( #1 \right)} }
	\newcommand{\alg}{\mathrm{Alg}}
	\newcommand{\mar}{\mathcal{R}}

	\newcommand{\ty}{{\tilde{y}}}
	\newcommand{\rme}{\mathrm{e}}
	\newcommand{\tz}{{\tilde{z}}}
	\newcommand{\mfPt}{\tilde{\mathfrak{P}}}
	\newcommand{\mfIt}{\tilde{\mathfrak{I}}}
	\newcommand{\makt}{\tilde{\mathcal{K}}}
	
	\newcommand{\bz}{{\boldsymbol{z}}}
	\newcommand{\ba}{{\boldsymbol{a}}}
	\newcommand{\bg}{{\boldsymbol{g}}}
	\newcommand{\bs}{{\boldsymbol{s}}}
	\newcommand{\brr}{{\mathbf{c}}}
	\newcommand{\br}{{\mathbf{r}}}
	\newcommand{\bcc}{{\mathbf{C}}}
	\newcommand{\bp}{{\mathbf{p}}}
	\newcommand{\bP}{{\mathbf{P}}}
	\newcommand{\bu}{{\boldsymbol{u}}}
	\newcommand{\bv}{{\boldsymbol{v}}}
	\newcommand{\bc}{{\boldsymbol{c}}}
	\newcommand{\bw}{{\boldsymbol{w}}}
	
	\newcommand{\pmf}{{\mathrm{P}}}
	\newcommand{\pdf}{{\mathrm{f}}}
	\newcommand{\cdf}{{\mathrm{F}}}

	\newcommand{\bzz}{{\mathbf{z}}}
	\newcommand{\baa}{{\mathbf{a}}}

	\newcommand{\mfP}{\mathfrak{P}}
	\newcommand{\mfI}{\mathfrak{I}}
	\newcommand{\dif}{\mathrm{d}}
	
	\newcommand{\by}{{\boldsymbol{y}}}
	\newcommand{\hy}{{\hat{y}}}
	\newcommand{\bhy}{{\boldsymbol{\hat{y}}}}
	
	\newcommand{\bn}{{\mathbf{n}}}
	\newcommand{\hn}{{\hat{n}}}
	\newcommand{\be}{{\mathbf{e}}}
	
	\newcommand{\trp}{\mathsf{T}}
	
	\newcommand{\mA}{\mathbf{A}}
	\newcommand{\mF}{\mathbf{F}}
	\newcommand{\mR}{\mathbf{R}}
	\newcommand{\ma}{\mathbf{A}}
	\newcommand{\mI}{\mathbf{I}}
	\newcommand{\mone}{\mathbf{1}}
	\newcommand{\mJ}{\mathbf{J}}
	\newcommand{\mG}{\mathbf{G}}
	\newcommand{\mQ}{\mathbf{Q}}
	\newcommand{\mS}{\mathbf{S}}
	\newcommand{\mW}{\mathbf{W}}
	\newcommand{\mU}{\mathbf{U}}
	\newcommand{\mGam}{\mathbf{A}}
	\newcommand{\mD}{\mathbf{D}}
	\newcommand{\mM}{\mathbf{M}}
	\newcommand{\mLA}{\mathbf{\Lambda}}
	\newcommand{\mTheta}{\mathbf{\Theta}}
	\newcommand{\mSigma}{\mathbf{\Sigma}}
	\newcommand{\mPsi}{\mathbf{\Psi}}
	\newcommand{\mXi}{\mathbf{\Xi}}
	\newcommand{\mY}{\mathbf{Y}}
	\newcommand{\mX}{\mathbf{X}}
	\newcommand{\mV}{\mathbf{V}}
	\newcommand{\mtV}{\tilde{\mathbf{V}}}
	\newcommand{\mB}{\mathbf{B}}
	\newcommand{\mT}{\mathbf{T}}
	\newcommand{\mH}{\mathbf{H}}
	\newcommand{\mL}{\mathbf{L}}
	\newcommand{\mC}{\mathbf{C}}
	\newcommand{\rmE}{\mathrm{Erg}}
	\newcommand{\bomega}{\boldsymbol{\omega}}

	\newcommand{\md}{\mathrm{D}}
	\newcommand{\E}{\mathbb{E}}
	\newcommand{\MSE}{\mathsf{MSE}}
	\newcommand{\mse}{\mathsf{MSE}}
	\newcommand{\hp}{\hat{p}}
	\newcommand{\rs}{{\mathsf{rs}}}
	\newcommand{\rsb}{{\mathsf{rsb}}}
	\newcommand{\rf}{{\mathrm{RF}}}
	\newcommand{\rss}{{\mathrm{RSS}}}
	\newcommand{\rls}{{\mathrm{rls}}}
	\newcommand{\rzf}{{\mathrm{rzf}}}
	\newcommand{\C}{\mathbb{C}}	
\begin{acronym}
	\acro{mimo}[MIMO]{multiple-input multiple-output}
	\acro{rf}[RF]{radio frequency}
	\acro{as}[AS]{antenna selection}
	\acro{mse}[MSE]{mean square error}
	\acro{irs}[IRS]{intelligent reflecting surface}
	\acro{sinr}[SINR]{signal to noise and interference ratio}
	\acro{snr}[SNR]{signal to noise ratio}
	\acro{ps}[PS]{parameter server}
	\acro{csi}[CSI]{channel state information}
	\acro{iid}[i.i.d]{independent and identically distributed}
	\acro{dc}[DC]{difference-of-convex-functions}
	\acro{dcp}[DCP]{DC programming}
	\acro{qcqp}[QCQP]{quadratically constrained quadratic programming} 
	\acro{sdp}[SDP]{semi-definite program} 
	\acro{fl}[FL]{federated learning} 
	\acro{otaFL}[OTA-FL]{over-the-air federated learning} 
	\acro{aircomp}[AirComp]{over-the-air computation} 
	\acro{awgn}[AWGN]{additive white Gaussian noise} 
	\acro{mac}[MAC]{multiple access channel} 
	\acro{mp}[MP]{matching pursuit}
	\acro{np}[NP]{nondeterministic polynomial time}
	\acro{lmi}[LMI]{linear matrix inequality}
	\acro{ao}[AO]{alternating optimization}
	\acro{fw}[FW]{Frank-Wolfe}
	\acro{amp}[AMP]{approximate message passing}
	\acro{tdd}[TDD]{time-devision-duplex}
	\acro{pdd}[PDD]{penalty dual decomposition}
	\acro{al}[AL]{augmented Lagrangian}
	\acro{ota-fl}[OTA-FL]{over-the-air federated learning }
	\acro{lars}[LARS]{least angle regression}
	\acro{ist}[IST]{iterative soft thresholding}
	\acro{dl}[DL]{deep learning}
	\acro{zf}[ZF]{zero forcing}
	\acro{fista}[FISTA]{fast iterative soft thresholding algorithm}
	\acro{mmse}[MMSE]{minimum mean squared error }
	\acro{adc}[ADC]{analog-to-digital converter}
	\acro{sgd}[SGD]{stochastic gradient descent }
	\acro{cnn}[CNN]{convolutional neural network}
	\acro{dnn}[DNN]{deep neural network}
	\acro{nn}[NN]{neural network}
	\acro{mmw}[mmW]{millimeter wave}
	\acro{AoA}[AoA]{angle of arrival}
 \acro{Lasso}[Lasso]{least absolute shrinkage and selection operator}
 
\end{acronym}

\title{Joint Antenna Selection and Beamforming for Massive MIMO-enabled Over-the-Air Federated Learning \\
}

\author{
	\IEEEauthorblockN{
		Saba Asaad, \textit{Member IEEE}, 
		Hina Tabassum, \textit{Senior Member},
	   Chongjun Ouyang, \textit{Member IEEE},
	   Ping Wang, \textit{Fellow, IEEE}.
\thanks{Saba Asaad, Hina Tabassum and Ping Wang are with the Department of Electrical Engineering and Computer Science at York University, Toronto, Canada; emails: \texttt{asaads@yorku.ca, hina.tabassum@lassonde.yorku.ca, ping.wang@lassonde.yorku.ca}. Chongjun Ouyang is with the School of Information and Communication Engineering, Beijing University of Posts and Telecommunications, Beijing, China; email: \texttt{dragonaim@bupt.edu.cn}.}
}
}
\IEEEoverridecommandlockouts
\maketitle
\vspace{-1.8cm}
\begin{abstract}
	Over-the-air federated learning (OTA-FL) is an emerging  technique to reduce the computation and communication overload at the \ac{ps} caused by the orthogonal transmissions of the model updates in conventional federated learning (FL).
 This reduction is achieved at the expense of introducing aggregation error that can be efficiently suppressed by means of receive beamforming via large array-antennas. 
 This paper studies OTA-FL in massive multiple-input multiple-output (MIMO) systems by considering a realistic scenario in which the edge server, despite its large antenna array, is restricted in the number of radio frequency (RF)-chains.  For this setting, the beamforming for over-the-air  model aggregation needs to be addressed jointly with antenna selection. This leads to an NP-hard problem due to the combinatorial nature of the optimization. We tackle this problem via two different approaches. In the first approach, we use the penalty dual decomposition (PDD) technique to develop a two-tier algorithm for joint antenna selection and beamforming. The second approach interprets the antenna selection task as a sparse recovery problem and develops two iterative joint algorithms based on the \ac{Lasso} and fast iterative soft-thresholding methods. Convergence and complexity analysis is presented for all the schemes. The numerical investigations depict that the algorithms based on the sparse recovery techniques outperform the PDD-based algorithm, when the number of RF-chains at the edge server is much smaller than its array size. However, as the number of RF-chains increases, the PDD approach starts to be superior. Our simulations further depict that learning performance  with all the antennas being active at the \ac{ps} can be closely tracked by selecting less than $20\%$ of the antennas at the \ac{ps}.
\end{abstract}
\begin{IEEEkeywords}
Over-the-air federated learning, over-the-air computation, distributed machine learning, antenna selection, radio-frequency chain, beamforming, massive MIMO.
\end{IEEEkeywords}

\section{Introduction}

Recently, \ac{fl} has been hailed as a key distributed machine learning technique, capable of training a global model collaboratively by exchanging local model updates across clients\cite{konevcny2016federated}. 
The core idea of \ac{fl} is simple: it suggests the distributed devices in the network to keep private and sensitive information on their local storage and share only the locally-trained models periodically with the parameter server (\ac{ps}). The \ac{ps} aggregates these local models into a global model and broadcasts it to the clients for a new local training round. This process continues until the global model parameters converge \cite{bonawitz2019towards}. 

 
%
%

\ac{fl} was originally proposed by Google to support distributed learning over \textit{wire-line} connected systems \cite{kairouz2021advances}. The communication network was hence modeled by ideal links in the earlier research works, and the communication limits were taken into account as simple restricted budget constraints in the network \cite{jordan2018communication}. Nevertheless, recent advances of intelligent wireless edge-devices and  wireless technologies enable \ac{fl}  in wireless networks \cite{chen2020joint}. Unlike in the wire-line connected networks, 
wireless connections are subject to various sources of imperfection, e.g., fading channels, communication delay and (non-)linear distortion on the transmit signal, which can directly impact the training process \cite{zeng2020federated}. This has led to a rich line of research work aiming to integrate the \ac{fl} framework into wireless systems \cite{amiri2020federated, nazer2007computation,liu2020over, xu2021learning}.

The state-of-the-art techniques  are mainly categorized into two streams: one that treats communication and computation separately via the \textit{transmission-then-aggregation policy} \cite{amiri2020federated}, and the other that addresses both tasks jointly by invoking the idea of \textit{analog function computation} \cite{nazer2007computation,liu2020over}. 
The latter approach, often referred to as \ac{ota-fl}, exploits the superposition applied by the wireless multiple-access channel on the input signals to realize the model aggregation directly over the air \cite{yang2019federated,xu2021learning,bereyhi2022matching}. In this work, we focus on the latter approach that has been shown to perform more efficiently in several use cases  \cite{xu2021learning}.
The key idea in \ac{ota-fl} can be explained in a nutshell as follows: the edge devices transmit their local models with proper scaling simultaneously and synchronously over the same radio resource such that the desired model aggregation is determined by the linear superposition of the uplink channel. This way the devices are not required to communicate over orthogonal resources, i.e., individual bandwidth and/or time slots. \ac{ota-fl} hence offers two key advantages over the conventional approaches based on orthogonalization of the resources; first, it reduces the computational load at the \ac{ps}, since the aggregation is done over the air. Second, it improves the resource efficiency, as it allows for non-orthogonal communication \cite{cao2021optimized}. 

{
The mentioned gains of \ac{ota-fl} comes at the expense of noisy model aggregation, due to undesired interference and noise in the channel. Nevertheless, the aggregation error can be significantly suppressed by efficient receive beamforming using large antenna-arrays at the \ac{ps} which can be realized using massive \ac{mimo} systems.
To this end, in this work, we focus on the highlighting the significance of massive MIMO-enabled OTA-FL in the presence of limited number of RF chains. Specifically, we design efficient  joint beamforming and antenna selection methods to enhance the performance of massive MIMO-enabled OTA-FL. 
}


\subsection{Related Work} 
It is apparent that \ac{ota-fl} scheme describes a trade-off: on one hand, it reduces the communication and computation costs by aggregating the global model directly over the air. On the other hand, unlike noise-free \ac{fl}, over-the-air aggregated model is perturbed as the computation is performed on a channel that experiences fading, multi-user interference, and \ac{awgn}. The calculated global model hence contains aggregation error.
Consequently, the main body of work on  \ac{ota-fl} addressed three key tasks: device scheduling, uplink coordination, i.e., device power control, and beamforming at the \ac{ps}. The goal  is to minimize the aggregation error evaluated via an error metric \cite{yang2019federated,yang2020federated,bereyhi2022matching}. 

The joint design of beamforming and device scheduling policy in \ac{mimo} settings was studied in the initial work \cite{yang2020federated}. A low-complexity design based on the matching pursuit method was later proposed in  \cite{bereyhi2022matching}. A novel unit-modulus  computation framework was proposed in \cite{wang2022edge} to reduce communication delay and implementation costs via analog beamforming. 
Low-complexity algorithms for device coordination in \ac{ota-fl} based on the \ac{mmse} and \ac{zf} methods were proposed in \cite{jahed}. {The proposed algorithms present efficient approximation of the optimal \ac{mmse} and \ac{zf} schemes using a tree-based search algorithm. }
The problem of power control for \ac{ota-fl} was investigated in \cite{cao2019optimal, zhang2021gradient, zhu2019broadband}.
{The study in \cite{cao2019optimal} proposes a joint design for device power control and the receiver beamforming at the \ac{ps}. A gradient-statistics-aware power control scheme was later introduced in \cite{zhang2021gradient} to accelerate the performance of \ac{ota-fl}.
The study in \cite{zhu2019broadband} proposed  truncated power control  for excluding the edge devices that experience deep channel fading. The authors show that the proposed algorithm provides a good balance between learning performance and aggregation error with low-latency.} 

A learning-based resource allocation algorithm for enhancing the transceiver design in \ac{ota-fl} was proposed in \cite{zou2022knowledge}. The algorithm is trained to minimize the aggregation error accumulated over all communication rounds. Considering \ac{mimo} systems with reduced complexity, the study in \cite{zhai2021hybrid} designs a low-complexity hybrid analog-digital beamforming scheme to establish \ac{ota-fl} in \ac{mimo} systems with large passive antenna arrays. The task-oriented design of \ac{irs}-aided \ac{mimo} systems was further investigated in \cite{wang2021federated}.  


\subsection{Motivation and Contributions}
Different from the existing literature, we propose efficient \textit{beamforming and antenna selection} solutions to enhance the performance of \textit{\ac{ota-fl} in massive \ac{mimo} systems}. Massive MIMO provides a promising beamforming gain at the \ac{ps} with considerable suppression of error in over-the-air aggregation step. This gain however relies on employing all the antennas at the server. From the implementation viewpoint, this means that each antenna should be allocated by an individual \ac{rf}-chain, i.e., power amplifier and analog to digital converter. This can pose a high hardware cost and complexity to the system, making it infeasible in practice. 
Subsequently, our motivation follows from the low-cost low-complexity approach of antenna selection that mitigates the cost and the complexity  of massive \ac{mimo}-enabled OTA-FL systems\cite{asaad2018massive}. 
\color{black}
\color{black} To this end, our main contributions are summarized below:
\begin{itemize}
	\item We consider a massive \ac{mimo}-enabled \ac{ota-fl} system in which a \ac{ps} equipped with a large number of antennas collaboratively trains a common model via a large number of edge devices.  Due to limited RF chains in practice, the \ac{ps} performs antenna selection, i.e., selects a subset of its antennas and beamforming jointly. 
	\item To tackle the design  problem, we first propose an algorithm by invoking the recent \ac{pdd} technique developed for optimization problems with unit-modulus and/or selection constraints \cite{shi2020penalty}.  For the target problem, we derive the penalized and dual programming that addresses the joint antenna selection and beamforming via an unconstrained optimization. We then employ the \ac{ao} method to approximate the solution of this problem within a polynomial time.
	\item Utilizing the sparse nature of the design given by joint antenna selection and beamforming, we develop an alternative algorithm based on sparse recovery via the \ac{Lasso}.  
	To this end, we first derive an alternative form of the original design problem  that can be interpreted as the sparse recovery problem. We then invoke the \ac{Lasso} algorithm along with the \ac{ao} method to design a computationally-feasible algorithm for joint antenna selection and beamforming. 
	\item Targeting applications with limited computational capacity, we develop a third class of design algorithms that impose a considerably lower computational complexity on the system. Our proposed algorithm invokes the iterative soft-threshold-based technique to bypass the linear programming task in the \ac{Lasso}-based scheme. Our investigations show that the proposed approach can closely track the two other algorithms at a considerably lower complexity. 
	\item We evaluate the efficiency of the proposed schemes 
	through numerical simulations. We investigate the image classification problem over the FMNIST and CIFAR-10 datasets via \ac{ota-fl} with both the \ac{iid} and non-\ac{iid} data distributions. Our investigations depict that all the three proposed techniques outperform the classical benchmarks such as random selection, greedy antenna selection \cite{gharavi2004fast} and all-antenna selection. In scenarios with small number of \ac{rf}-chains the \ac{Lasso}-based approach can improve the test accuracy by 5 to 10 percent as compared with the \ac{pdd}-based scheme at the expense of higher complexity imposed by algorithm tuning. The iterative soft-threshold-based approach, on the other hand, leads to a slightly higher aggregation error while enjoying a significantly lower computational complexity than the other two techniques.\color{black}
 
	
\end{itemize}

The remainder of this paper is structured as follows. Section II introduces the problem formulation and system model. Section III proposes the \ac{pdd}-based method. Sections IV and V present the \ac{Lasso} and its low-complex version, i.e., FISTA algorithm, respectively. Section VI provides the simulation results and Section VII draws the conclusion.

\subsection{Notation}
Scalars, vectors and matrices are represented with non-bold, bold lower-case, and bold upper-case letters, respectively. The transposed and the transposed conjugate of $\mH$ are denoted by $\mH^{\trp}$ and $\mH^{\her}$, respectively. $\mI_N$ and $\mone_N$ are the $N\times N$ identity and all-one matrices, and $\Vert \bx \Vert_\ell$ and $\Vert \bx \Vert_0$ denote the  $\ell$-norm~and zero norm of $\bx$, respectively. The sets $\setR$ and $\setC$ refer to~the real axis and the complex plane. $\mathcal{CN}\brc{\eta,\sigma^2}$ represents the complex Gaussian distribution with mean $\eta$ and variance $\sigma^2$. $\Ex{.}{}$ denotes the expectation of an input variable. For the sake~of~brevity, $\set{1,\ldots,N}$ is shortened to $\dbc{N}$. Furthermore, $\odot$ represents entry-wise product and $o_N$~is $N$~dimensional vector of all ones. $\text{Sgn}(x)$ is the sign function which returns the sign of the real number~$x$.

\section{System Model and Problem Formulation}
We consider \ac{ota-fl} in a wireless network with $K$ single-antenna edge devices. The devices are coordinated by a multi-antenna \ac{ps} to cooperatively address a common learning task, e.g., training a \ac{nn}, over their distributed local datasets using the federated averaging scheme. The \ac{ps} is equipped with $N$ receive antenna elements. 
Let $\set{1,2, \cdots, K}$ denote the set of edge devices participating in the learning task. Without loss of generality, we assume that the devices are to address a supervised learning task.  Each edge device has a local training dataset, denoted by $\mathcal{D}_k=\set{ \left(\mathbf{u}_{k,i}, v_{k,i}\right)}_{i=1}^{\vert \mathcal{D}_k \vert}$ with $\mathbf{u}_{k,i}$ and $ v_{k,i}$ representing the $i$-th feature vector and its corresponding label, respectively. The \textit{global dataset} is further defined as the union of all local datasets, i.e.,   $\mathcal{D}=\cup_{k=1}^K \mathcal{D}_k$. The ultimate goal in this problem is to train a learning model by minimizing the loss function $F\brc{\bomega}$ determined over the global dataset as
\begin{align}
	F(\boldsymbol{\omega} \vert \maD) \triangleq \sum _{k=1}^K \frac{|\mathcal{D}_k|}{|\mathcal{D}|}  F_k(\boldsymbol{\omega} \vert \mathcal{D}_k),
	\label{Loss}
\end{align}
where $F_k(\boldsymbol{\omega} \vert \mathcal{D}_k)$ is the local loss function determined over the local dataset of device $k$ as
\begin{align}
	F_k(\boldsymbol{\omega} \vert \mathcal{D}_k) \triangleq \frac{1}{|\mathcal{D}_k|} \sum _{ \brc{\buu_k,v_k}\in\maD_k } \ell \brc{\boldsymbol{\omega}\vert \mathbf{u}_{k}, v_{k}},
	\label{loss:loc}
\end{align}
where the sample-wise loss $\ell(\boldsymbol{\omega} \vert \mathbf{u}_{k}, v_{k})$ determines the difference between the label learned by the feature vector $\mathbf{u}_{k}$ and the true label $v_{k}$. 

For model training, the general \ac{sgd} method is considered. Consequently, the global model at the \ac{ps} is updated by averaging the gradients of the local loss functions at model parameters of the last communication round\footnote{For ease of presentation, we assume that in each communication round we update only one epoch. This is however not necessarily the case, as in practice each communication round can include multiple local training epochs.}, i.e., $\boldsymbol{\varsigma}_k \dbc{t} = \nabla F_k\brc{\boldsymbol{\omega} \dbc{t-1}\vert \maD_k}$. The updated global model is then broadcast to the participating edge devices.
In this paper, we assume that the \ac{ps} estimates the \ac{csi} accurately in the uplink channel training phase, such that the estimation error is negligible. The \ac{csi} acquisition is updated at the beginning of each channel coherence  interval that is much larger than a symbol duration. 
In the forthcoming sections, we illustrate each stage of the \ac{ota-fl} scheme in greater detail.

\subsection{Over-the-Air Model Sharing}
At the beginning of each communication round, the \ac{ps} shares the global model $\bomega$ updated in the previous round with the devices. Each device determines its local model denoted with $\boldsymbol{\varsigma}_k= \nabla F_k(\boldsymbol{\omega}\vert  \mathcal{D}_k)$. For sake of brevity, we drop the  index $t$ of the communication  round. We further focus on transmission in a single time-frequency interval in which we transmit a single model parameter $\varsigma_{k,j}$ that is a particular entry of $\boldsymbol{\varsigma}_k$. We hence drop the index $j$ and represent the particular model parameter by $\varsigma_k$.  
To share $\varsigma_k$ with the \ac{ps}, device $k$ applies a channel-dependent scaling coefficient $b_k$. Hence, its transmit signal is given by $x_k =b_k \varsigma_k$.
We assume that prior to transmission, the local models are normalized and centralized properly such that the parameters ${\varsigma_k} $ for $k \in \set{1,\ldots,K}$ are uncorrelated with zero mean and unit variance, i.e., $\Ex{\boldsymbol{\varsigma} \boldsymbol{\varsigma}^\her }{} = \mI_K$ with 
$\boldsymbol{\varsigma}=[\varsigma_1, \ldots, \varsigma_K]^\trp$ \cite{lee2020bayesian}. The scalar $b_k$ is subject to the transmit power constraint $\vert b_k \vert^2 \leq P$ for $P > 0$. 

The devices communicate over a fading Gaussian \ac{mac}. The signal arrived at the \ac{ps} array-antenna is hence given by
\begin{align} \label{eq:received-signal}
	\by_{\rm R}=\sum_{k=1}^{K} \mh_k x_k +\bn_{\rm R}, 
\end{align}
with $\mh_k \in \setC^N$ being the uplink channel coefficient of device $k$ and $\bn_{\rm R} $ denotes the \ac{awgn} process with mean zero and variance $\sigma^2$, i.e., $\bn_{\rm R} \sim \mathcal{CN}\left(0, \sigma^2 \mI_N\right)$. By defining the uplink channel matrix of the network as $\mH=[\mh_1, \ldots, \mh_K],$ the arrived signal in \eqref{eq:received-signal} can be compactly written~as
\hspace{-0.5cm}
\begin{align}
	\by_{\rm R}=\mH \mB \boldsymbol{\varsigma} +\bn_{\rm R}	,
\end{align}
where ${\mathbf B}={\mathrm{Diag}}\{b_1,\ldots,b_K\}$.  

\subsection{Antenna Selection}
The \ac{ps} is equipped with $L < N$ \ac{rf}-chains. It hence selects a subset of $L$ antenna elements in its array antenna to be active during the uplink transmission. The signal received by the \ac{ps} at the digital base-band domain can hence be written as follows:
\begin{align}
	\by = \mathbf{S} \by_{\rm R} = \mathbf{S} \mH \mB \boldsymbol{\varsigma} + \bn,
\end{align}
where $\bn=\mS \bn_{\rm R}$ captures the \ac{awgn} process on the active antennas. The matrix $\mS$ represents the switching network and is defined as $\mS={\mathrm{Diag}}\{\mathbf s\}$ with $\mathbf{s} \in \set{0,1}^N$ being the antenna selection vector whose entry $n$ for $n\in\dbc{N}$ reads $s_n = 1$ if antenna $n$ is set active, and $s_n = 0$ otherwise. As a result, we can write 
$\mathrm{tr} \set{\mS} = \bo_N^\trp \bss =   \sum_{n=1}^{N} s_n = L$. 

\subsection{Over-the-Air Model Aggregation}
The ultimate goal of the \ac{ps} is to combine the local models according to a predefined strategy specified by the \ac{fl} scheme. This means that in each uplink transmission, the \ac{ps} aims to determine the \textit{aggregated model} $\theta =\sum_{k=1}^{K} \phi_k \varsigma_k,$
where $\theta=\nabla F(\boldsymbol{\omega}) $ and  ${\phi_k}$ is a predefined weighting coefficients, i.e., ${\phi_k}=\frac{|\mathcal{D}_k|}{|\mathcal{D}|}$. To this end, the \ac{ps} invokes over-the-air computation approach and estimates the aggregated model directly from the received signal via linear beamforming, i.e., it finds the estimate of $\theta$ as $\hat{\theta} = {\bmm}^\her \by$,
for some linear receiver ${\bmm}\in\setC^N$.   Then, the \ac{ps} updates the model parameter vector with a proper step-length $\gamma$ as $\boldsymbol{\omega}^{t+1}=\boldsymbol{\omega}^t-\gamma \hat{\theta}$.

Since communication is carried out over a noisy network, the estimated global  model $	\hat{\mathbf{\theta}}$ contains some error compared to the desired global model $\theta$. This error is called \textit{aggregation error} and can be quantified via various distortion metrics. In the sequel, we invoke the information-theoretic notion of \ac{mse} to quantify the aggregation error. Hence, the error is given~by
	\begin{align}
	\epsilon \brc{ \bmm, \bss, \mB} &= \Ex{\vert {\theta-\hat{\theta} } \vert^2}{}
	\stackrel{\star}{=} \Vert {\mathbf m}^\her \mathbf S\mH\mB -\boldsymbol{\phi}^\her\Vert^2+\sigma^2 \Vert {\mathbf m}^\her \mathbf S\Vert^2,
	\label{eq:eps}
\end{align}
where $\boldsymbol{\phi}= \dbc{\phi_1, \ldots, \phi_K}$ and $\star$ is derived using the statistics of the local models and \ac{awgn}. As indicated in the formulation, the aggregation error is in general a function of linear receiver $\bmm$, switching matrix $\mS = \Diag{\bss}$ and the transmission scaling factors $\mB$. 
\subsection{Joint Design Problem}
The main design problem is hence to find $\bmm$, $\mS$ and $\mB$, such that the aggregation error is minimized subject to the edge transmit power constraints. 
This main design problem of this setting is mathematically formulated as 
	\begin{align}\label{P_1}
		\minSub{
			\epsilon \brc{ \bmm, \mS, \mB}
		}{
			\bmm, \bss, \mB 
		}{
			C_1: \bss \in \set{0,1}^N,\\
			C_2: {\bo_N}^\trp {\bss} = L,\\
			C_3: \abs{b_k}^2\leq P \text{ for } k\in[K].
		}{\maP_1}
	\end{align}
This optimization problem is in general hard to be addressed due to two main reasons: firstly, the objective function is~non-convex, and secondly, the antenna selection constraint makes the problem  NP-hard integer programming problem.  
As a result, finding the exact solution of \ref{P_1} is practically infeasible, and sub-optimal approaches for efficient approximation of the optimal design are required. In the following sections, we develop three computationally-feasible approaches based on the \ac{pdd}, \ac{Lasso} and soft thresholding methods. Both \ac{Lasso}-type and soft-thresholding algorithms work based on sparse recovery techniques and perform well at high sparsity settings, i.e., extremely few number of antennas at the \ac{ps} are active. However, with increasing the number of selected antennas at the \ac{ps} the \ac{pdd} approach performs superior.

\section{Algorithm I: A PDD-Based Method}
We start with a \ac{pdd}-based algorithm. This algorithm approximates the solution of \ref{P_1} by a two-tier iterative scheme that is derived by invoking the \ac{pdd} method. The derivations follow three major steps:
\begin{enumerate}
	\item We invoke the \ac{pdd} approach to tackle the discrete nature of the antenna selection constraint, and transform \ref{P_1} to a variational form that can be effectively relaxed.
	\item The variational problem is converted to a penalized form whose penalty is proportional to the selection constraints.
	\item Using   \ac{ao}, we develop a two-tier iterative scheme to find an approximated solution of the penalized problem.
\end{enumerate}
In the sequel, we go through each step in greater detail.

 \subsection{Finding Variational Form}
To drop the binary constraint of antenna selection, i.e., $C_2$ in \ref{P_1}, \ac{pdd} suggests introducing the auxiliary vector 
$\overline{\bss} = \dbc{ \overline{s}_1,\ldots,\overline{s}_N}^\trp$
to the optimization and let its entries satisfy the following two constraints: 
\begin{inparaenum}
	\item ${\overline s}_n=s_n$, and 
	\item $s_n\left(1-{\overline s}_n\right)=0$
\end{inparaenum}
for $n\in\dbc{N}$. We thus can equivalently find the solution of \ref{P_1} by solving the following optimization
\begin{align}\label{P_2}
	\minSub{
		\epsilon \brc{ \bmm, \bss, \mB}
	}{
		\bss, \overline{\bss}, \mB, \bmm
	}{
		C_1: {\overline s}_n=s_n \text{ and } s_n \brc{1-{\overline s}_n } = 0 \text{ for } n\in[N],\\
		C_2: \bo_N^\trp \bss = L,\\
		C_3: \abs{b_k}^2 \leq P \text{ for } k\in[K].
	}{\maP_2}
\end{align}
This follows directly from the equivalency of the constraint $C_1$ in the problems \ref{P_1} and \ref{P_2}. This equivalent form enables us to relax the antenna selection constraint more effectively.

\subsection{Deriving a Penalized Form}
\label{sec:Penalty}


As the second step, \ac{pdd} suggests transforming \ref{P_2} into a penalized form whose penalty includes all selection constraints, i.e., constraints $C_1$ and $C_2$ in \ref{P_2}. This penalized form is given by the \ac{al} dual form of \ref{P_2} that is 
\begin{align}\label{P_3}
	\minSub{
		\epsilon \brc{ \bmm, \bss, \mB}  + f_\rho \brc{\bss,\overline{\bss} \vert  \blambda} + h_\rho \brc{\bss,\overline{\bss} \vert  \bmu} +  g_\rho \brc{\bss,\overline{\bss} \vert \beta},
	}{
		\bss, \overline{\bss}, \mB, \bmm
	}{
	\abs{b_k}^2 \leq P \text{ for } k\in[K],
	}{\maP_3}
\end{align}
where the penalty terms $f_\rho \brc{\bss,\overline{\bss} \vert \blambda}$, $h_\rho \brc{\bss,\overline{\bss} \vert  \bmu} $ and  $g_\rho \brc{\bss,\overline{\bss} \vert \beta} $ are associated to the constraints $C_1$ and $C_2$ in \ref{P_2}, respectively, and are defined as follows:
\begin{subequations}
\begin{align} 
	f_\rho \brc{\bss,\overline{\bss} \vert \blambda} & = \frac{1}{2\rho} 
	\sum_{n=1}^{N} 
	\dbc{ \left(s_n-{\overline s}_n+\rho\lambda_n\right)^2 - \rho^2 \lambda_n^2 },  \label{eq:frho}\\
	h_\rho \brc{\bss,\overline{\bss} \vert \bmu} & = \frac{1}{2\rho} 
	\sum_{n=1}^{N} 
	\dbc{ \left(s_n\left(1-{\overline s}_n\right)+\rho\mu_n\right)^2 - \rho^2\mu_n^2 },  \label{eq:hrho}\\
	g_\rho \brc{\bss,\overline{\bss} \vert \beta} &=\frac{1}{2\rho}  \dbc{
	\left(\bo_N^{\mathsf T}{\mathbf s}-L+\rho\beta\right)^2- \rho^2\beta^2}. \label{eq:grho}
\end{align}
\end{subequations}
In these terms, $\rho>0$ is referred to as the \textit{penalty parameter}. The vectors $\bmu = \dbc{\mu_1,\ldots,\mu_N}^\trp$ and  $\blambda = \dbc{\lambda_1,\ldots,\lambda_N}^\trp$
with $\mu_n, \lambda_n \in \setR$ in the first and second penalty, and the scalar $\beta \in \setR$ in the third one being the \textit{dual variables}.

In principle, the solution of \eqref{P_2} is found by solving the dual problem in \eqref{P_3} for an arbitrary $\rho > 0$ and then taking the limit, when $\rho$ goes to zero \cite{shi2020penalty}. \ac{pdd} suggests to approximate this limit by forming an embedded double loop structure \cite{shi2020penalty}: Starting with an initial penalty parameter and dual variables, the inner loop utilizes \ac{ao} to solve \ref{P_3}. The solution is then fixed for the outer loop that iteratively updates either the dual variables or the penalty parameter depending on the constraint violation. The analyses in \cite{shi2020penalty} show the convergence of this algorithm to a KKT point. In the next subsection, we derive the update rules for the inner and outer loops.

%

\subsection{Inner and Outer Loops}
\label{sec:loops}
The algorithm iterates in a two-step manner. This means that in each iteration of the outer loop, the inner loop runs multiple iterations. We start with the inner loop assuming that the outer loop is at iteration $t$. We denote the penalty parameter of this iteration with $\rho^{(t)}$ and the dual variables with $\beta^{(t)}$, $\bmu^{(t)}$ and $\blambda^{(t)}$. The inner loop treats these variables as fixed and approximates the solution of \ref{P_3} via \ac{ao}. 
In other words, for the given penalty parameter and dual variables of iteration $t$, the inner loop minimizes the objective of \ref{P_3} marginally over $\bmm$, $\mB$, $\overline{\bss}$ and ${\bss}$ then iterates among the marginal solutions until it~converges. 
The marginal problems of \ref{P_3} describe standard quadratic optimizations whose solutions can be found within polynomial time. In the sequel, we discuss these marginal problems and derive the update rules of the inner loop:
\begin{enumerate}
	\item To update the \textit{linear receiver}, we solve \ref{P_3} with respect to $\bmm$ while treating all other variables fixed. Problem \ref{P_3} in this case deduces to the following unconstrained optimization
	\begin{align}\label{Optimization_Filtering}
		\min_{\bmm\in\setC^{N}} \frac{1}{2} \bmm^\her \mA \bmm -  \Re \set{ \bmm^\her \baa },
	\end{align}
where $\baa =\mS \mH \mB {\bm\phi}$ and $\mA$ is defined as
$\mA = \mS \mH \mB \mB^\her \mH^\her \mS^\her + \sigma^2 \mS \mS^\her.$
It is readily seen that $\mA\succeq 0 $. The problem in \eqref{Optimization_Filtering} features an unconstrained quadratic program that can be~solved via standard interior point methods \cite{boyd2004convex}. The solution is not generally found in a closed-form, due to the fact that $\det \mA$ can be zero\footnote{This follows from the fact that $\mS$ is a diagonal matrix whose entries are zero and one.}. The problem is however readily solved via the command \texttt{quadprog} in MATLAB by rewriting it in an augmented form.
\item The \textit{transmit scalars} are updated by solving \ref{P_3} for ${\mathbf B}$.  
This marginal problem breaks into $K$ parallel sub-problems with the $k$-th one being
\begin{align}\label{Optimization_Power}
		\min_{b_k \in \setC }  \frac{\delta_k}{2} \abs{b_k}^2 - \Re \set{ b_k^{*} \varepsilon_k }  \; \subto \abs{b_k}^2 \leq P,
\end{align}
where $\delta_k={\mathbf h}_k^{\mathsf H}{\mathbf S}^{\mathsf H}{\mathbf m}{\mathbf m}^{\mathsf H}{\mathbf S}{\mathbf h}_k \geq 0$ and $\varepsilon_k \in{\mathbbmss C}$ is the $k$-th diagonal entry of ${\mathbf H}^{\mathsf H}{\mathbf S}^{\mathsf H}{\mathbf m}{\bm\phi}^{\mathsf H}$. This is a convex quadratic optimization with a quadratic constraint. By introducing the Lagrange multiplier $\lambda$, we define the dual Lagrangian function associated as 
	\begin{align}\label{QCQP_KKT}
		{\mathcal L} \brc{b_k,\lambda} = \frac{\delta_k}{2} \abs{b_k}^2- \Re\set{ b_k^{*}\varepsilon_k } + \frac{\lambda}{2} \brc{ \abs{b_k}^2-P}.
	\end{align}
The first-order optimality condition for \eqref{QCQP_KKT} with respect to $b_k$ yields
$\delta_k b_k-\varepsilon_k+\lambda b_k=0$.
It hence follows that $b_k^\star = \bar{b}_k \brc{\lambda^\star}$, where $\bar{b}_k \brc{\lambda} = \frac{\varepsilon_k}{\delta_k+\lambda}.$
The optimal Lagrange multiplier $\lambda^\star$ is further found, such that the complementary slackness condition of the power constraint is satisfied. In particular, if $\abs{\bar{b}_k\brc{0}}^2\leq P$, then, we have $\lambda^\star=0$ and thus $b_k^{\star}=\bar{b}_k \brc{0}$. Otherwise, we have $\abs{\bar{b}_k \brc{\lambda^\star}}^2=P$ and hence $b_k^{\star}=\sqrt{P} \exp\set{\mathrm{j} \angle\varepsilon_k}$.~The~update rule is therefore given by
{
	\begin{align}\label{eq:optimalb}
		b_k^{\star}=
		\begin{cases}
			\dfrac{\displaystyle \varepsilon_k}{ \displaystyle \delta_k} &\abs{\dfrac{ \displaystyle\varepsilon_k}{\displaystyle \delta_k}}\leq P \vspace*{2mm}\\
			\dfrac{\displaystyle \varepsilon_k}{ \displaystyle \abs{\varepsilon_k} } \sqrt{P} &{\text{otherwise}}
		\end{cases}.
	\end{align}
}
\item To update the \textit{auxiliary vector} $\overline{\bss}$, we solve the marginal optimization over $\overline{\bss}$ while fixing the remaining variables. This is an unconstrained quadratic optimization. Similar to the second marginal problem, this optimization breaks into $N$ parallel sub-problems with the $n$-th sub-problem being expressed as follows:
\begin{equation}\label{Problem_Auxiliary}
	\min_{\overline{s}_n\in{\mathbbmss R}} \frac{\bar{A}_n}{2}  \overline{s}_n^2-  \bar{C}_n \overline{s}_n,
\end{equation}
where $\bar{A}_n=1+s_n^2$, and 
$\bar{C}_n=s_n^2+ \brc{1 + \rho^{(t)} \mu_n^{(t)} } s_n + \rho^{(t)} \lambda^{(t)}_n.$
The optimization problem in \eqref{Problem_Auxiliary} is a scalar quadratic program whose optimal solution is given by
${\overline{s}}_n^{\star}= \frac{\bar{C}_n}{\bar{A}_n}.$
\item The \textit{selection vector} $\bss$ is updated by solving \ref{P_3} marginally in terms of $\bss$. Similar to the marginal problem of the auxiliary vector $\overline{\bss}$,  this marginal problem breaks into $N$ parallel sub-problems, where the $n$-th sub-problem is given by
\begin{equation}\label{Beam_Selection_Matrix_Opt}
	\min_{{s}_n\in\setR} \frac{A_n}{2} {s}_n^2- C_n {s}_n,
\end{equation}
where $A_n= \dbc{\mQ}_{n,n}$ and 
$C_n= {u_n} - \sum_{n' = 1, n'\neq n}^N  \Re \set{  \dbc{\mQ}_{n',n} s_{n'}}.$
Here, the matrix ${\mathbf Q} \in \setC^{N\times N}$ is defined as follows:
\begin{align}
\mQ = 
	 \mM^\her {\mathbf H}{\mathbf B}{\mathbf B}^{\mathsf H}{\mathbf H}^{\mathsf H} \mM +\sigma^2\mM^\her \mM + \frac{1}{2\rho^{(t)}}  \left(\mone_N +{\mathbf I}_N\right)
	+
	 \frac{1}{2\rho^{(t)}}  \brc{\mI_N - \bar{\mS}}^2,
\end{align}
for $\mM = \mathrm{Diag} \set{\bmm}$ and  $\bar{\mS} = \mathrm{Diag} \set{\bar{\bss}}$. The scalar $u_n$ further denotes the $n$-th entry of $\buu$ being defined as follows:
	\begin{align}
		\buu &=  \Re \set{{\mathbf q}}
		- \frac{1}{2 \rho^{(t)} }
		\dbc{
		\left(\rho^{(t)} \beta^{(t)} - L \right)\bo_N + \left(\rho^{(t)} {\bmu^{(t)}} - \overline{\mathbf s}\right) 
		+ \rho^{(t)} \left({\bo_N} - \overline{\mathbf s} \right) \odot {\blambda^{(t)} }
	},
	\end{align}
where the vector $\mathbf q$ that is given by
	\begin{align}
	{\mathbf q} = \mathrm{diag} \set{{\mathbf H}{\mathbf B}{\bm\phi}{\mathbf m}^{\mathsf H} }.\label{eq:q}
\end{align}
It is worth noting that ${\mathbf Q}$ is a symmetric positive definite matrix; hence its diagonal entries are positive. This means that the optimization problem \eqref{Beam_Selection_Matrix_Opt} is a standard convex problem whose solution is given by
$	{{s}}_n^{\star}=\frac{{C}_n}{{A}_n}.$
\end{enumerate}



The inner loop alternates among the above four steps until it converges to a joint solution $\brc{\bmm^\star,\mB^\star, \bar{\bss}^\star,\bss^\star}$ for the given penalty parameter and dual variables in outer-loop iteration $t$. 

We next focus on the outer loop: getting back to the connection between the problems \ref{P_3} and \ref{P_2}, the direct approach for design of the outer loop is to update the dual variables considering the dual maximization problem. Invoking the steepest descent method, this is addressed by\footnote{See also \cite[Table \uppercase\expandafter{\romannumeral1}, Line 4]{shi2020penalty}.}
\begin{subequations}\label{Dual_Variable_Update}
	\begin{align}
		\beta^{(t+1)}&=\beta^{(t)}+\frac{\bo_N^{\mathsf T}{\mathbf s}-L}{2 \rho^{(t)}},\\
		\lambda_n^{(t+1)}&=\lambda_n^{(t)}+\frac{ {\overline s}_n-s_n }{2 \rho^{(t)}},\\
		\mu_n^{(t+1)}&=\mu_n^{(t)}+\frac{s_n\left(1-{\overline s}_n\right) }{2 \rho^{(t)}},
	\end{align}
\end{subequations}
for $n \in \dbc{N}$. In principle, the outer loop should iterate till both the primal and dual variables converge. In this case, we need a third-level tier to take care of the limit of $\rho$ going to zero, i.e., another loop that gradually sends the penalty parameter to zero with a certain step size; see the discussions in Section~\ref{sec:Penalty} where we illustrated the connection between \eqref{P_2} and \eqref{P_3}. \ac{pdd} invokes a trick to merge these two outer tiers, i.e., the loop that updates the dual variables and the one that takes the limit of penalty parameter going to zero, into a single loop by alternatively updating both the dual variables and the penalty parameter. To this end, at the beginning of each outer iteration, we evaluate the constraint violation of the converging solution of the inner loop by determining the \textit{violation metric} $h$ that is defined as
\begin{align}\label{Constraint_Vio}
	h = \max_{ n \in \dbc{N} }\set{
		|\bo_N^{\mathsf T}{\mathbf s}-L|,
		\left|{\overline s}_n-s_n\right|,\left|s_n\left(1-{\overline s}_n\right)\right|
	}.
\end{align}
This metric determines the maximum absolute deviation of the solution found by the inner loop from the equality constraints $C_1$ and $C_2$ in the variational optimization problem \ref{P_2}. We then compare the violation metric with a threshold $h_{\rm Th}$:
\begin{itemize}
	\item If the violation metric is smaller than the threshold, i.e., $h < h_{\rm Th}$, we consider the primal solution to give a good approximation and update the dual problem via \eqref{Dual_Variable_Update}. In this case, the penalty parameter is kept unchanged.
	\item If the violation metric is larger than the threshold, we skip the update of the dual variables and reduce the penalty parameter by multiplying it with a factor $\kappa < 1$, i.e., $\rho^{{(t+1)}} = \kappa \rho^{(t)}$.
\end{itemize}
At the end of the iteration, we further set the threshold to be the reduced version of $h$ by the same factor $\kappa$, so that the violation metric converges to zero as the outer loop converges, i.e., we set $h_{\rm Th} \leftarrow \kappa h$. The two-tier \ac{pdd}-based algorithm is summarized in Algorithm \ref{Algorithm2}. Following the analysis of \cite{shi2020penalty}, this algorithm is guaranteed to converge to a set of stationary solutions to the problem \eqref{P_1}. 

\begin{algorithm}[t]
	\caption{Algorithm I: \ac{pdd}-based Algorithm}
	\label{Algorithm2}
	\begin{algorithmic}[1]
		\STATE \textbf{Initialization:} Set the outer iteration index to $t=0$. Set the primary variables ${\mathbf m}$, ${\mathbf B}$, ${\mathbf{s}}$ and $\overline{\mathbf{s}}$ and the dual variables $\beta^{(0)}$, ${\bm\lambda}^{(0)}$, and ${\bm\mu}^{(0)}$ to some initial values. Set the violation metric threshold to $h_{\rm Th}^{(0)}$ and the penalty parameter to $\rho^{(0)}>0$. Choose a scaling factor $0 < \kappa < 1$.
		\REPEAT[\texttt{outer loop}]
		\REPEAT[\texttt{inner loop}]
		\STATE Update ${\mathbf m}$, ${\mathbf B}$, $\overline{\mathbf{s}}$  and ${\mathbf s}$ via \ac{ao} scheme described in Steps 1--4
		\UNTIL{convergence}
		\STATE Evaluate the constraint violation metric $h$ from \eqref{Constraint_Vio}
		\IF{$h < h_{\rm Th}^{(t)}$}
		\STATE Update the dual variable via \eqref{Dual_Variable_Update}
		\STATE Set $\rho^{(t+1)} = \rho^{(t)}$
		\ELSE
		\STATE Update the penalty parameter as $\rho^{(t+1)} = \kappa\rho^{(t)}$
		\STATE Set $\beta^{(t+1)} = \beta^{(t)}$, $\blambda^{(t+1)} = \blambda^{(t)}$ and $\bmu^{(t+1)} = \bmu^{(t)}$
		\ENDIF
		\STATE Set $h_{\rm Th}^{(t+1)}=\kappa h$ and $t \leftarrow t+1$
		\UNTIL{convergence}
	\end{algorithmic}
\end{algorithm}
\section{Algorithm II: A \ac{Lasso}-Type Algorithm}
According to \cite{bereyhi2017asymptotics,bereyhi2018stepwise,bereyhi2019glse}, antenna selection problem can be naturally formulated as a sparse recovery problem. In this section, we invoke this viewpoint and develop a low-complexity algorithm based on the well-known method of \ac{Lasso} for sparse recovery and regression \cite{tibshirani1996regression}.
\subsection{Selection via Sparse Regression}
To start with the \ac{Lasso}-type algorithm, we get back to the primal optimization problem \ref{P_1}. In this problem, we have two constraints: 
\begin{enumerate}
	\item the transmit power of each edge-device must satisfy the power constraint, and
	\item the number of selected antennas at the \ac{ps} should be $L$.
\end{enumerate}
The second constraint can be written in terms of the $\ell_0$-norm of the selection vector $\bss$ as $\norm{\bss}_0 = L$. We hence can rewrite the problem \ref{P_1} as
	\begin{align}\label{P_1_prime}
	\minSub{
		\epsilon \brc{ \bmm, \bss, \mB}
	}{
		\bmm, \bss, \mB 
	}{
		C_1: \bss \in \set{0,1}^N,\\
		C_2: \norm{\bss}_0 = L,\\
		C_3: \abs{b_k}^2\leq P \text{ for } k\in[K].
	}{\bar{\maP}_1}
\end{align}

Intuitively, the equality constraint $C_2$ can be further replaced by an inequality: in fact setting $\norm{\bss}_0 \leq L$ means that some \ac{rf}-chains are set off at the \ac{ps}, which can only degrade the performance of our system. In other words, we can intuitively claim that the solution to \ref{P_1_prime} is given by solving 
\begin{align}\label{P_4}
	\minSub{
		\epsilon \brc{ \bmm, \bss, \mB}
	}{
		\bmm, \bss, \mB 
	}{
		C_1: \bss \in \set{0,1}^N,\\
		C_2: \norm{\bss}_0 \leq L,\\
		C_3: \abs{b_k}^2\leq P \text{ for } k\in[K].
	}{{\maP}_4}
\end{align}
This claim is shown to be valid in the following theorem.

\begin{theorem}
	\label{theorem:P4}
	Let $\bss^\star$ be a solution of the optimization problem \ref{P_1_prime}. Define the maximum zero-forcing error of $\bss^\star$ to be
$E_{\rm zf} =  \Vert {\mathbf m}^\her \Diag{\bss^\star} \mH\mB -\boldsymbol{\phi}^\her\Vert_{\infty}$.
If $E_{\rm zf} \neq 0$, then $\bss^\star$ is also a solution of the optimization problem \ref{P_4}.
\end{theorem}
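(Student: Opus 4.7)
The plan is to show that the optimal values of $\bar{\maP}_1$ and $\maP_4$ coincide, from which the theorem follows immediately: $\bss^\star$ is feasible for $\maP_4$ since the feasible set of $\bar{\maP}_1$ is contained in that of $\maP_4$, and once the two optima agree, $\bss^\star$ attains the optimum of $\maP_4$ as well. The inclusion of feasible sets gives the easy direction $\mathrm{opt}(\maP_4)\le\mathrm{opt}(\bar{\maP}_1)$ for free, so the core of the argument is the reverse inequality.

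For the reverse inequality, I would argue by contradiction through a padding construction. Suppose, for contradiction, that some triple $(\bmm',\bss',\mB')$ feasible for $\maP_4$ satisfies $\epsilon(\bmm',\bss',\mB')<\epsilon(\bmm^\star,\bss^\star,\mB^\star)$. If $\|\bss'\|_0=L$, this triple is already feasible for $\bar{\maP}_1$ and directly contradicts the optimality of $\bss^\star$. Otherwise $L':=\|\bss'\|_0<L$, and I would build a padded triple $(\tilde{\bmm},\tilde{\bss},\mB')$ by: first, choosing $L-L'$ indices from $\{n:s'_n=0\}$ (possible because $N-L'\ge L-L'$) and setting $\tilde{s}_n=1$ there while $\tilde{s}_n=s'_n$ elsewhere, so that $\|\tilde{\bss}\|_0=L$; second, setting $\tilde{m}_n=0$ at the newly activated positions and $\tilde{m}_n=m'_n$ elsewhere. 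The entrywise identity $(\bmm^\her\mS)_n=m_n^{*}s_n$ then gives $\tilde{\bmm}^\her\tilde{\mS}=(\bmm')^\her\mS'$ coordinate by coordinate, so both terms of the objective in \eqref{eq:eps} are unchanged and $\epsilon(\tilde{\bmm},\tilde{\bss},\mB')=\epsilon(\bmm',\bss',\mB')$. Since the power constraints on $\mB'$ are untouched and $\|\tilde{\bss}\|_0=L$, the padded triple is feasible for $\bar{\maP}_1$ yet achieves a strictly smaller objective than $\bss^\star$, a contradiction.

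The main obstacle is primarily conceptual rather than computational: one must recognize that enlarging the active-antenna support while zeroing the corresponding receiver coefficients keeps $\bmm^\her\mS$ coordinatewise invariant, thereby preserving both the bias and the noise terms of the aggregation error. The hypothesis $E_{\rm zf}\ne 0$ appears to act as a non-degeneracy assumption that rules out the edge case in which a selection with $\|\bss\|_0<L$ could achieve zero bias together with a strictly smaller noise term; the padding construction above already handles such a case, so I expect the authors' proof to follow similar lines, possibly invoking the hypothesis to streamline a direct structural comparison at the optimum rather than the contrapositive route outlined here.
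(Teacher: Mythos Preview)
Your proposal is correct and in fact proves the theorem without ever using the hypothesis $E_{\rm zf}\neq 0$; the padding construction $\tilde m_n=0$ on newly activated indices yields $\tilde{\bmm}^\her\tilde{\mS}=(\bmm')^\her\mS'$ entrywise, so both terms of \eqref{eq:eps} are preserved and $\mathrm{opt}(\maP_4)=\mathrm{opt}(\bar{\maP}_1)$ unconditionally.

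The paper takes a genuinely different route. Instead of padding with zero receiver weights, it fixes some $\bss$ with $\norm{\bss}_0<L$, activates one extra index $\ell$, and \emph{optimizes} over the new scalar $m_\ell$. Under the non-degeneracy condition $\norm{\bar{\bmm}^\her\mS\mH\mB-\boldsymbol{\phi}^\her}_\infty\neq 0$ the quadratic in $m_\ell$ has its minimizer at $m_\ell^\star\neq 0$, so the error \emph{strictly} decreases; hence no minimizer of $\maP_4$ can have $\norm{\bss}_0<L$. In other words, the paper proves the stronger structural statement that, under the zero-forcing hypothesis, \emph{every} optimal solution of $\maP_4$ uses exactly $L$ antennas, whereas your argument shows only that the optimal \emph{values} coincide (leaving open, in principle, additional $\maP_4$-optima with fewer active antennas). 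For the theorem as stated your route is shorter and more elementary, and your closing remark that the hypothesis is not actually needed for this direction is on the mark; the paper's approach buys the extra conclusion that the relaxation never introduces spurious low-support optima.
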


\begin{proof}
	The proof follows from the fact that with a non-zero zero-forcing error, there always exists a non-zero receiving gain for a non-selected antenna which reduces the aggregation error. This concludes that there does not exist a selection vector $\bss$ with $\norm{\bss}_0 < L$ that leads to a smaller error, and hence \ref{P_4} recovers the same selection vector. The details of the proof are given in Appendix \ref{Proof_th_P4}.
\end{proof}


The optimization problem in \ref{P_4} describes a sparse regression problem in which the regression error is described via the aggregation error. Following the standard regularization technique, we find the alternative regularized form of \ref{P_4} as follows:
\begin{align}\label{P_5}
	\minSub{
		\epsilon \brc{ \bmm, \bss, \mB} + \eta \norm{\bss}_0
	}{
		\bmm, \bss, \mB 
	}{
		C_1: \bss \in \set{0,1}^N,\\
		C_2: \abs{b_k}^2\leq P \text{ for } k\in[K],
	}{{\maP}_5}
\end{align}
for some regularizer $\eta > 0$. It is readily shown that there exists some $\eta_0$, such that \ref{P_5} recovers the solution of \ref{P_4} at $\eta = \eta_0$. 

\subsection{Sparse Regression via Box-Lasso}
The problem \ref{P_5} in its primitive form is a classical computational task arising in sparse recovery which leads to the \ac{np}-hard problem of integer programming. Various sub-optimal algorithms have been proposed in the literature to address this problem in a tractable manner \cite{foucart13}. The most well-known approach is the \ac{Lasso} technique which approximates the solution of the sparse regression problem via the convex minimization $\ell_1$-norm relaxation \cite{tibshirani1996regression}. In this approach, the $\ell_0$-norm is relaxed by the convex $\ell_1$-norm. Analyses in \cite{tibshirani1996regression} have shown that this relaxation guarantees the recovery of a sparse solution.

The basic form of \ac{Lasso} deals with an unconstrained problem, i.e., $s_n$ are on the real axis. For the constrained case, the method is further extended to the \textit{box-\ac{Lasso}}; see \cite{bereyhi2021detection} and the references therein for various forms of box-\ac{Lasso} and its applications. In this scheme, the constraint set is further relaxed to a convex set, referred to as \textit{box}, which includes the non-convex feasible set. Considering the constrained nature of \ref{P_5}, we use the box-\ac{Lasso} method to address the underlying sparse regression problem in \ref{P_5}. To this end, we relax the integer set $\set{0,1}^N$ to the convex box $\dbc{0,1}^N$. The box-\ac{Lasso}-based relaxed problem is hence given by
\begin{align}\label{P_6}
	\minSub{
		\epsilon \brc{ \bmm, \bss, \mB} + \eta \norm{\bss}_1
	}{
		\bmm, \bss, \mB 
	}{
		C_1: \bss \in \dbc{0,1}^N,\\
		C_2: \abs{b_k}^2\leq P \text{ for } k\in[K],
	}{{\maP}_6}
\end{align}
The optimization in \ref{P_6} still describes a non-convex joint optimization problem. Its advantage is however that its marginal optimizations, i.e., optimization over a single variable while treating the others as constants, are convex. This allows us to leverage the \ac{ao} method and obtain an efficient sub-optimal solution to the problem.

\begin{remark}
	It is worth mentioning that the regularizer $\eta$ in \ref{P_6} is a free variable which is tuned in practice, such that the design performance is optimized. We discuss this tuning task later in greater detail throughout the numerical investigations.
\end{remark}

\subsection{Iterative Algorithm Based on AO}
We next invoke the \ac{ao} method to develop an iterative algorithm for approximating the solution of \ref{P_6}. For this problem, the \ac{ao} method alternates among three marginal problems, i.e., marginal problems with respect to the receiver, to the transmission scalars  and to the selection vector. In the sequel, we derive the solution of each marginal problem:
\begin{enumerate}
	\item The marginal problem with respect to the \textit{linear receiver} $\bmm$ finds the same form as the one given in Step 1 of Section~\ref{sec:loops}. The solution is hence given by solving the quadratic optimization problem in \eqref{Optimization_Filtering}.
	\item The optimization over \textit{transmit scalars} breaks into $K$ parallel scalar optimizations with the $k$-th one being described in \eqref{Optimization_Power} in Step 2 of~Section~\ref{sec:loops}.~The solution is hence~given~by~\eqref{eq:optimalb}.
	\item The last marginal problem is to optimize the objective with respect to the \textit{selection vector} $\bss$, while treating ${\mathbf B}$ and ${\mathbf m}$ as constants. In this case, the corresponding problem is given by
	\begin{align}\label{Lasso:1}
		{\mathbf s}^\star=\argmin_{\bss\in\dbc{0,1}^N } \epsilon \brc{ \bmm, \bss, \mB} + \eta \norm{\bss}_1. 
	\end{align}
By substituting the definition of the aggregation error into \eqref{Lasso:1}, and after some derivations, we conclude that the solution is given by solving the following box-Lasso problem
\begin{align}\label{Lasso:2}
	{\mathbf s}^\star=\argmin_{\bss\in\dbc{0,1}^N } \bss^\her \mQ_{\rm lasso} \bss - 2 \Re\set{\bss^\her \mA_{\rm lasso} \boldsymbol{\phi}} + \eta \norm{\bss}_1, 
\end{align}
where $ \mA_{\rm lasso} = \mM \mH \mB $ and $ \mQ_{\rm lasso} $ is defined as
\begin{align}\label{eq:Q}
\mQ_{\rm lasso} = \mM \brc{\mH \mB \mB^\her \mH^\her} \mM^\her,
\end{align}
for $\mM = \Diag{\bmm}$. This is a convex optimization problem and can be solved directly by means of classical convex programming toolboxes, e.g., CVX in MATLAB or CVXPY for programming in Python. Alternatively, one can develop an iterative algorithm to solve \eqref{Lasso:2} with minimal computational complexity; see \cite[Chapter 15]{foucart13} and \cite{arnold2016efficient} and references therein for some instances of such algorithms.
\end{enumerate}
The above steps are alternated until the solution converges. If the converging solution of $\bss$ contains more than $L$ non-zero entries, then the selection vector is set to the $L$ largest entries~of~$\bss$.
The box-Lasso-type algorithm is summarized in Algorithm~\ref{Algorithm_Lasso}. As mentioned, to optimize the performance further, one needs to further tune the regularizer $\eta$ in this algorithm. This task is often performed adaptively via numerical approaches; see for instance \cite{leng2006note}. Analytic approaches based on large-system analysis are an alternative tuning approach which can be followed in \cite{bereyhi2021detection,bereyhi2019rls}.

\begin{algorithm}[t]
	\caption{Algorithm II: Box-Lasso-Type Algorithm}
	\label{Algorithm_Lasso}
	\begin{algorithmic}[1]
		\STATE \textbf{Initialization:} Set the iteration index to $t=0$. Set the primary variables ${\mathbf m}$, ${\mathbf B}$ and ${\mathbf{s}}$. Choose a regularizer $\eta>0$.
		\REPEAT
		\STATE Update ${\mathbf m}$ by solving \eqref{Optimization_Filtering}
		\STATE Update $b_k$ for $k\in\dbc{K}$ via \eqref{eq:optimalb}
		\STATE Update $\bss$ by solving \eqref{Lasso:2}
		\STATE Set $t \leftarrow t+1$
		\UNTIL{convergence}
		\STATE Set $\bmm$ and $\mB$ to the converging solutions
		\STATE Set $L$ largest entries of $\bss$ to $1$ and the remaining to zero  
	\end{algorithmic}
\end{algorithm}

\section{Algorithm III: An Algorithm Based on Iterative Soft-Thresholding}
Various sparse recovery techniques deal with quadratic programming \cite{foucart13}. A classical approach for the implementation of these techniques is to approximate the solution of the optimization problem via the Gauss-Seidel method. This results in an algorithm whose complexity linearly scales with the problem dimension. In this section, we invoke this approach to develop a low-complexity algorithm for joint antenna selection and beamforming from the Lasso-based algorithm.

We start from the box-Lasso approximation of the original design problem, i.e., \eqref{P_6}. Similar to Algorithm II, we address this problem via iterative \ac{ao}, where in each iteration we alternate between the marginal beamforming problem in \eqref{Optimization_Filtering} and the antenna selection problem in \eqref{Lasso:1}. The Gauss-Seidel method suggests approximating the solution of the latter problem in a step-wise fashion: in each step, we find the optimal entry $s_n$ while treating the other entries to~be~fixed. 

Considering the marginal problem in \eqref{Lasso:1}, the Gauss-Seidel method breaks the $N$-dimensional optimization into $N$ parallel scalar box-Lasso problems with the $n$-th one being
\begin{equation}\label{eq:s-lasso}
	{s}_n^{\star}=\argmin_{ 0 \leq {s}_n \leq 1}\dfrac{1}{2}\left(s_n-\frac{z_n}{2w_n}\right)^2+\frac{\eta}{2w_n}\vert s_n\vert.
\end{equation}
Here, $z_n$ and $w_n$ are described as
\begin{subequations}
	\begin{align}
		z_n&=\vert m_n \vert^2\left( \sigma^2 +\sum_{k=1}^{K}\vert b_k\vert^2 \vert h_{nk}\vert^2\right), \label{eqZ}\\
		w_n&=\Re\{\frac{q_n}{2}-\sum_{n'=1,n'\neq n}^N [\mQ_{\rm lasso} ]_{n',n}s_{n'}\}, \label{eqW}
	\end{align}
\end{subequations}
with $\mQ_{\rm lasso} $ being defined in \eqref{eq:Q}. The scalars  $m_n$ and $q_n$ denote the $n$-th entry of $\mathbf{m}$ and $\mathbf{q}$ defined in \eqref{eq:q}, respectively, and $h_{nk}$ is the entry of $\mH$ at row $n$ and column $k$. The solution of this scalar optimization is given in a closed form as
${s}_n^{\star}= \dfrac{1}{2w_n} T_\eta^S(z_n),$
where $T_\eta^S\brc{u}$ is the so-called \textit{soft-thresholding operator} defined as
\begin{align}
T_\eta^S(u)= 
\begin{cases}
	0,& \vert x\vert \leq \eta\\
	u-\eta \text{Sgn}(u),  & \vert x\vert \geq \eta
\end{cases}.
\end{align}

Using this approximative approach, the antenna selection in each iteration of the algorithm is performed by $N$ simple soft-thresholding operations: in each \ac{ao} iteration, the algorithm finds the new switching vector as ${s}_n^{(t+1)} = T_\eta^S(z_n^{(t)})/{2w_n^{(t)}} $, where $w_n^{(t)}$ and $z_n^{(t)}$ are determined from \eqref{eqZ} and \eqref{eqW} by replacing $\bmm$ and $\bss$ with their values in iteration $t$. The final algorithm based on the \ac{ao} method is summarized in Algorithm~\ref{Algorithm_fista}. We refer to this algorithm as a \ac{fista} as it invokes iterative soft-thresholding for antenna selection.

\begin{algorithm}[t]
	\caption{Algorithm II: FISTA-based Algorithm}
	\label{Algorithm_fista}
	\begin{algorithmic}[1]
		\STATE \textbf{Initialization:} Set the iteration index to $t=0$. Set the primary variables ${\mathbf m}$, ${\mathbf B}$ and ${\mathbf{s}}$. Choose a regularizer $\lambda>0$.
		\REPEAT
		\STATE Update ${\mathbf m}$ by solving \eqref{Optimization_Filtering}
		\STATE Update $b_k$ for $k\in\dbc{K}$ via \eqref{eq:optimalb}
		\STATE Determine $z_n$ and $w_n$ from \eqref{eqZ} and \eqref{eqW}, respectively, and update $\bss$ as
		\begin{align*}
			{s}_n= \dfrac{1}{2w_n} T_\eta^S(z_n)
		\end{align*}
	\vspace{-0.8cm}
		\STATE Set $t \leftarrow t+1$
		\UNTIL{convergence}
		\STATE Set $\bmm$ and $\mB$ to the converging solutions
		\STATE Set $L$ largest entries of $\bss$ to $1$ and the remaining to zero  
	\end{algorithmic}
\end{algorithm}



\section{Complexity and Convergence Analysis}
{
In this section, we first provide the complexity analysis of the proposed algorithms in previous sections and then give a short discussion on convergence of the OTA-FL Scheme.}
\subsection{Complexity Analysis of the Proposed Algorithms}

\subsubsection{PDD-based method}
In general, the \ac{pdd} algorithm is more complex as compared to standard step-wise approaches, as it iterates through two tiers. To derive the complexity of this algorithm analytically, let $I_{\rm{out}}$ and $I_{\rm{in}}$ denote the numbers of iterations in the outer loop and the inner loop, respectively. For each loop, we can approximate the complexity of each iteration with the dominant computational task of the iteration:
\begin{itemize}
	\item In the \textit{outer loop}, the dominant computational task in each iteration is the inner loop.
	\item The per-iteration complexity in the \textit{inner loop} is mainly composed of the complexity of updating the primal variables ${\mathbf m}$, ${\mathbf B}$, $\overline{\mathbf{s}}$  and ${\mathbf{s}}$: updating ${\mathbf m}$ requires solving a quadratic program whose complexity scales with $K^3$. The update of ${\mathbf B}$ needs $K$ parallel updates~each dealing with a matrix-vector multiplication of order $N$. The complexity hence scales with $KN$. Similarly, the update of $\bss$ requires $N$ parallel updates, each evaluating a matrix-vector product of complexity order $K$. The update of $\bss$ hence  scales with $KN$. Finally, $\bar{\bss}$ is updated through $N$ parallel updates of constant complexity, and its update scales linearly with $N$.
	%
\end{itemize}
Assuming that $N$ and $K$ scale proportionally, the per-iteration complexity of the inner loop can be approximated with $K^3$, and thus the overall complexity of Algorithm \ref{Algorithm2} scales with $I_{\text{out}}I_{\text{in}}K^3$ which is cubic in system dimension for fixed numbers of iterations.

Although the computational complexity of the proposed \ac{pdd}-based algorithm is feasible in practice, it is still considered to be high in many systems with limited processing capacity. 
\subsubsection{Complexity Analysis of Lasso-type Algorithm}
The computational complexity of the box-Lasso-type algorithm depends on the algorithm adapted for solving the box-Lasso problem in \eqref{Lasso:2}. Though the computational complexity varies from one implementation to another, we can consider the classical implementation based on \ac{lars} \cite{efron2004least}. In this case, the computational complexity of the box-Lasso problem scales cubically with the number of devices, i.e., $K^3$. Noting that the first and second marginal problems scale with $K^3$ and $KN$, we can conclude that the per-loop computational complexity scales with $K^3$. Assuming $I_{\rm AO}$ iterations for convergence, the algorithm imposes a computational complexity of order $I_{\rm AO} K^3$ to the system, similar to the \ac{pdd}-based scheme. 

Considering both Algorithms I and II, we note that the quadratic programming in \eqref{Optimization_Filtering} and the box-Lasso problem in \eqref{Lasso:2} are the main computational bottlenecks. In the sequel, we develop a reduced-complexity algorithm which addresses these bottlenecks by using low-complexity alternatives for these two tasks.
\subsubsection{Complexity Analysis of Iterative soft Thresholding based algorithm}
The direct implementation of the \ac{fista}-based selection deals only with matrix-vector multiplications and scales with $KN$, which is in the same order as the complexity of the marginal problem of updating $\mB$. The complexity is hence dominated by the update of $\bmm$, i.e., $K^3$. 

Assuming $I_{\rm AO}$ iterations for convergence, the algorithm imposes a computational complexity of order $I_{\rm AO} K^3$ to the system, similar to the previous schemes. Nevertheless, as the antenna selection is performed at a significantly lower complexity, the \ac{fista}-based approach performs with a constant factor faster than the other two approaches. We discuss this point in greater detail throughout the numerical investigations.

\subsection{Convergence of the OTA-FL Scheme}
In this section, we discuss the convergence of the global model parameters to the optimal solution, i.e., $\min_{\bomega} F\brc{\bomega \vert \maD}$.  The convergence in general depends on various aspects:
	\begin{inparaenum}
		\item the analytic properties of the local loss functions,
		\item the optimization algorithm used for local updates, and
		\item the joint \ac{as} and beamforming algorithm used for over-the-air model aggregation.
	\end{inparaenum}
Let's assume that the sample-wise loss function in \eqref{loss:loc}, i.e., $\ell\brc{\bomega\vert \buu_k,v_k }$, is \textit{continuously differentiable}  with respect to $\bomega$ and the global loss function $F\brc{\bomega \vert \maD}$ has a minimizer $\bomega^\star$. Moreover, we assume that the gradient of $F\brc{\bomega \vert \maD}$ is Lipschitz continuous for all $\bomega_1, \bomega_2 \in \setR^U$ and some  $L_{Lip}>0$, i.e., $\norm{\nabla F\brc{\bomega_1} -  \nabla F\brc{\bomega_2}} \leq L_{Lip} \norm{\bomega_1-\bomega_2}$ with $U$ denoting the number of model parameters. And the global loss is strongly convex with parameter $0 < \mu <L_{Lip}$, i.e., for all $\bomega, \boldsymbol{\delta} \in \setR^U$ 
\begin{align}
	F\brc{ \bomega + \boldsymbol{\delta}} \geq F\brc{\bomega} + \boldsymbol{\delta}^\trp \nabla F\brc{\bomega} + \frac{\mu}{2} \norm{\boldsymbol{\delta}}^2.
\end{align}
We further assume that the edge devices use \ac{sgd} to locally update their model parameters.

We now invoke the results of \cite{friedlander2012hybrid} which characterizes the so-called \textit{optimality gap} of the \ac{sgd} in terms of the \ac{mse} between the noisy calculation of the gradient and its true value. The optimality gap is defined to be the difference between the loss of the global model in communication round $t$, i.e., $F(\bomega\dbc{t})$, and the optimal solution, i.e., $F(\bomega^{\star})$. Using the results of \cite{friedlander2012hybrid}, we show that the optimality gap in our setting can be reduced in each iteration, and hence the \ac{fl} scheme converges to a point in the vicinity of the optimal solution. To this end, let us define the optimality gap in iteration $t$ formally as
$G\dbc{t} = \Ex{F(\bomega\dbc{t}) - F(\bomega^{\star}) }{}$.
Using Lemma 2.1. in \cite{friedlander2012hybrid}, we can bound the optimality gap in the next communication round~as 
\begin{align}
	G\dbc{t+1} \leq \brc{1-\frac{\mu}{L_{Lip}}} G\dbc{t} + \frac{1	}{2L_{Lip}} \epsilon^{(t)}\brc{\bmm, \bss, \mB}, \label{eq:upper}
\end{align}
with $ \epsilon^{(t)}\brc{\bmm, \bss, \mB}$ denoting the aggregation error in communication round $t$ determined by setting the designed $\bmm$,  $\bss$ and $\mB$ into \eqref{eq:eps}. 

We now denote the aggregation error achieved by a particular algorithm in communication round $t$ with $\epsilon\dbc{t}$. Substituting in \eqref{eq:upper}, we conclude that starting from a point with large optimality gap, the \ac{ota-fl} scheme moves towards the optimal solution as long as $G\dbc{t} > \frac{\epsilon\dbc{t}}{2\mu}$.~This~guarantees that with enough number of communication rounds the final solution determined by the \ac{ota-fl} scheme is in the vicinity of the optimal solution. The optimality gap of the converging solution moreover depends on the joint selection and beamforming algorithm: the smaller the achieved aggregation error is, the closer to the optimal solution the algorithm converges.
\color{black}
\section{Numerical Results and Discussions}
In this section, we conduct multiple experiments to evaluate performance of the proposed algorithms. We consider a multi-user network in which a set of edge devices invoke the \ac{fl} framework to learn a common model for a 10-class image classification problem.  
\subsection{Communication Settings}
\label{sec:comm_sett}
We consider a single cell of a cellular network with $K=50$ single-antenna edge devices and a \ac{ps} that is equipped with $N=128$ antennas. The maximum transmit power of each edge device is set to $P = 1$ W. The \ac{ps} is located at the center of the cell and the locations of the devices are uniformly and randomly generated within a ring whose inner radius is $R_{\mathrm{in}}=10$~m and whose outer radius is $R_{\mathrm{out}}=100$ m. The vector of channel coefficients between the device $k$ and the \ac{ps} is generated as $\mh_k=\sqrt{\rho_k^L(d_k)}\mathbf{g}_k$,~where $\rho_k^L$ models the large-scale path loss and $\mathbf{g}_k$ captures the small-scale fading. The path-loss is determined from the distance from the \ac{ps} as $\rho_k^L\left(d_k\right)=\rho_{\mathrm{ref}} \left(\dfrac{d_k}{d_{\mathrm{ref}}}\right)^{-\alpha}$ where $d_k$ denotes the distance between device $k$ and the \ac{ps}, $\alpha$ is the path loss exponent and $\rho_{\mathrm{ref}}$ is the path-loss at the reference distance of $d_{\mathrm{ref}}$. 
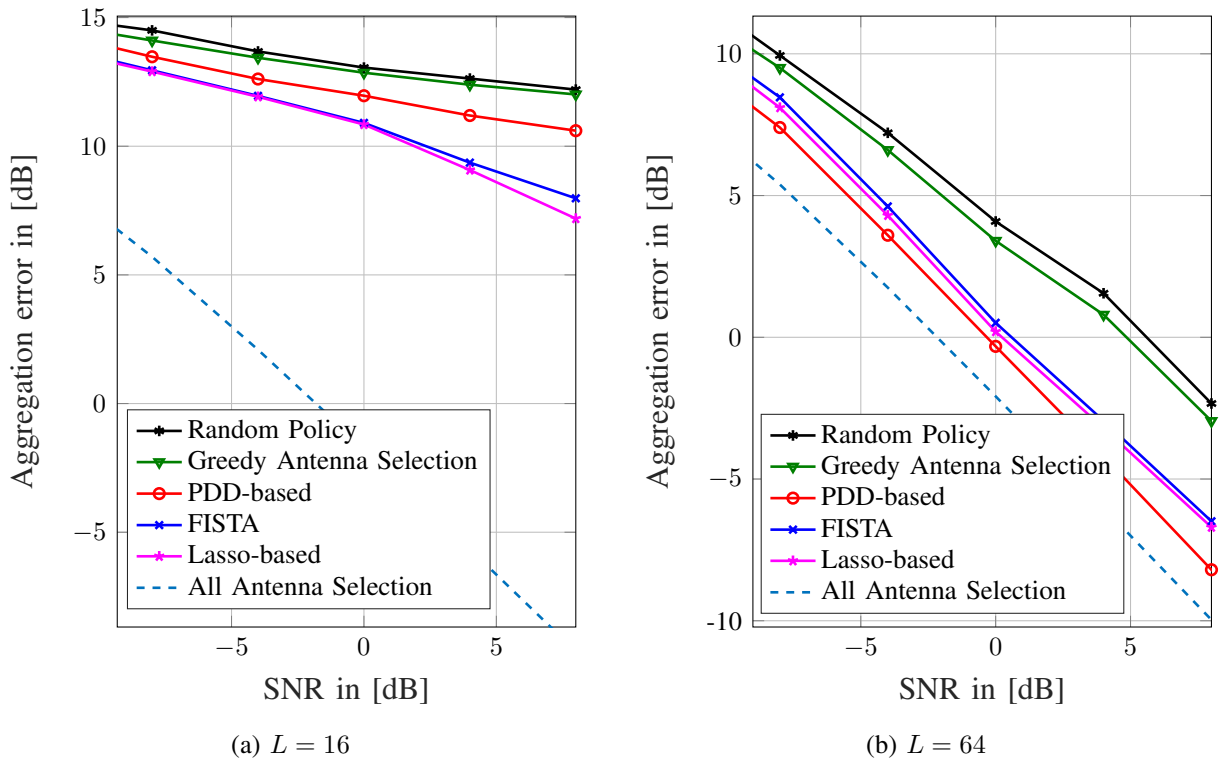
\begin{figure*}[t!]
	\centering
	\begin{subfigure}[t]{0.5\textwidth}
		\centering
%
%
\definecolor{mycolor1}{rgb}{0.00000,0.49804,0.00000}%
\definecolor{mycolor2}{rgb}{0.00000,0.44706,0.74118}%
\definecolor{mycolor3}{rgb}{1.00000,0.00000,1.00000}%
\begin{tikzpicture}

\begin{axis}[%
width=2.4in,
height=3.2in,
at={(2.058in,1.005in)},
scale only axis,
scale only axis,
xmin=-9.31533164189925,
xmax=8,
xlabel style={font=\color{white!15!black}},
xlabel={SNR in [dB]},
ymin=-8.68382595682009,
ymax=15.0564587541035,
ylabel style={font=\color{white!15!black}},
ylabel={Aggregation error in [dB]},
axis background/.style={fill=white},
ticklabel style={font=\small},
xtick={-5,0,5},
xticklabels={{$-5$},{$0$},{$5$}},
ytick={-5,0,5,10,15},
yticklabels={{$-5$},{$0$},{$5$},{$10$},{$15$}},
xmajorgrids,
ymajorgrids,
legend style={at={(0.021123,0.029)}, anchor=south west, nodes={scale=0.85, transform shape}, legend cell align=left, align=left, draw=white!15!black}
]
\addplot [color=black, line width=1.0pt, mark=asterisk, mark options={solid, black}]
  table[row sep=crcr]{%
-20	16.2686452595903\\
-16	15.6756280593877\\
-12	15.0476653402766\\
-8	14.5021935990017\\
-4	13.6829027356284\\
0	13.0562018312278\\
4	12.6294531575947\\
8	12.2005160174017\\
12	11.8567490254267\\
16	11.6451507144424\\
20	11.295915311656\\
24	11.3699283839185\\
28	11.363486297236\\
};
\addlegendentry{Random Policy}

\addplot [color=mycolor1, line width=1.0pt, mark=triangle, mark options={solid, rotate=180, mycolor1}]
  table[row sep=crcr]{%
-20	16.0671507290934\\
-16	15.4480694512179\\
-12	14.7841430214441\\
-8	14.1085767401106\\
-4	13.4380133862433\\
0	12.8562521681079\\
4	12.3905049143647\\
8	12.0121453418495\\
12	11.6928017262792\\
16	11.4890418990247\\
20	11.277014619758\\
24	11.1805042756415\\
28	11.1360931884824\\
};
\addlegendentry{Greedy Antenna Selection}

\addplot [color=red, line width=1.0pt, mark=o, mark options={solid, red}]
  table[row sep=crcr]{%
-20	16.1390051442274\\
-16	15.3461717020435\\
-12	14.4604117095416\\
-8	13.4767522630921\\
-4	12.6122623204607\\
0	11.9625018805812\\
4	11.1934756106509\\
8	10.6040804494399\\
12	10.3144570457947\\
16	10.0416329568857\\
20	9.91556556633492\\
24	9.98522480980441\\
28	9.98743279446558\\
};
\addlegendentry{PDD-based}

\addplot [color=blue, line width=1.0pt, mark=x, mark options={solid, blue}]
  table[row sep=crcr]{%
-20	15.6235151196264\\
-16	14.8325851417727\\
-12	13.9441037555043\\
-8	12.9489629784113\\
-4	11.9562332566647\\
0	10.9059075546599\\
4	9.36275188398607\\
8	7.97630541405946\\
12	9.80891855141648\\
16	9.65082407978415\\
20	9.60769748337162\\
24	9.67991998125456\\
28	9.60842948903235\\
};
\addlegendentry{FISTA}

\addplot [color=mycolor3, line width=1.0pt, mark=star, mark options={solid, mycolor3}]
  table[row sep=crcr]{%
-20	15.6486575693487\\
-16	14.802561377007\\
-12	13.8399194335358\\
-8	12.9010783094541\\
-4	11.9209700379471\\
0	10.8437936755453\\
4	9.0751861989248\\
8	7.18829038884253\\
12	9.88857951236384\\
16	9.57565816680643\\
20	9.51060216267358\\
24	9.53301363264503\\
28	9.4974793839868\\
};
\addlegendentry{Lasso-based}

\addplot [color=mycolor2, dashed, line width=1.0pt]
  table[row sep=crcr]{%
-20	13.5950158793215\\
-16	11.5910619540777\\
-12	8.93074511051065\\
-8	5.70574991537345\\
-4	2.09163290162314\\
0	-1.73388088883534\\
4	-5.65463453155225\\
8	-9.62454371353991\\
12	-13.6125866670118\\
16	-17.6078088669363\\
20	-21.6059039800101\\
24	-25.6051451842418\\
28	-29.6048430313447\\
};
\addlegendentry{All Antenna Selection}

\end{axis}
\end{tikzpicture}%
		\caption{$L=16$}
		\label{fig1a}
	\end{subfigure}%
	~ 
	\begin{subfigure}[t]{0.5\textwidth}
		\centering
%
%
\definecolor{mycolor1}{rgb}{0.00000,0.49804,0.00000}%
\definecolor{mycolor2}{rgb}{1.00000,0.00000,1.00000}%
\definecolor{mycolor3}{rgb}{0.00000,0.44706,0.74118}%
\begin{tikzpicture}

\begin{axis}[%
width=2.4in,
height=3.2in,
at={(2.058in,1.005in)},
scale only axis,
xmin=-9,
xmax=8,
xlabel style={font=\color{white!15!black}},
xlabel={SNR in [dB]},
ymin=-10.2204074064293,
ymax=11.3305512954202,
ylabel style={font=\color{white!15!black}},
ylabel={Aggregation error in [dB]},
axis background/.style={fill=white},
ticklabel style={font=\small},
xtick={-5,0,5},
xticklabels={{$-5$},{$0$},{$5$}},
ytick={-10,-5,0,5,10,15},
yticklabels={{-10},{$-5$},{$0$},{$5$},{$10$}},
xmajorgrids,
ymajorgrids,
legend style={at={(0.81,0.349)},legend cell align=left, align=left, nodes={scale=0.85, transform shape}, draw=white!15!black}
]

\addplot [color=black, line width=1.0pt, mark=asterisk, mark options={solid, black}]
  table[row sep=crcr]{%
-20	15.9255701043535\\
-16	14.4991924087051\\
-12	12.6655553720714\\
-8	9.94244573115605\\
-4	7.21307903881051\\
0	4.08505885008687\\
4	1.55280623495256\\
8	-2.33573442604917\\
12	-5.43907375701483\\
16	-8.84987329140498\\
20	-12.5963635154745\\
24	-14.6787923270706\\
28	-17.6505075611976\\
};
\addlegendentry{Random Policy}

\addplot [color=mycolor1, line width=1.0pt, mark=triangle, mark options={solid, rotate=180, mycolor1}]
  table[row sep=crcr]{%
-20	15.194463652938\\
-16	14.2465516487638\\
-12	12.0161597170814\\
-8	9.5040702362518\\
-4	6.60678022417995\\
0	3.40350530272021\\
4	0.796988391697428\\
8	-2.95337224122477\\
12	-5.99550262493235\\
16	-9.50809325804602\\
20	-12.83730470338\\
24	-15.6652466096161\\
28	-17.6205715720225\\
};
\addlegendentry{Greedy Antenna Selection}
\addplot [color=red, line width=1.0pt, mark=o, mark options={solid, red}]
  table[row sep=crcr]{%
-20	13.3\\
-16	12.6\\
-12	10.3\\
-8	7.4\\
-4	3.6\\
0	-0.32\\
4	-4.2\\
8	-8.2\\
12	-11.99\\
16	-15.6\\
20	-18\\
24	-20\\
28	-22\\
};
\addlegendentry{PDD-based}
\addplot [color=blue, line width=1.0pt, mark=x, mark options={solid, blue}]
  table[row sep=crcr]{%
-20	13.8388406323304\\
-16	12.8947804512759\\
-12	11.2\\
-8	8.46393451349503\\
-4	4.61699052902447\\
0	0.511169905701149\\
4	-2.94574841108092\\
8	-6.49023373268899\\
12	-9.90345191581243\\
16	-13.5051426454119\\
20	-16.7278226806913\\
24	-19.6641068886379\\
28	-22.1352747073269\\
};
\addlegendentry{FISTA}
\addplot [color=mycolor2, line width=1.0pt, mark=star, mark options={solid, mycolor2}]
  table[row sep=crcr]{%
-20	13.5\\
-16	12.6\\
-12	11\\
-8	8.1\\
-4	4.3\\
0	0.2\\
4	-3.2\\
8	-6.7\\
12	-10.12\\
16	-13.7\\
20	-17\\
24	-20.3\\
28	-22.5\\
};
\addlegendentry{Lasso-based}

\addplot [color=mycolor3, dashed, line width=1.0pt]
  table[row sep=crcr]{%
-20	13.348991488496\\
-16	11.3466594051332\\
-12	8.65182446996102\\
-8	5.39043739427935\\
-4	1.75305773989357\\
0	-2.08364498787673\\
4	-6.00925850765309\\
8	-9.96685939037516\\
12	-13.9492580070072\\
16	-17.9447148523609\\
20	-21.9429035752786\\
24	-25.9421820773973\\
28	-29.9418947779977\\
};
\addlegendentry{All Antenna Selection}

\end{axis}
\end{tikzpicture}%
		\caption{$L=64$}
		\label{fig1b}
	\end{subfigure}
	\caption{The aggregation error versus \ac{snr} for different values of $L$.}
	\label{fig:fig1}
\end{figure*}
For the fading process, we consider the Rayleigh model. This means that the entries of $\mathbf{g}_k$ are generated \ac{iid} according to a complex Gaussian distribution with zero-mean and covariance matrix $\mathbf{R}_k$, i.e., $\mathbf{g}_k\sim \mathcal{CN} \left(0, \mathbf{R}_k\right)$. We use the Rayleigh model for correlation \cite{bjornson2017massive} meaning that the entries of covariance matrix $\mathbf{R}_k$ are set to ${\left[\mathbf{R}_k\right]}_{n, m}=u^{n-m}\left(\kappa_k\right)\xi_{n, m}^{(k)} \left(\kappa_k \right)$, where $u\left(\kappa_k\right)$ is given by $u\left(\kappa_k\right)=\exp \{j 2\pi \zeta \sin (\kappa_k)\}$ with $\kappa_k$ being the \ac{AoA} at the \ac{ps} from the $k$-th device, and $\zeta$ being the distance between two neighboring antenna elements.
Furthermore, $\xi_{n, m}^{(k)} \left(\kappa_k \right)$ is the angular spread of the \ac{AoA} at the \ac{ps} from device $k$ given by
$\xi_{n, m}^{(k)} \left(\kappa_k \right)=\exp\left\{-2\vartheta_k^2[\pi \left(n-m\right) d \cos\left(\kappa_k\right)]^2\right\}$.
Here, $\vartheta_k$ is the standard deviation of the angular spread at  device $k$ and is chosen uniformly and randomly in the interval of $\left[12, 15\right]$  throughout the simulations. The nominal values of AoAs are further calculated geometrically from the position of the devices. For the sake of comparison, we evaluate the performance for three baselines in addition to the proposed algorithms:
\begin{itemize}
	\item \textbf{Random Policy}:  An \ac{ota-fl}  scheme in which a subset of $L$ antennas at the \ac{ps} are selected randomly. In the random policy, based on the subset of selected antennas, the receive beamforming and transmit coefficients are optimized by \ac{ao} algorithm. 
	\item \textbf{Greedy Antenna Selection Policy}: A  scheme that selects the subset of $L$ antennas at the \ac{ps} which corresponds to the $L$ strongest sum of channel gains, i.e., the $L$ antennas with $L$ largest $\displaystyle \sum_{k=1}^K \abs{h_{k,n}}^2$ for $n\in \set{1,\ldots,N}$. 
	Similar to the random policy, we optimize the receive beamforming and transmit coefficients by \ac{ao} algorithm.
	\item \textbf{All Antenna Selection}: An \ac{ota-fl}  scheme in which all the antennas at the \ac{ps} are selected with the assumption that the RF chains are available for all of them. The receive beamforming and transmit coefficients are optimized alternatively. 
\end{itemize}
	\begin{figure*}[t!]
	\centering
	\begin{subfigure}[t]{0.5\textwidth}
		\centering
%
%
\definecolor{mycolor1}{rgb}{1.00000,0.00000,1.00000}%
\definecolor{mycolor2}{rgb}{0.00000,0.49804,0.00000}%
\begin{tikzpicture}

\begin{axis}[%
width=2.6in,
height=3.in,
at={(1.219in,0.764in)},
scale only axis,
xmin=1,
xmax=126,
ymin=-2,
ymax=16,
ylabel={Aggregation error in [dB]},
xlabel={Number of selected antennas $L$},
xlabel style={ font=\small},
ylabel style={ font=\small},
axis background/.style={fill=white},
ticklabel style={font=\small},
xtick={20,40,60,80,100,120},
xticklabels={{$20$},{$40$},{$60$},{$80$},{$100$},{$120$}},
ytick={-2,0,2,4,6,8,10,12,14,16},
yticklabels={{$-2$},{$0$},{$2$},{$4$},{$6$},{$8$},{$10$},{$12$},{$14$},{$16$}},
xmajorgrids,
ymajorgrids,
legend style={legend cell align=left, align=left, draw=white!15!black, nodes={scale=0.85, transform shape},}
]

\addplot [color=black, line width=1.0pt, mark=asterisk, mark options={solid, black}]
  table[row sep=crcr]{%
1	15.6101296173976\\
6	14.5666894827431\\
11	13.9114995317098\\
16	13.2522309315854\\
21	12.2876309491646\\
26	11.6671702602869\\
31	10.9608728431935\\
36	9.71381014760948\\
41	8.98946675777798\\
46	8.84702393975841\\
51	7.23345609584496\\
56	6.57213647085342\\
61	5.97424707129785\\
66	4.5206310954023\\
71	4.04970939463428\\
76	2.93216497317635\\
81	2.35944746485123\\
86	1.98554133572506\\
91	1.65225833009467\\
96	0.853485141005494\\
101	0.669898473203709\\
106	0.282063359691967\\
111	-0.123706080399392\\
116	-0.625692117434753\\
121	-0.77342108026422\\
126	-1.26073998740007\\
};
\addlegendentry{Random Policy}

\addplot [color=mycolor2, line width=1.0pt, mark=triangle, mark options={solid, rotate=180, mycolor2}]
table[row sep=crcr]{%
	1	15.6040567196859\\
	6	14.4350045304392\\
	11	13.7651945770187\\
	16	13.0582128523916\\
	21	12.3410050191042\\
	26	11.5698325453317\\
	31	10.2629721090301\\
	36	9.57908987123179\\
	41	9.15494049681765\\
	46	8.70827367558833\\
	51	6.78963363752697\\
	56	6.04390287497765\\
	61	5.14095711389218\\
	66	3.87463613919227\\
	71	3.40096968901302\\
	76	2.49972568884251\\
	81	2.0542910092749\\
	86	1.82449335014187\\
	91	1.61684885956902\\
	96	0.749101645218855\\
	101	0.291996376076192\\
	106	0.00978156976453046\\
	111	-0.395590898230789\\
	116	-0.70309590463112\\
	121	-0.803765056662598\\
	126	-1.27892638244211\\
};
\addlegendentry{Greedy Antenna Selection}

\addplot [color=red, line width=1.0pt, mark=o, mark options={solid, red}]
table[row sep=crcr]{%
	1	15.1638972356548\\
	6	13.7239611888857\\
	11	12.9902970401536\\
	16	11.905501236719771\\
	21	11.0112656803708\\
	26	9.721186710054915\\
	31	7.83887469051903\\
	36	5.90766231949668\\
	41	4.20447796651607\\
	46	2.90317297067132\\
	51	1.5129173758161\\
	56	0.579024727948824\\
	61	0.134571001352477\\
	66	-0.2422939715501\\
	71	-0.5146474517314\\
	76	-0.80718815080652\\
	81	-0.94861833608468\\
	86	-0.999999536608965\\
	91	-1.1045550101708\\
	96	-1.17831749375976\\
	101	-1.18758457766424\\
	106	-1.17135528856294\\
	111	-1.17982282592153\\
	116	-1.1976670574571\\
	121	-1.22687686105718\\
	126	-1.25034797672307\\
};
\addlegendentry{PDD-based}

\addplot [color=blue, line width=1.0pt, mark=x, mark options={solid, blue}]
  table[row sep=crcr]{%
1	15.1634094484235\\
6	13.6642112541947\\
11	12.3044292472666\\
16	11.3226955101012\\
21	9.77908467292789\\
26	8.70810003025715\\
31	7.17387126296715\\
36	5.938838540364419\\
41	4.73489412344423\\
46	4.035615606652761\\
51	3.18264657726911\\
56	2.52485998750323\\
61	1.59908290554805188\\
66	0.94240369558736\\
71	0.650995424547816\\
76	0.3547115078587\\
81	0.211614451781378\\
86	-0.0224882071777025\\
91	-0.0487270213907\\
96	-0.29021944580447\\
101	-0.53052461292211\\
106	-0.631794039077333\\
111	-0.713573811553145\\
116	-1.004322369399\\
121	-0.923516664845202\\
126	-1.29219928201601\\
};
\addlegendentry{FISTA}

\addplot [color=mycolor1, line width=1.0pt, mark=star, mark options={solid, mycolor1}]
table[row sep=crcr]{%
	1	15.1629216611921\\
	6	13.6044613195037\\
	11	12.1185614543795\\
	16	11.0952673482253\\
	21	9.4455125421478\\
	26	8.20433295996514\\
	31	6.50886783541527\\
	36	4.96910848779169\\
	41	4.22500858172773\\
	46	3.48058242634201\\
	51	2.57355557292815\\
	56	1.72177115109046\\
	61	1.23056337929984\\
	66	0.56230554102361625\\
	71	0.230526103922493\\
	76	-0.1340141830798051\\
	81	-0.2488090737171223\\
	86	-0.3595300752519514\\
	91	-0.4532955059861562\\
	96	-0.74513104843666\\
	101	-0.864562689008237\\
	106	-0.999116934449061\\
	111	-1.02891936389102\\
	116	-1.24193899308759\\
	121	-1.07826471911861\\
	126	-1.33405058730895\\
};
\addlegendentry{Lasso-based}

\end{axis}
\end{tikzpicture}%
		\caption{Standard Rayleigh fading channel}
		\label{fig2aa}
	\end{subfigure}%
	~ 
	\begin{subfigure}[t]{0.5\textwidth}
		\centering
%
%
\definecolor{mycolor1}{rgb}{0.00000,0.49804,0.00000}%
\definecolor{mycolor2}{rgb}{1.00000,0.00000,1.00000}%
\begin{tikzpicture}

\begin{axis}[%
width=2.6in,
height=3.in,
at={(1.121in,0.658in)},
scale only axis,
xmin=1,
xmax=126,
ymin=-4,
ymax=16,
ylabel={Aggregation error in [dB]},
xlabel={Number of selected antennas $L$},
xlabel style={ font=\small},
ylabel style={ font=\small},
axis background/.style={fill=white},
ticklabel style={font=\small},
xtick={20,40,60,80,100,120},
xticklabels={{$20$},{$40$},{$60$},{$80$},{$100$},{$120$}},
ytick={-2,0,2,4,6,8,10,12,14,16},
yticklabels={{$-2$},{$0$},{$2$},{$4$},{$6$},{$8$},{$10$},{$12$},{$14$},{$16$}},
axis background/.style={fill=white},
xmajorgrids,
ymajorgrids,
legend style={legend cell align=left, align=left, draw=white!15!black,nodes={scale=0.85, transform shape}}
]

\addplot [color=black, line width=1.0pt, mark=asterisk, mark options={solid, black}]
table[row sep=crcr]{%
	1	15.6380155208293\\
	6	14.6051895354226\\
	11	13.5941520879762\\
	16	13.1470646475579\\
	21	12.228357533145\\
	26	11.3877303314715\\
	31	10.7284827577803\\
	36	9.81590652438988\\
	41	9.09336809350693\\
	46	7.0835555467745\\
	51	7.07005194302519\\
	56	6.24934051468128\\
	61	4.59672675175464\\
	66	3.79538308466509\\
	71	3.23050421724387\\
	76	2.03757837132125\\
	81	1.71075518972933\\
	86	0.826283715283597\\
	91	0.457602094918414\\
	96	0.302189893944733\\
	101	-0.576632498548045\\
	106	-0.475674275030714\\
	111	-1.20031499245511\\
	116	-1.13189423621403\\
	121	-1.90305126104728\\
	126	-2.3\\
};
\addlegendentry{Random Policy}

\addplot [color=mycolor1, line width=1.0pt, mark=triangle, mark options={solid, rotate=180, mycolor1}]
table[row sep=crcr]{%
	1	15.57643555514\\
	6	14.5665290885131\\
	11	13.809450862758\\
	16	12.7472193227652\\
	21	12.0002190725187\\
	26	11.1595601063278\\
	31	10.5161703268562\\
	36	9.80688675243495\\
	41	8.42280498275652\\
	46	7.52537530229734\\
	51	6.97221991309639\\
	56	5.15853090928826\\
	61	4.07167816619344\\
	66	3.12942727607452\\
	71	3.2883351912591\\
	76	1.87246382322683\\
	81	1.06530028717215\\
	86	0.347851217451049\\
	91	0.370586493481399\\
	96	0.209719331228165\\
	101	-0.75356118170827\\
	106	-0.653802638488973\\
	111	-1.41356773919709\\
	116	-1.28833277730601\\
	121	-2.04035924127221\\
	126	-2.3\\
};
\addlegendentry{Greedy Antenna Selection}

\addplot [color=red, line width=1.0pt, mark=o, mark options={solid, red}]
table[row sep=crcr]{%
	1	15.2017527511837\\
	6	13.9167279081427\\
	11	12.8405897302359\\
	16	11.836290139193\\
	21	10.7018225161824\\
	26	8.95086328638562\\
	31	7.4789143079304\\
	36	5.4287651251485\\
	41	4.40124478055238\\
	46	2.98647089466478\\
	51	1.82724226351035\\
	56	0.726973806290375\\
	61	0.182371116663138\\
	66	-0.402849201126248\\
	71	-0.68558099529678\\
	76	-1.01140443497106\\
	81	-1.4053073030139\\
	86	-1.85121911491359\\
	91	-1.9229184342946\\
	96	-1.9834621568545\\
	101	-2.03500420465798\\
	106	-2.16118708573457\\
	111	-2.20756955667167\\
	116	-2.3081347226747\\
	121	-2.324953094351\\
	126	-2.35\\
};
\addlegendentry{PDD-based}

\addplot [color=blue, line width=1.0pt, mark=x, mark options={solid, blue}]
  table[row sep=crcr]{%
1	15.2685067929582\\
6	13.631550356683\\
11	12.2974132108034\\
16	10.6318018390405\\
21	9.55771439258888\\
26	7.8366142057257\\
31	6.77127685732742\\
36	4.93613313335976\\
41	3.84898368879344\\
46	2.9395005070641\\
51	2.24319393371472\\
56	1.72873197943875\\
61	0.995383873722551\\
66	0.876083177123988\\
71	0.59837507995955\\
76	0.139242037954462\\
81	-0.0928166790583\\
86	-0.493220844913458\\
91	-1.09787205020212\\
96	-1.19948542744267\\
101	-1.3200936786139\\
106	-1.51313118812112\\
111	-1.6971364823217\\
116	-1.89315204861135\\
121	-2.02512024341996\\
126	-2.3\\
};
\addlegendentry{FISTA}

\addplot [color=mycolor2, line width=1.0pt, mark=star, mark options={solid, mycolor2}]
  table[row sep=crcr]{%
1	15.2907581402164\\
6	13.5698245061964\\
11	12.2163543709926\\
16	10.463639072323\\
21	9.45355843957909\\
26	7.59853117883906\\
31	6.53539770712643\\
36	4.77192246943018\\
41	3.66489665820712\\
46	2.95124310396427\\
51	2.13920601616363\\
56	1.47829243615166\\
61	0.792130684457698\\
66	0.556350082561429\\
71	0.277386061145467\\
76	-0.148419580276918\\
81	-0.4209393350472\\
86	-0.832720412413491\\
91	-1.30413364622524\\
96	-1.39547960979563\\
101	-1.49882131012492\\
106	-1.67514516252449\\
111	-1.82474475090919\\
116	-1.99689771712719\\
121	-2.10007845615272\\
126	-2.3125\\
};
\addlegendentry{Lasso-based}

\end{axis}
\end{tikzpicture}%
		\caption{Correlated Rayleigh fading channel.}
		\label{fig2bb}
	\end{subfigure}
	\caption{The aggregation error versus number of selected antennas $L$ for different channel model.}
	\label{fig:fig2}
\end{figure*}
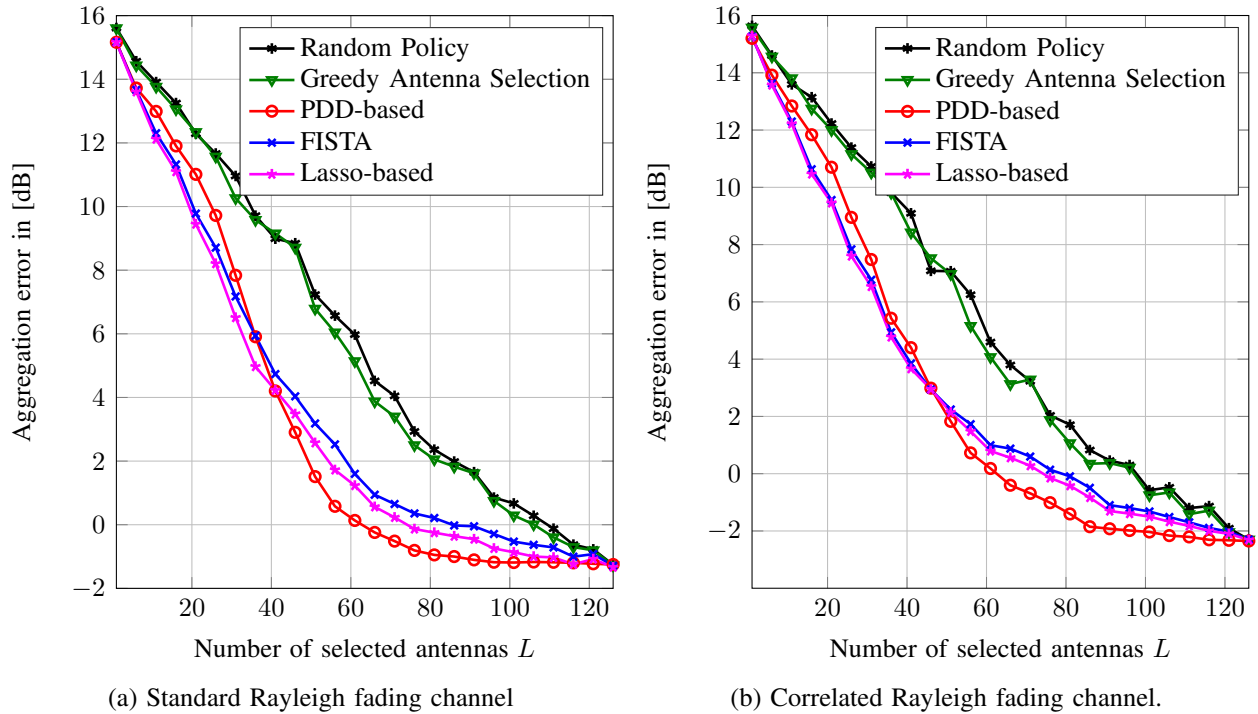
\subsection{Local Training Model Settings} 
We consider 10-class image  classification over
CIFAR-10 \cite{krizhevsky2009learning} and FMNIST \cite{xiao2017fashion} datasets. The dataset CIFAR-10 contains $10$ classes of color images with $6000$ images per class. Each class is divided into $5000$ images for training and the remaining $1000$ images are used for testing. The dataset FMNIST (Fashion-MNIST) has $10$ classes of gray-scale cloth images. There are $60,000$ training images and $10,000$ test images. 
To address the classification task, we train a \ac{cnn}. For CIFAR-10 dataset, we train a \ac{cnn}  with the widely-known VGG13 network 19 \cite{simonyan2014very}. The network consists of eight convolutional layers and two subsequent fully-connected layers. Each max-pooling layer is followed by a convolutional layer. The network is trained to minimize the empirical cross-entropy loss function in an \ac{ota-fl} manner by utilizing \ac{sgd} with momentum algorithm. The \ac{cnn} model for FMNIST dataset comprises three convolutional layers, one fully-connected layer and a classification layer with softmax function. 
We consider two data distribution scenarios:
\begin{enumerate}
\item\textbf{ \textit{\ac{iid} data distribution}} in which the overall training and testing datasets are shuffled randomly and partitioned equally and evenly among the devices.
    \item \textbf{\textit{Non-\ac{iid} distribution with label skew}} in which datasets are split with heterogeneous label distributions. Specifically, at each device, two labels are chosen at random. Then, 60\% of the datapoints in the client's dataset are chosen to be of those two labels and the remaining 40\% datapoints are chosen uniformly from the other  labels.
\end{enumerate}
We set the total number of communication rounds to be $T=5\times 10=50$ where channel  varies after every few communication rounds.

\subsection{Simulation Results}

	\begin{figure*}[t!]
	\centering
	\begin{subfigure}[t]{0.5\textwidth}
		\centering
%
%
\definecolor{mycolor1}{rgb}{0.00000,0.44706,0.74118}%
\definecolor{mycolor2}{rgb}{1.00000,0.00000,1.00000}%
\definecolor{mycolor3}{rgb}{0.00000,0.49804,0.00000}%
\begin{tikzpicture}

\begin{axis}[%
width=2.4in,
height=3.2in,
at={(1.843in,0.738in)},
scale only axis,
separate axis lines,
xmin=1,
xmax=50,
xlabel={Communication rounds},
ymin=65,
ymax=90,
ylabel={Testing accuracy in $\%$},
xlabel style={ font=\small},
ylabel style={ font=\small},
axis background/.style={fill=white},
ticklabel style={font=\small},
xtick={0,10,20,30,40,50},
xticklabels={{$0$},{$10$},{$20$},{$30$},{$40$},{$50$}},
ytick={55,60,65,70,75,80,85,90},
yticklabels={{$55$},{$60$},{$65$},{$70$},{$75$},{$80$},{$85$},{$90$}},
xmajorgrids,
ymajorgrids,
legend style={at={(0.97,0.03)},  nodes={scale=0.85, transform shape}, anchor=south east, legend cell align=left, align=left, draw=white!15!black}
]
\addplot [color=black, line width=1.0pt, mark=asterisk, mark options={solid, black}]
  table[row sep=crcr]{%
1	46.245\\
2	45.665\\
3	50.595\\
4	61.93\\
5	68.885\\
6	70.93\\
7	71.86\\
8	72.74\\
9	73.095\\
10	73.385\\
11	74.105\\
12	74.36\\
13	74.385\\
14	74.775\\
15	75.09\\
16	75.085\\
17	75.56\\
18	75.65\\
19	75.72\\
20	75.15\\
21	76.26\\
22	76.975\\
23	77.005\\
24	77.345\\
25	77.33\\
26	77.65\\
27	77.825\\
28	77.775\\
29	77.885\\
30	78.14\\
31	78.225\\
32	78.2795\\
33	78.251\\
34	78.34\\
35	78.355\\
36	78.38\\
37	78.428\\
38	78.4291\\
39	78.4195\\
40	78.45\\
41	78.46045\\
42	78.48\\
43	78.577\\
44	78.574\\
45	78.665\\
46	78.6825\\
47	78.75\\
48	78.8415\\
49	78.941\\
50	79.000535\\
};
\addlegendentry{Random Policy}

\addplot [color=mycolor3, line width=1.0pt, mark=triangle, mark options={solid, rotate=180, mycolor3}]
  table[row sep=crcr]{%
1	44.98\\
2	43.96\\
3	54.83\\
4	66.7\\
5	70.96\\
6	71.67\\
7	72.25\\
8	72.98\\
9	73.14\\
10	73.91\\
11	73.923\\
12	74.501\\
13	75.134\\
14	75.24\\
15	75.46\\
16	75.6278\\
17	75.93141\\
18	76.07404\\
19	76.1855\\
20	76.3882\\
21	76.489\\
22	77.4596023\\
23	77.576504\\
24	77.693634\\
25	77.846803\\
26	77.937038\\
27	78.139756\\
28	78.348066\\
29	78.5863\\
30	78.89066\\
31	78.9402\\
32	79.185\\
33	79.3\\
34	79.299\\
35	79.58\\
36	79.6139\\
37	79.8\\
38	79.9204\\
39	80.1214\\
40	80.2543\\
41	80.29\\
42	80.355\\
43	80.395\\
44	80.4154\\
45	80.5282\\
46	80.539\\
47	80.45\\
48	80.53\\
49	80.605\\
50	80.731\\
};
\addlegendentry{Greedy Antenna Selection}

\addplot [color=red, line width=1.0pt, mark=o, mark options={solid, red}]
  table[row sep=crcr]{%
1	56.3466666666667\\
2	58.4133333333333\\
3	64.5416666666667\\
4	71.0766666666667\\
5	73.87\\
6	74.9083333333333\\
7	75.5716666666667\\
8	76.2383333333333\\
9	76.4183333333333\\
10	77.0633333333333\\
11	77.3681666666667\\
12	77.9388333333333\\
13	78.4836666666667\\
14	78.6883333333333\\
15	79.045\\
16	79.2855666666667\\
17	79.5123716666667\\
18	79.7536866666667\\
19	79.9860833333333\\
20	80.1507666666667\\
21	80.3328333333333\\
22	80.7464678166667\\
23	81.053252\\
24	81.2084836666667\\
25	81.4167348333333\\
26	81.488519\\
27	81.6632113333333\\
28	81.7973663333333\\
29	81.9298166666667\\
30	82.10533\\
31	82.2951\\
32	82.4541666666667\\
33	82.512\\
34	82.5661666666667\\
35	82.713\\
36	82.72345\\
37	82.8615\\
38	82.9145333333333\\
39	83.0208666666667\\
40	83.0788166666667\\
41	83.1276666666667\\
42	83.1611666666667\\
43	83.1665\\
44	83.1690333333333\\
45	83.2156\\
46	83.2631666666667\\
47	83.2083333333333\\
48	83.2541666666667\\
49	83.3226666666667\\
50	83.4171666666667\\
};
\addlegendentry{PDD-based}

\addplot [color=blue, line width=1.0pt, mark=x, mark options={solid, blue}]
  table[row sep=crcr]{%
1	66.18\\
2	69.93\\
3	70.81\\
4	72.72\\
5	74.34\\
6	75.57\\
7	76.47\\
8	77.1\\
9	77.33\\
10	77.93\\
11	78.74\\
12	79.41\\
13	79.71\\
14	80.02\\
15	80.67\\
16	81.04\\
17	81.1\\
18	81.34\\
19	82.1\\
20	82.06\\
21	82.46\\
22	82.64\\
23	82.84\\
24	83.04\\
25	83.46\\
26	83.47\\
27	83.67\\
28	83.78\\
29	83.86\\
30	83.93\\
31	83.97\\
32	83.993\\
33	84.01\\
34	84.0409\\
35	84.0701\\
36	84.0923\\
37	84.128\\
38	84.1523\\
39	84.16\\
40	84.194\\
41	84.2036\\
42	84.2174\\
43	84.212\\
44	84.2688\\
45	84.242\\
46	84.209\\
47	84.2584\\
48	84.2699\\
49	84.2716\\
50	84.2945\\
};
\addlegendentry{FISTA}
\addplot [color=mycolor2, line width=1.0pt, mark=star, mark options={solid, mycolor2}]
table[row sep=crcr]{%
	1	66.7133333333333\\
	2	71.8666666666667\\
	3	73.2533333333333\\
	4	74.4533333333333\\
	5	75.78\\
	6	77.1466666666667\\
	7	77.8933333333333\\
	8	78.4966666666667\\
	9	78.6966666666667\\
	10	79.2166666666667\\
	11	79.8133333333333\\
	12	80.3766666666667\\
	13	80.8333333333333\\
	14	81.1366666666667\\
	15	81.63\\
	16	81.9433333333333\\
	17	82.0933333333333\\
	18	82.4333333333333\\
	19	82.7866666666667\\
	20	82.9133333333333\\
	21	83.1766666666667\\
	22	83.4333333333333\\
	23	83.53\\
	24	83.7233333333333\\
	25	83.9866666666667\\
	26	84.04\\
	27	84.1866666666667\\
	28	84.2466666666667\\
	29	84.2733333333333\\
	30	84.32\\
	31	84.65\\
	32	84.7233333333333\\
	33	84.724\\
	34	84.8333333333333\\
	35	84.846\\
	36	84.833\\
	37	84.923\\
	38	84.9086666666667\\
	39	84.9203333333333\\
	40	84.9033333333333\\
	41	84.9653333333333\\
	42	84.9673333333333\\
	43	84.938\\
	44	84.9226666666667\\
	45	84.903\\
	46	84.9873333333333\\
	47	84.9666666666667\\
	48	84.9783333333333\\
	49	85.0403333333333\\
	50	85.1033333333333\\
};
\addlegendentry{Lasso-based}

\addplot [color=mycolor1, dashed, line width=1.0pt]
table[row sep=crcr]{%
	1	67.06\\
	2	74.2366666666667\\
	3	75.81\\
	4	76.8833333333333\\
	5	77.8666666666667\\
	6	78.88\\
	7	79.6933333333333\\
	8	80.4066666666667\\
	9	81.0833333333333\\
	10	81.6533333333333\\
	11	82.1133333333333\\
	12	82.3733333333333\\
	13	82.6166666666667\\
	14	82.9933333333333\\
	15	83.2266666666667\\
	16	83.5166666666667\\
	17	83.73\\
	18	83.9766666666667\\
	19	84.2333333333333\\
	20	84.44\\
	21	84.69\\
	22	84.78\\
	23	84.9533333333333\\
	24	85.17\\
	25	85.26\\
	26	85.4233333333333\\
	27	85.5866666666667\\
	28	85.69\\
	29	85.79\\
	30	85.94\\
	31	85.93\\
	32	86.0866666666667\\
	33	86.0833333333333\\
	34	86.24\\
	35	86.3466666666667\\
	36	86.3366666666667\\
	37	86.3933333333333\\
	38	86.41\\
	39	86.58\\
	40	86.62\\
	41	86.6966666666667\\
	42	86.7133333333333\\
	43	86.8266666666667\\
	44	86.97\\
	45	87.0433333333333\\
	46	87.0366666666667\\
	47	87.1133333333333\\
	48	87.1733333333333\\
	49	87.24\\
	50	87.3533333333333\\
};
\addlegendentry{All Antenna selection}
\end{axis}

\end{tikzpicture}%
		\caption{$L=16$}
		\label{fig2a}
	\end{subfigure}%
	~ 
	\begin{subfigure}[t]{0.5\textwidth}
		\centering
%
%
\definecolor{mycolor1}{rgb}{0.00000,0.44700,0.74100}%
\definecolor{mycolor2}{rgb}{0.00000,0.49804,0.00000}%
\definecolor{mycolor3}{rgb}{1.00000,0.00000,1.00000}%
\begin{tikzpicture}

\begin{axis}[%
width=2.4in,
height=3.2in,
at={(2.6in,0.718in)},
scale only axis, 
xmin=1,
xmax=50,
ylabel={Testing accuracy in $\%$},
ymin=65,
ymax=90,
ylabel style={font=\color{white!15!black}},
xlabel={Communication rounds},
xlabel style={ font=\small},
ylabel style={ font=\small},
axis background/.style={fill=white},
ticklabel style={font=\small},
xtick={0,10,20,30,40,50},
xticklabels={{$0$},{$10$},{$20$},{$30$},{$40$},{$50$}},
ytick={55,60,65,70,75,80,85,90},
yticklabels={{$55$},{$60$},{$65$},{$70$},{$75$},{$80$},{$85$},{$90$}},
xmajorgrids,
ymajorgrids, 
legend style={at={(0.17,0.03)}, nodes={scale=0.85, transform shape}, anchor=south west, legend cell align=left, align=left, draw=white!15!black}
]

\addplot [color=black, line width=1.0pt, mark=asterisk, mark options={solid, black}]
table[row sep=crcr]{%
	1	66.4066666666667\\
	2	71.8466666666667\\
	3	73.3\\
	4	74.3033333333333\\
	5	75.58\\
	6	76.6533333333333\\
	7	77.1633333333333\\
	8	77.87\\
	9	78.2233333333333\\
	10	78.65\\
	11	79.2133333333333\\
	12	79.6633333333333\\
	13	80.1533333333333\\
	14	80.41\\
	15	80.91\\
	16	81.1566666666667\\
	17	81.6766666666667\\
	18	81.79\\
	19	81.9433333333333\\
	20	82.2933333333333\\
	21	82.2433333333333\\
	22	82.4533333333333\\
	23	82.3866666666667\\
	24	82.3566666666667\\
	25	82.4633333333333\\
	26	82.5133333333333\\
	27	82.54\\
	28	82.5366666666667\\
	29	82.61\\
	30	82.6033333333333\\
	31	82.8266666666667\\
	32	82.8\\
	33	82.8733333333333\\
	34	82.83\\
	35	82.76\\
	36	82.7966666666667\\
	37	82.6966666666667\\
	38	82.7666666666667\\
	39	82.76\\
	40	82.73\\
	41	82.7323333333333\\
	42	82.7953333333333\\
	43	82.8032666666667\\
	44	82.8266666666667\\
	45	82.8590286666667\\
	46	82.8892006666667\\
	47	82.9055666666667\\
	48	82.9333333333334\\
	49	82.9166666666667\\
	50	82.8436666666667\\
};
\addlegendentry{Random Policy}
\addplot [color=mycolor2, line width=1.0pt, mark=triangle, mark options={solid, rotate=180, mycolor2}]
  table[row sep=crcr]{%
1	66.3666666666667\\
2	71.98\\
3	73.1933333333333\\
4	74.26\\
5	75.52\\
6	76.5033333333333\\
7	77.2833333333333\\
8	78.0366666666667\\
9	78.46\\
10	79.0066666666667\\
11	79.3033333333333\\
12	79.69\\
13	80.2533333333333\\
14	80.5666666666667\\
15	80.7266666666667\\
16	81.1866666666667\\
17	81.5466666666667\\
18	82.1066666666667\\
19	82.1133333333333\\
20	82.46\\
21	82.47\\
22	82.6233333333333\\
23	82.7833333333333\\
24	82.6866666666667\\
25	82.7366666666667\\
26	82.76\\
27	82.8066666666667\\
28	82.82\\
29	82.8866666666667\\
30	82.95\\
31	83.0533333333333\\
32	83.22\\
33	83.37\\
34	83.3633333333333\\
35	83.36\\
36	83.5833333333333\\
37	83.6133333333333\\
38	83.7633333333333\\
39	83.75\\
40	83.9333333333333\\
41	83.86\\
42	83.88\\
43	83.72\\
44	83.8666666666667\\
45	83.6233333333333\\
46	83.5233333333333\\
47	83.6533333333333\\
48	83.5733333333333\\
49	83.63\\
50	83.6233333333333\\
};
\addlegendentry{Greedy Antenna Selection}

\addplot [color=red, line width=1.0pt, mark=o, mark options={solid, red}]
table[row sep=crcr]{%
	1	67.22\\
	2	73.6766666666667\\
	3	75.0733333333333\\
	4	76.18\\
	5	77.0633333333333\\
	6	77.94\\
	7	78.62\\
	8	79.3733333333333\\
	9	80.0133333333333\\
	10	80.4266666666667\\
	11	81.1333333333333\\
	12	81.5766666666667\\
	13	81.9833333333333\\
	14	82.3933333333333\\
	15	82.7833333333333\\
	16	83.1833333333333\\
	17	83.46\\
	18	83.71\\
	19	84.01\\
	20	84.1066666666667\\
	21	84.27\\
	22	84.38\\
	23	84.45\\
	24	84.65\\
	25	84.88\\
	26	85.0566666666667\\
	27	85.1633333333333\\
	28	85.2033333333333\\
	29	85.4\\
	30	85.5066666666667\\
	31	85.69\\
	32	85.76\\
	33	85.85\\
	34	85.9966666666667\\
	35	85.9466666666667\\
	36	86.0166666666667\\
	37	86.1733333333333\\
	38	86.13\\
	39	86.14\\
	40	86.2733333333333\\
	41	86.3066666666667\\
	42	86.3733333333333\\
	43	86.3166666666667\\
	44	86.3533333333333\\
	45	86.48\\
	46	86.44\\
	47	86.39\\
	48	86.41\\
	49	86.5033333333333\\
	50	86.3333333333333\\
};
\addlegendentry{PDD-based}

\addplot [color=blue, line width=1.0pt, mark=x, mark options={solid, blue}]
table[row sep=crcr]{%
	1	67.98\\
	2	73.3966666666667\\
	3	74.8333333333333\\
	4	75.93\\
	5	76.8366666666667\\
	6	77.5933333333333\\
	7	78.2733333333333\\
	8	78.9533333333333\\
	9	79.3\\
	10	79.9666666666667\\
	11	80.3833333333333\\
	12	80.8233333333333\\
	13	81.24\\
	14	81.62\\
	15	82.0633333333333\\
	16	82.41\\
	17	82.6766666666667\\
	18	82.9566666666667\\
	19	83.0466666666667\\
	20	83.2766666666667\\
	21	83.5966666666667\\
	22	83.7333333333333\\
	23	83.9766666666667\\
	24	84.2766666666667\\
	25	84.4433333333333\\
	26	84.47\\
	27	84.7\\
	28	84.7766666666667\\
	29	84.8933333333333\\
	30	84.9633333333333\\
	31	85.14\\
	32	85.2233333333333\\
	33	85.2966666666667\\
	34	85.35\\
	35	85.3633333333333\\
	36	85.4866666666667\\
	37	85.4766666666667\\
	38	85.62\\
	39	85.68\\
	40	85.6166666666667\\
	41	85.5966666666667\\
	42	85.6033333333333\\
	43	85.6766666666667\\
	44	85.45\\
	45	85.6266666666667\\
	46	85.5933333333333\\
	47	85.57\\
	48	85.7733333333333\\
	49	85.71\\
	50	85.7166666666667\\
};
\addlegendentry{FISTA}

\addplot [color=mycolor3, line width=1.0pt, mark=star, mark options={solid, mycolor3}]
  table[row sep=crcr]{%
1	66.02\\
2	73.2666666666667\\
3	74.9733333333333\\
4	76.1866666666667\\
5	77.2033333333333\\
6	78.1\\
7	78.7933333333333\\
8	79.5633333333333\\
9	80.15\\
10	80.7066666666667\\
11	81.14\\
12	81.4966666666667\\
13	81.98\\
14	82.3133333333333\\
15	82.6733333333333\\
16	83.15\\
17	83.41\\
18	83.7266666666667\\
19	83.7166666666667\\
20	83.8833333333333\\
21	84.0933333333333\\
22	84.3066666666667\\
23	84.32\\
24	84.59\\
25	84.8766666666667\\
26	84.9366666666667\\
27	84.8933333333333\\
28	85\\
29	85.0666666666667\\
30	85.1166666666667\\
31	85.2866666666667\\
32	85.3666666666667\\
33	85.3466666666667\\
34	85.3333333333333\\
35	85.38\\
36	85.51\\
37	85.48\\
38	85.5366666666667\\
39	85.6266666666667\\
40	85.5866666666667\\
41	85.5233333333333\\
42	85.7633333333333\\
43	85.65\\
44	85.76\\
45	85.7166666666667\\
46	85.5866666666667\\
47	85.6133333333333\\
48	85.6833333333334\\
49	85.7666666666667\\
50	85.78\\
};
\addlegendentry{Lasso-based}

\addplot [color=mycolor1, dashed, line width=1.0pt]
table[row sep=crcr]{%
	1	67.06\\
	2	74.2366666666667\\
	3	75.81\\
	4	76.8833333333333\\
	5	77.8666666666667\\
	6	78.88\\
	7	79.6933333333333\\
	8	80.4066666666667\\
	9	81.0833333333333\\
	10	81.6533333333333\\
	11	82.1133333333333\\
	12	82.3733333333333\\
	13	82.6166666666667\\
	14	82.9933333333333\\
	15	83.2266666666667\\
	16	83.5166666666667\\
	17	83.73\\
	18	83.9766666666667\\
	19	84.2333333333333\\
	20	84.44\\
	21	84.69\\
	22	84.78\\
	23	84.9533333333333\\
	24	85.17\\
	25	85.26\\
	26	85.4233333333333\\
	27	85.5866666666667\\
	28	85.69\\
	29	85.79\\
	30	85.94\\
	31	85.93\\
	32	86.0866666666667\\
	33	86.0833333333333\\
	34	86.24\\
	35	86.3466666666667\\
	36	86.3366666666667\\
	37	86.3933333333333\\
	38	86.41\\
	39	86.58\\
	40	86.62\\
	41	86.6966666666667\\
	42	86.7133333333333\\
	43	86.8266666666667\\
	44	86.97\\
	45	87.0433333333333\\
	46	87.0366666666667\\
	47	87.1133333333333\\
	48	87.1733333333333\\
	49	87.24\\
	50	87.3533333333333\\
};
\addlegendentry{All Antenna selection}
\end{axis}

\end{tikzpicture}%
		\caption{$L=64$}
		\label{fig2b}
	\end{subfigure}
	\caption{Testing accuracy on  \ac{iid} FMNIST dataset for different number of selected antennas.}
	\label{fig:fig3}
\end{figure*}
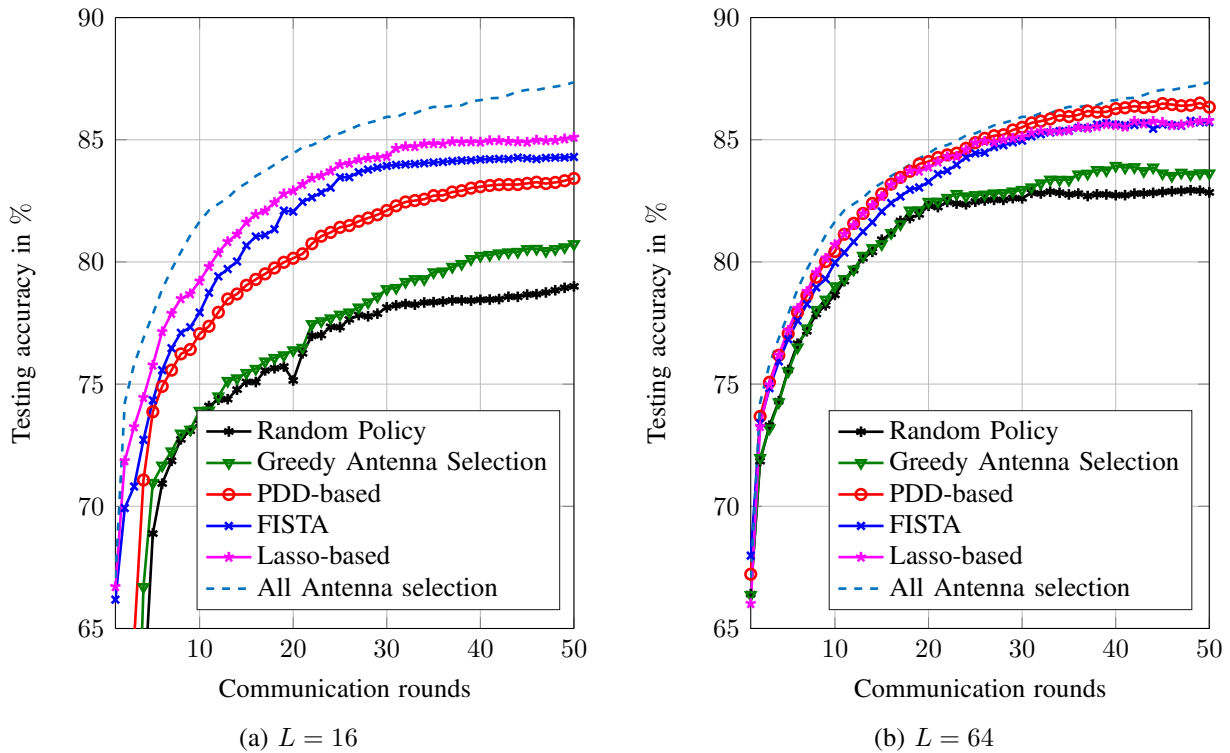

\subsubsection{Impact of Antenna Selection on Aggregation Error} We start our simulations by investigating the impact of antenna selection on aggregation error.  Fig. \ref{fig:fig1} shows the aggregation error against \ac{snr}, defined as $P/\sigma^2$ for $L=16$ and $L=64$. As expected, the figure shows that the error reduces with the increase in SNR.
	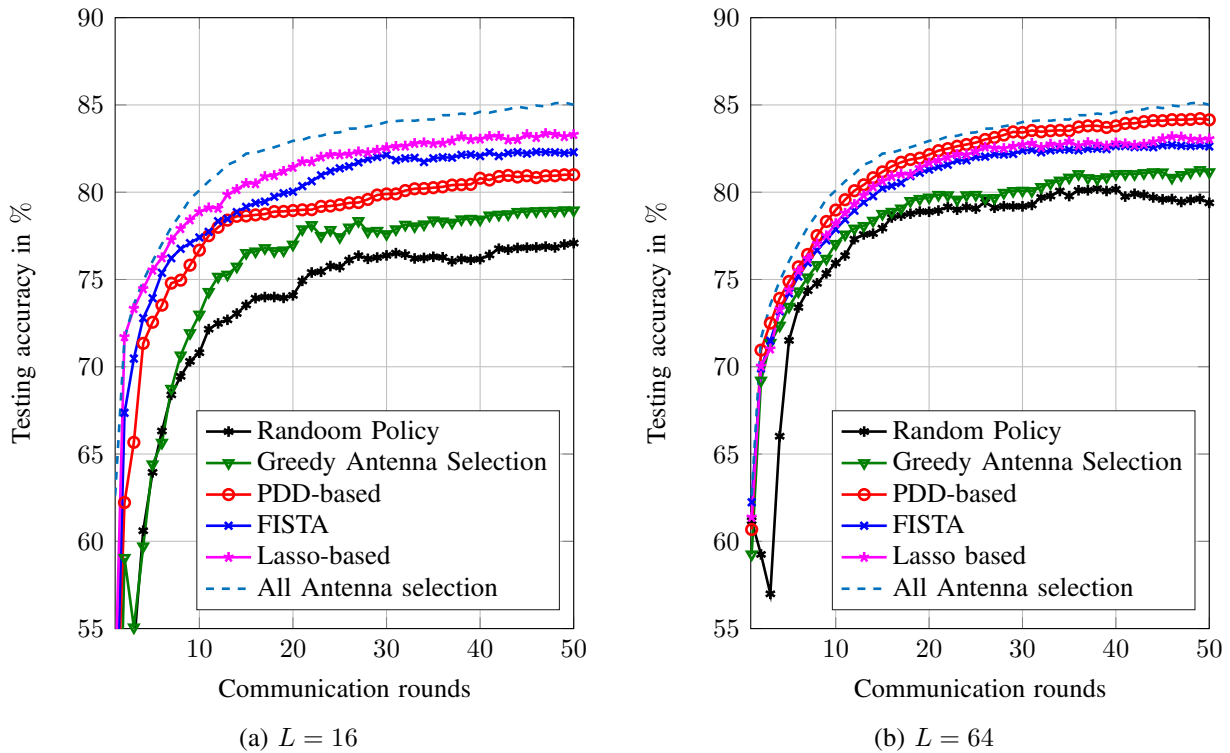
\begin{figure*}[t!]
	\centering
	\begin{subfigure}[t]{0.5\textwidth}
		\centering
%
%
\definecolor{mycolor1}{rgb}{0.00000,0.49804,0.00000}%
\definecolor{mycolor2}{rgb}{1.00000,0.00000,1.00000}%
\definecolor{mycolor3}{rgb}{0.00000,0.44706,0.74118}%
\begin{tikzpicture}

\begin{axis}[%
width=2.4in,
height=3.2in,
at={(2.6in,1.024in)},
scale only axis,
xmin=1,
xmax=50,
xlabel style={font=\color{white!15!black}},
xlabel={Communication rounds},
ymin=55,
ymax=90,
ylabel style={font=\color{white!15!black}},
ylabel={Testing accuracy in $\%$},
xlabel style={ font=\small},
ylabel style={ font=\small},
axis background/.style={fill=white},
ticklabel style={font=\small},
xtick={0,10,20,30,40,50},
xticklabels={{$0$},{$10$},{$20$},{$30$},{$40$},{$50$}},
ytick={55,60,65,70,75,80,85,90},
yticklabels={{$55$},{$60$},{$65$},{$70$},{$75$},{$80$},{$85$},{$90$}},
xmajorgrids,
ymajorgrids,
legend style={at={(0.97,0.03)}, nodes={scale=0.85, transform shape}, anchor=south east, legend cell align=left, align=left, draw=white!15!black}
]
\addplot [color=black, line width=1.0pt, mark=asterisk, mark options={solid, black}]
  table[row sep=crcr]{%
1	29.935\\
2	49.54\\
3	53.95\\
4	60.605\\
5	63.95\\
6	66.32\\
7	68.39\\
8	69.45\\
9	70.31\\
10	70.805\\
11	72.155\\
12	72.485\\
13	72.69\\
14	73.05\\
15	73.54\\
16	73.93\\
17	73.995\\
18	73.995\\
19	73.945\\
20	74.09\\
21	74.92\\
22	75.4\\
23	75.445\\
24	75.77\\
25	75.71\\
26	76.115\\
27	76.365\\
28	76.185\\
29	76.26\\
30	76.38\\
31	76.51\\
32	76.425\\
33	76.205\\
34	76.235\\
35	76.3\\
36	76.27\\
37	76.045\\
38	76.175\\
39	76.13\\
40	76.165\\
41	76.405\\
42	76.765\\
43	76.71\\
44	76.81\\
45	76.825\\
46	76.84\\
47	76.885\\
48	76.835\\
49	77.005\\
50	77.085\\
};
\addlegendentry{Randoom Policy}

\addplot [color=mycolor1, line width=1.0pt, mark=triangle, mark options={solid, rotate=180, mycolor1}]
  table[row sep=crcr]{%
1	36.3\\
2	59.05\\
3	55.07\\
4	59.71\\
5	64.41\\
6	65.65\\
7	68.74\\
8	70.65\\
9	71.93\\
10	72.99\\
11	74.29\\
12	75.16\\
13	75.28\\
14	75.74\\
15	76.52\\
16	76.61\\
17	76.8\\
18	76.63\\
19	76.67\\
20	77\\
21	77.88\\
22	78.12\\
23	77.5\\
24	77.84\\
25	77.43\\
26	77.98\\
27	78.35\\
28	77.75\\
29	77.79\\
30	77.6\\
31	77.88\\
32	78.14\\
33	78.05\\
34	78.16\\
35	78.39\\
36	78.33\\
37	78.26\\
38	78.45\\
39	78.48\\
40	78.44\\
41	78.67\\
42	78.715\\
43	78.741\\
44	78.872\\
45	78.895\\
46	78.919\\
47	78.935\\
48	78.936\\
49	78.969\\
50	78.96\\
};
\addlegendentry{Greedy Antenna Selection}

\addplot [color=red, line width=1.0pt, mark=o, mark options={solid, red}]
table[row sep=crcr]{%
	1	46.4\\
	2	62.22\\
	3	65.67\\
	4	71.34\\
	5	72.55\\
	6	73.53\\
	7	74.79\\
	8	74.96\\
	9	75.82\\
	10	76.68\\
	11	77.51\\
	12	77.99\\
	13	78.4\\
	14	78.62\\
	15	78.64\\
	16	78.7\\
	17	78.75\\
	18	78.89\\
	19	78.9\\
	20	78.95\\
	21	78.99\\
	22	79\\
	23	79.188\\
	24	79.2096\\
	25	79.2814\\
	26	79.3863\\
	27	79.4088\\
	28	79.5862\\
	29	79.84\\
	30	79.9\\
	31	79.9\\
	32	80.0946\\
	33	80.1962\\
	34	80.209\\
	35	80.259\\
	36	80.308\\
	37	80.40862\\
	38	80.43\\
	39	80.448\\
	40	80.8\\
	41	80.7\\
	42	80.9\\
	43	80.962\\
	44	80.87\\
	45	80.945\\
	46	80.837\\
	47	80.94\\
	48	80.943\\
	49	81\\
	50	81\\
};
\addlegendentry{PDD-based }

\addplot [color=blue, line width=1.0pt, mark=x, mark options={solid, blue}]
  table[row sep=crcr]{%
1	46.4\\
2	67.36\\
3	70.47\\
4	72.79\\
5	73.93\\
6	75.38\\
7	76.22\\
8	76.76\\
9	77.08\\
10	77.41\\
11	77.73\\
12	78.34\\
13	78.47\\
14	78.86\\
15	79.16\\
16	79.38\\
17	79.46\\
18	79.74\\
19	79.94\\
20	80.03\\
21	80.35\\
22	80.63\\
23	80.97\\
24	81.2\\
25	81.37\\
26	81.48\\
27	81.71\\
28	81.9\\
29	82.02\\
30	82.14\\
31	81.84\\
32	81.94\\
33	81.97\\
34	81.73\\
35	81.94\\
36	82\\
37	81.98\\
38	82.14\\
39	82.16\\
40	82.07\\
41	82.28\\
42	82.07\\
43	82.24\\
44	82.29\\
45	82.21\\
46	82.31\\
47	82.294\\
48	82.285\\
49	82.24\\
50	82.29\\
};
\addlegendentry{FISTA}

\addplot [color=mycolor2, line width=1.0pt, mark=star, mark options={solid, mycolor2}]
  table[row sep=crcr]{%
1	51.45\\
2	71.72\\
3	73.34\\
4	74.47\\
5	75.54\\
6	76.25\\
7	77.27\\
8	77.92\\
9	78.42\\
10	78.88\\
11	79.11\\
12	79.08\\
13	79.87\\
14	80.16\\
15	80.51\\
16	80.49\\
17	80.9\\
18	80.96\\
19	81.2\\
20	81.43\\
21	81.75\\
22	81.74\\
23	82.02\\
24	82.16\\
25	82.15\\
26	82.19\\
27	82.32\\
28	82.27\\
29	82.41\\
30	82.56\\
31	82.63\\
32	82.65\\
33	82.8\\
34	82.84\\
35	82.79\\
36	82.82\\
37	82.94\\
38	83.18\\
39	83.02\\
40	83.05\\
41	83.2\\
42	83.21\\
43	83\\
44	83.02\\
45	83.31\\
46	83.14\\
47	83.37\\
48	83.32\\
49	83.19\\
50	83.31\\
};
\addlegendentry{Lasso-based}

\addplot [color=mycolor3, dashed, line width=1.0pt]
  table[row sep=crcr]{%
1	62.6233333333333\\
2	71.6366666666667\\
3	73.57\\
4	74.9166666666667\\
5	76.0466666666667\\
6	77.0433333333333\\
7	77.9933333333333\\
8	78.7533333333333\\
9	79.61\\
10	80.1\\
11	80.6233333333333\\
12	81.0533333333333\\
13	81.5666666666667\\
14	81.8\\
15	82.2133333333333\\
16	82.2666666666667\\
17	82.43\\
18	82.5766666666667\\
19	82.75\\
20	82.9433333333333\\
21	83\\
22	83.19\\
23	83.23\\
24	83.3866666666667\\
25	83.43\\
26	83.64\\
27	83.6533333333333\\
28	83.78\\
29	83.8633333333333\\
30	84.02\\
31	84.0733333333333\\
32	84.12\\
33	84.0933333333333\\
34	84.1733333333333\\
35	84.1633333333333\\
36	84.41\\
37	84.4133333333333\\
38	84.5066666666666\\
39	84.4366666666667\\
40	84.6066666666667\\
41	84.5566666666667\\
42	84.6733333333333\\
43	84.75\\
44	84.8966666666667\\
45	84.8133333333333\\
46	84.9666666666667\\
47	84.9133333333333\\
48	85.0966666666667\\
49	85.1366666666667\\
50	85.01\\
};
\addlegendentry{All Antenna selection}

\end{axis}
\end{tikzpicture}%
		\caption{$L=16$} 
		\label{fig3a}
	\end{subfigure}%
	~ 
	\begin{subfigure}[t]{0.5\textwidth}
		\centering
%
%
\definecolor{mycolor1}{rgb}{0.00000,0.44700,0.74100}%
\definecolor{mycolor2}{rgb}{1.00000,0.00000,1.00000}%
\definecolor{mycolor3}{rgb}{0.00000,0.49804,0.00000}%
\begin{tikzpicture}

\begin{axis}[%
width=2.4in,
height=3.2in,
at={(2.6in,1.024in)},
scale only axis,
xmin=0.9,
xmax=50,
xlabel style={font=\color{white!15!black}},
xlabel={Communication rounds},
ymin=55,
ymax=90,
ylabel style={font=\color{white!15!black}},
ylabel={Testing accuracy in $\%$},
xlabel style={ font=\small},
ylabel style={ font=\small},
ticklabel style={font=\small},
xtick={0,10,20,30,40,50},
xticklabels={{$0$},{$10$},{$20$},{$30$},{$40$},{$50$}},
ytick={55,60,65,70,75,80,85,90},
yticklabels={{$55$},{$60$},{$65$},{$70$},{$75$},{$80$},{$85$},{$90$}},
axis background/.style={fill=white},
xmajorgrids,
ymajorgrids,
legend style={at={(0.97,0.03)}, anchor=south east,  nodes={scale=0.85, transform shape}, legend cell align=left, align=left, draw=white!15!black}
]

\addplot [color=black, line width=1.0pt, mark=asterisk, mark options={solid, black}]
  table[row sep=crcr]{%
1	61.18\\
2	59.25\\
3	56.99\\
4	66.02\\
5	71.51\\
6	73.4\\
7	74.36\\
8	74.78\\
9	75.35\\
10	75.94\\
11	76.37\\
12	77.31\\
13	77.58\\
14	77.63\\
15	77.94\\
16	78.55\\
17	78.66\\
18	78.77\\
19	78.89\\
20	78.88\\
21	78.95\\
22	79.16\\
23	79\\
24	79.15\\
25	79.06\\
26	79.42\\
27	79.11\\
28	79.2\\
29	79.19\\
30	79.19\\
31	79.27\\
32	79.7\\
33	79.78\\
34	80.07\\
35	79.78\\
36	80.11\\
37	80.11\\
38	80.21\\
39	80.07\\
40	80.17\\
41	79.75\\
42	79.91\\
43	79.84\\
44	79.7\\
45	79.59\\
46	79.61\\
47	79.44\\
48	79.56\\
49	79.62\\
50	79.39\\
};
\addlegendentry{Random Policy}

\addplot [color=mycolor3, line width=1.0pt, mark=triangle, mark options={solid, rotate=180, mycolor3}]
  table[row sep=crcr]{%
1	59.25\\
2	69.21\\
3	71.37\\
4	72.36\\
5	73.44\\
6	74.32\\
7	75.12\\
8	75.83\\
9	76.19\\
10	77.04\\
11	77.58\\
12	77.92\\
13	78.07\\
14	78.37\\
15	78.76\\
16	78.88\\
17	79.07\\
18	79.5\\
19	79.62\\
20	79.7\\
21	79.85\\
22	79.83\\
23	79.58\\
24	79.83\\
25	79.85\\
26	79.66\\
27	79.67\\
28	79.97\\
29	80.07\\
30	80.09\\
31	80.07\\
32	80.3\\
33	80.51\\
34	80.67\\
35	80.84\\
36	81.04\\
37	80.81\\
38	80.76\\
39	80.86\\
40	81.04\\
41	81.05\\
42	81.04\\
43	81.1\\
44	81.15\\
45	81.14\\
46	80.83\\
47	80.99\\
48	81.14\\
49	81.27\\
50	81.15\\
};
\addlegendentry{Greedy Antenna Selection}

\addplot [color=red, line width=1.0pt, mark=o, mark options={solid, red}]
table[row sep=crcr]{%
	1	60.6766666666667\\
	2	70.9533333333333\\
	3	72.5\\
	4	73.9166666666667\\
	5	74.89\\
	6	75.7233333333333\\
	7	76.4333333333333\\
	8	77.5133333333333\\
	9	78.3233333333333\\
	10	78.98\\
	11	79.57\\
	12	80.0866666666667\\
	13	80.4233333333333\\
	14	80.83\\
	15	81.1666666666667\\
	16	81.4766666666667\\
	17	81.7166666666667\\
	18	81.8733333333333\\
	19	81.9533333333333\\
	20	82.1533333333333\\
	21	82.38\\
	22	82.4966666666667\\
	23	82.6333333333333\\
	24	82.7333333333333\\
	25	82.8433333333333\\
	26	83.02\\
	27	83.1333333333333\\
	28	83.2766666666667\\
	29	83.4566666666667\\
	30	83.42\\
	31	83.5366666666667\\
	32	83.4633333333333\\
	33	83.52\\
	34	83.5466666666667\\
	35	83.5133333333333\\
	36	83.74\\
	37	83.8033333333333\\
	38	83.8066666666667\\
	39	83.71\\
	40	83.7966666666667\\
	41	83.94\\
	42	83.95\\
	43	84.0433333333333\\
	44	84.1033333333333\\
	45	84.0966666666666\\
	46	84.1566666666667\\
	47	84.1533333333333\\
	48	84.17\\
	49	84.2166666666667\\
	50	84.1466666666667\\
};
\addlegendentry{PDD-based }

\addplot [color=blue, line width=1.0pt, mark=x, mark options={solid, blue}]
table[row sep=crcr]{%
	1	62.2366666666667\\
	2	69.91\\
	3	71.4833333333333\\
	4	73.1833333333333\\
	5	74.21\\
	6	75.17\\
	7	75.9666666666667\\
	8	76.6766666666667\\
	9	77.2433333333333\\
	10	77.8766666666667\\
	11	78.43\\
	12	78.9433333333333\\
	13	79.42\\
	14	79.7733333333333\\
	15	80.2266666666667\\
	16	80.3633333333333\\
	17	80.53\\
	18	80.9233333333333\\
	19	81.1166666666667\\
	20	81.2966666666667\\
	21	81.4533333333333\\
	22	81.5766666666667\\
	23	81.8133333333333\\
	24	81.83\\
	25	82.02\\
	26	82.04\\
	27	82.17\\
	28	82.16\\
	29	82.1966666666667\\
	30	82.3866666666667\\
	31	82.4233333333333\\
	32	82.2866666666667\\
	33	82.3966666666667\\
	34	82.4466666666667\\
	35	82.4333333333333\\
	36	82.4\\
	37	82.4866666666667\\
	38	82.4966666666667\\
	39	82.47\\
	40	82.64\\
	41	82.6433333333333\\
	42	82.5833333333333\\
	43	82.6033333333334\\
	44	82.5766666666667\\
	45	82.6866666666667\\
	46	82.7033333333333\\
	47	82.66\\
	48	82.6433333333333\\
	49	82.6733333333333\\
	50	82.62\\
};
\addlegendentry{FISTA}

\addplot [color=mycolor2, line width=1.0pt, mark=star, mark options={solid, mycolor2}]
table[row sep=crcr]{%
	1	61.38\\
	2	70.03\\
	3	71.02\\
	4	73.29\\
	5	74.4\\
	6	75.49\\
	7	76.2\\
	8	77.05\\
	9	77.51\\
	10	78.21\\
	11	78.8\\
	12	79.31\\
	13	79.82\\
	14	80.31\\
	15	80.64\\
	16	80.9\\
	17	81.04\\
	18	81.02\\
	19	81.41\\
	20	81.68\\
	21	81.83\\
	22	81.99\\
	23	82.17\\
	24	82.15\\
	25	82.34\\
	26	82.59\\
	27	82.54\\
	28	82.51\\
	29	82.64\\
	30	82.73\\
	31	82.81\\
	32	82.57\\
	33	82.76\\
	34	82.7\\
	35	82.91\\
	36	82.66\\
	37	82.85\\
	38	82.76\\
	39	82.66\\
	40	82.83\\
	41	82.78\\
	42	82.7\\
	43	82.76\\
	44	82.89\\
	45	83.07\\
	46	83.19\\
	47	83.17\\
	48	83.05\\
	49	83.02\\
	50	83.06\\
};
\addlegendentry{Lasso based }

\addplot [color=mycolor1, dashed, line width=1.0pt]
table[row sep=crcr]{%
	1	62.6233333333333\\
	2	71.6366666666667\\
	3	73.57\\
	4	74.9166666666667\\
	5	76.0466666666667\\
	6	77.0433333333333\\
	7	77.9933333333333\\
	8	78.7533333333333\\
	9	79.61\\
	10	80.1\\
	11	80.6233333333333\\
	12	81.0533333333333\\
	13	81.5666666666667\\
	14	81.8\\
	15	82.2133333333333\\
	16	82.2666666666667\\
	17	82.43\\
	18	82.5766666666667\\
	19	82.75\\
	20	82.9433333333333\\
	21	83\\
	22	83.19\\
	23	83.23\\
	24	83.3866666666667\\
	25	83.43\\
	26	83.64\\
	27	83.6533333333333\\
	28	83.78\\
	29	83.8633333333333\\
	30	84.02\\
	31	84.0733333333333\\
	32	84.12\\
	33	84.0933333333333\\
	34	84.1733333333333\\
	35	84.1633333333333\\
	36	84.41\\
	37	84.4133333333333\\
	38	84.5066666666666\\
	39	84.4366666666667\\
	40	84.6066666666667\\
	41	84.5566666666667\\
	42	84.6733333333333\\
	43	84.75\\
	44	84.8966666666667\\
	45	84.8133333333333\\
	46	84.9666666666667\\
	47	84.9133333333333\\
	48	85.0966666666667\\
	49	85.1366666666667\\
	50	85.01\\
};
\addlegendentry{All Antenna selection}
\end{axis}

\end{tikzpicture}%
		\caption{ $L=64$}
		\label{fig3b}
	\end{subfigure}
	\caption{Testing accuracy on non-\ac{iid} FMNIST dataset for different number of selected antennas}
	\label{fig:fig4}
\end{figure*}
 It is further seen in the figure that the \ac{fista}  closely tracks the Lasso-based algorithm. This is expected, as both the algorithms select the antennas based on $\ell_1$-norm minimization. The close match of both algorithms further indicates that the extra tuning and computational complexity of the Lasso-based scheme does not gain that much in terms of performance and that the \ac{fista} scheme is a good approximation for $\ell_1$-norm minimization problem. Comparing the cases of $L=16$ and $L=64$, it is further seen that the Lasso-based and \ac{fista} algorithms outperform the \ac{pdd} scheme for $L\ll N$; whereas, the \ac{pdd} scheme starts to be superior as $L$ increases substantially. This follows from the fact that both the Lasso-based and \ac{fista} inherently work based on the $\ell_1$-norm approximation of optimal sparsity recovery that is known to perform close to optimal for considerably sparse signals, i.e., $L\ll N$. For large choices of $L$, the selection matrix is not significantly sparse anymore, and thus the sparse-recovery-based strategies perform poorly as compared to the \ac{pdd} scheme. It is further observed in both the figures that as the number of selected antennas increases, the gap between different approaches shrinks. This is expected as in the extreme case of $L=N$ all algorithms perform the same. 

\subsubsection{Independent vs Correlated Channel Fading}
In Fig.\ref{fig:fig2}, we investigate the impact of the number of selected antennas on the aggregation error. It is observed that the aggregation error decreases for all the schemes as the number of selected antennas (or equivalently the number of \ac{rf} chains) increases. This is the direct result of the fact that with large active beams at the \ac{ps}, any desired combination of the signals can be readily calculated over the air, and hence the interference among the devices can be avoided. Despite this behavior, the figure indicates that a reasonable aggregation error is achievable by utilizing even less than $50\%$ of the available antennas at the \ac{ps}.  Fig.\ref{fig:fig2} shows the same behavior as the one seen in Fig.~\ref{fig:fig1}. For sake of comparison, we have further plotted the figure for both the standard Rayleigh fading (Fig.~\ref{fig:fig2}(a)) and the correlated Rayleigh fading (Fig.~\ref{fig:fig2}(b)) specified by the correlation model described in Section~\ref{sec:comm_sett}. From the figure, we can see a slight improvement with the idealistic standard Rayleigh model. This difference is however considerably slight, as our proposed schemes compensate for the channel impact via proper beamforming and power scaling. 
\subsubsection{Testing Accuracy} In Fig.~\ref{fig:fig3}, the testing accuracy for the training of the \ac{cnn} over the homogeneous network (i.i.d. data distribution) with FMNIST dataset is plotted against the number of communication rounds. The observed behaviors in Figs.~\ref{fig:fig1} and \ref{fig:fig2} are further seen in this figure: the proposed schemes outperform the baselines. With a small number of active antennas, Lasso and \ac{fista} perform the best, while for larger values of $L$, \ac{pdd} shows the best performance. The results further show that the testing accuracy of all-antenna selection policy can be achieved by the proposed scheme, with even less than $50\%$ of antennas being active. This result indicates that with a proper antenna selection strategy, a desired learning performance can be achieved in the network at a considerably lower implementation cost. 
\subsubsection{Heterogeneous Data Sets} The results for a heterogeneous network (non-i.i.d. data distribution) is further shown in Fig.~\ref{fig:fig4}. The figure reports the same observation with smaller gap among different schemes which is, due to the data heterogeneity. We finally show the results for the CIFAR-10 dataset with both \ac{iid} and non-\ac{iid} distributions in Fig.~\ref{fig:fig5}, considering the case with $L=64$ active antennas. As the figure depicts, unlike the baseline schemes, the proposed schemes show higher robustness against data heterogeneity. This follows the fact that the proposed schemes take both the learning and communication aspects into account.
	\begin{figure*}[t!]
	\centering
	\begin{subfigure}[t]{0.5\textwidth}
		\centering
%
%
\definecolor{mycolor1}{rgb}{0.00000,0.44706,0.74118}%
\definecolor{mycolor2}{rgb}{0.00000,0.49804,0.00000}%
\definecolor{mycolor3}{rgb}{1.00000,0.00000,1.00000}%
\begin{tikzpicture}

\begin{axis}[%
width=2.4in,
height=3.2in,
at={(2.6in,1.025in)},
scale only axis,
xlabel style={font=\color{white!15!black}},
xmin=0.1,
xmax=50.1,
ymin=9.86,
ymax=70.08271565109572,
ticklabel style={font=\small},
xtick={0,10,20,30,40,50},
xticklabels={{$0$},{$10$},{$20$},{$30$},{$40$},{$50$}},
ytick={0,10,20,30,40,50,60,70},
yticklabels={{$0$},{$10$},{$20$},{$30$},{$40$},{$50$},{$60$},{$70$}},
xlabel={Communication rounds},
ylabel={Testing accuracy in $\%$},
xlabel style={ font=\small},
ylabel style={ font=\small},
axis background/.style={fill=white},
xmajorgrids,
ymajorgrids,
legend style={at={(0.97,0.03)},  nodes={scale=0.85, transform shape}, anchor=south east, legend cell align=left, align=left, draw=white!15!black}
]

\addplot [color=black, line width=1.0pt, mark=asterisk, mark options={solid, black}]
  table[row sep=crcr]{%
1	9.86\\
2	13\\
3	25.14\\
4	27.85\\
5	30.22\\
6	33\\
7	36.07\\
8	38.38\\
9	39.66\\
10	40.84\\
11	41.93\\
12	43.29\\
13	44.36\\
14	45.03\\
15	45.51\\
16	46.14\\
17	46.61\\
18	46.9\\
19	47.55\\
20	47.81\\
21	47.81\\
22	47.72\\
23	48.37\\
24	48.63\\
25	49.26\\
26	49.73\\
27	49.65\\
28	49.87\\
29	49.92\\
30	50.19\\
31	50.21\\
32	50.45\\
33	50.87\\
34	51.09\\
35	51.64\\
36	52.01\\
37	52.74\\
38	53.17\\
39	53.6\\
40	53.85\\
41	54.14\\
42	53.96\\
43	54.44\\
44	54.35\\
45	54.52\\
46	54.69\\
47	54.81\\
48	54.98\\
49	55.21\\
50	55.53\\
};
\addlegendentry{Random Policy}

\addplot [color=mycolor2, line width=1.0pt, mark=triangle, mark options={solid, rotate=180, mycolor2}]
  table[row sep=crcr]{%
1	10.64\\
2	15.0675\\
3	25.67\\
4	28.47\\
5	31.01\\
6	34.1925\\
7	36.795\\
8	38.69\\
9	39.9325\\
10	40.9925\\
11	42.12\\
12	43.3125\\
13	44.225\\
14	45.38\\
15	46.175\\
16	46.9075\\
17	47.6275\\
18	48.2225\\
19	48.595\\
20	49.1475\\
21	49.705\\
22	50.2475\\
23	50.56\\
24	51.02\\
25	51.35\\
26	51.6625\\
27	51.9625\\
28	52.41\\
29	52.9625\\
30	53.175\\
31	53.4075\\
32	53.74\\
33	53.975\\
34	54.21\\
35	54.3625\\
36	54.5875\\
37	54.7525\\
38	54.85\\
39	55.26\\
40	55.6475\\
41	55.7875\\
42	56.055\\
43	56.3975\\
44	56.6\\
45	56.6975\\
46	56.805\\
47	56.9725\\
48	57.085\\
49	57.2075\\
50	57.2675\\
};
\addlegendentry{Greedy Antenna Selection}

\addplot [color=red, line width=1.0pt, mark=o, mark options={solid, red}]
table[row sep=crcr]{%
	1	14.715\\
	2	29.145\\
	3	34.845\\
	4	37.32\\
	5	38.775\\
	6	40.69\\
	7	42.63\\
	8	43.71\\
	9	44.01\\
	10	45.37\\
	11	45.85\\
	12	47.835\\
	13	48.555\\
	14	50.61\\
	15	52.22\\
	16	53.26\\
	17	53.96\\
	18	54.885\\
	19	55.795\\
	20	56.53\\
	21	57.025\\
	22	57.51\\
	23	58.055\\
	24	58.445\\
	25	58.815\\
	26	59.295\\
	27	59.535\\
	28	59.7\\
	29	60.25\\
	30	60.805\\
	31	60.555\\
	32	60.935\\
	33	61.305\\
	34	61.78\\
	35	61.425\\
	36	61.655\\
	37	61.805\\
	38	61.915\\
	39	62.09\\
	40	62.17\\
	41	62.205\\
	42	62.39\\
	43	62.605\\
	44	62.825\\
	45	62.815\\
	46	63.595\\
	47	63.955\\
	48	64.015\\
	49	64.265\\
	50	64.175\\
};
\addlegendentry{PDD-based}
\addplot [color=blue, line width=1.0pt, mark=x, mark options={solid, blue}]
table[row sep=crcr]{%
	1	14.84\\
	2	18.74\\
	3	29.7975\\
	4	32.2625\\
	5	33.765\\
	6	36.655\\
	7	39.835\\
	8	41.93\\
	9	43.1675\\
	10	44.095\\
	11	45.335\\
	12	46.5075\\
	13	47.7125\\
	14	48.635\\
	15	49.24\\
	16	49.9325\\
	17	50.3325\\
	18	50.9925\\
	19	51.7\\
	20	52.4025\\
	21	52.7925\\
	22	53.44\\
	23	53.8975\\
	24	54.285\\
	25	54.7475\\
	26	55.365\\
	27	55.3075\\
	28	55.7425\\
	29	56.12\\
	30	56.9325\\
	31	57.3475\\
	32	58.005\\
	33	58.405\\
	34	58.58\\
	35	58.9575\\
	36	59.425\\
	37	59.605\\
	38	59.82\\
	39	60.3175\\
	40	60.67\\
	41	61.075\\
	42	60.9975\\
	43	61.22\\
	44	61.19775\\
	45	61.18275\\
	46	61.278\\
	47	61.28575\\
	48	61.26\\
	49	61.375\\
	50	61.5325\\
};
\addlegendentry{FISTA}

\addplot [color=mycolor3, line width=1.0pt, mark=star, mark options={solid, mycolor3}]
  table[row sep=crcr]{%
1	15.52\\
2	19.32\\
3	30.84\\
4	33.7\\
5	36.04\\
6	39.77\\
7	42.59\\
8	44.56\\
9	45.57\\
10	46.67\\
11	48.05\\
12	49.27\\
13	50.03\\
14	50.94\\
15	51.33\\
16	51.93\\
17	52.32\\
18	52.89\\
19	53.57\\
20	53.72\\
21	54.06\\
22	55.07\\
23	55.41\\
24	55.84\\
25	56.52\\
26	56.7\\
27	57.25\\
28	57.84\\
29	58.35\\
30	58.45\\
31	58.62\\
32	58.86\\
33	59.25\\
34	59.49\\
35	59.65\\
36	59.69\\
37	59.96\\
38	60.23\\
39	60.85\\
40	61.08\\
41	60.92\\
42	61.11\\
43	61.58\\
44	61.73\\
45	61.97\\
46	61.95\\
47	61.88\\
48	62.05\\
49	62.06\\
50	62.34\\
};
\addlegendentry{Lasso-based }

\addplot [color=mycolor1, dashed, line width=1.0pt]
table[row sep=crcr]{%
	1	11.15\\
	2	29.34\\
	3	35.17\\
	4	38.53\\
	5	40.44\\
	6	42.87\\
	7	44.86\\
	8	46.53\\
	9	48.22\\
	10	48.96\\
	11	49.24\\
	12	49.69\\
	13	51.74\\
	14	53.07\\
	15	53.78\\
	16	54.64\\
	17	56.41\\
	18	56.87\\
	19	57.26\\
	20	57.5\\
	21	58.16\\
	22	58.18\\
	23	58.97\\
	24	59.3\\
	25	60.09\\
	26	60.11\\
	27	60.53\\
	28	60.69\\
	29	60.96\\
	30	61.78\\
	31	62.04\\
	32	62.33\\
	33	62.74\\
	34	62.91\\
	35	63.28\\
	36	63.31\\
	37	63.72\\
	38	64.39\\
	39	64.05\\
	40	64.28\\
	41	64.29\\
	42	64.24\\
	43	64.61\\
	44	64.92\\
	45	64.99\\
	46	65.02\\
	47	65.3\\
	48	65.27\\
	49	65.44\\
	50	65.33\\
};
\addlegendentry{All Antenna selection}
\end{axis}
\end{tikzpicture}%
		\caption{i.i.d CIFAR10 dataset}
		\label{fig4a}
	\end{subfigure}%
	~ 
	\begin{subfigure}[t]{0.5\textwidth}
		\centering
%
%
\definecolor{mycolor1}{rgb}{0.00000,0.44706,0.74118}%
\definecolor{mycolor2}{rgb}{0.00000,0.49804,0.00000}%
\definecolor{mycolor3}{rgb}{1.00000,0.00000,1.00000}%
\begin{tikzpicture}

\begin{axis}[%
width=2.4in,
height=3.2in,
at={(2.6in,1.024in)},
scale only axis,
xmin=1,
xmax=50,
ymin=9,
ymax=60,
xlabel={Communication rounds},
ticklabel style={font=\small},
xtick={0,10,20,30,40,50},
xticklabels={{$0$},{$10$},{$20$},{$30$},{$40$},{$50$}},
ytick={0,10,20,30,40,50,60},
yticklabels={{$0$},{$10$},{$20$},{$30$},{$40$},{$50$},{$60$}},
ylabel={Testing accuracy in $\%$},
xlabel style={ font=\small},
ylabel style={ font=\small},
axis background/.style={fill=white},
xmajorgrids,
ymajorgrids,
legend style={at={(0.18,0.03)},  nodes={scale=0.85, transform shape}, anchor=south west, legend cell align=left, align=left, draw=white!15!black}
]

\addplot [color=black, line width=1.0pt, mark=asterisk, mark options={solid, black}]
table[row sep=crcr]{%
	1	15.84\\
	2	18.32\\
	3	18.6025\\
	4	20.3575\\
	5	20.2925\\
	6	20.8325\\
	7	22.8325\\
	8	23.415\\
	9	24.335\\
	10	24.0125\\
	11	25.335\\
	12	26.8\\
	13	27.04\\
	14	28.0875\\
	15	28.7575\\
	16	29.4975\\
	17	29.61\\
	18	29.89\\
	19	29.58\\
	20	29.305\\
	21	30.155\\
	22	32.725\\
	23	33.3075\\
	24	33.7475\\
	25	34.0525\\
	26	35.245\\
	27	36.44\\
	28	36.62725\\
	29	36.7775\\
	30	36.925\\
	31	37.325\\
	32	37.5275\\
	33	36.96\\
	34	36.9775\\
	35	37.7825\\
	36	38.865\\
	37	38.665\\
	38	38.1075\\
	39	38.2675\\
	40	37.9175\\
	41	37.435\\
	42	38.145\\
	43	39.045\\
	44	39.2225\\
	45	38.77\\
	46	39.1475\\
	47	39.0925\\
	48	39.52\\
	49	39.7325\\
	50	39.215\\
};
\addlegendentry{Random Policy}
\addplot [color=mycolor2, line width=1.0pt, mark=triangle, mark options={solid, rotate=180, mycolor2}]
table[row sep=crcr]{%
	1	14.8\\
	2	24.86\\
	3	26.06\\
	4	25.73\\
	5	26.61\\
	6	27.98\\
	7	28.52\\
	8	29.36\\
	9	30.61\\
	10	31.19\\
	11	34.502\\
	12	35.31\\
	13	34.75\\
	14	34.91\\
	15	35.87\\
	16	36.64\\
	17	37.7\\
	18	37.9\\
	19	38.1\\
	20	38.33\\
	21	39.12\\
	22	40.79\\
	23	40.7\\
	24	40.68\\
	25	41.15\\
	26	41.42\\
	27	42.12\\
	28	41.69\\
	29	42.21\\
	30	42.72\\
	31	42.74\\
	32	42.91\\
	33	43.21\\
	34	44.08\\
	35	45.87\\
	36	45.42\\
	37	45.5\\
	38	45.88\\
	39	45.88\\
	40	46\\
	41	46.1\\
	42	46.5\\
	43	45\\
	44	46.6\\
	45	46.7\\
	46	47\\
	47	46\\
	48	46\\
	49	46.5\\
	50	47.01\\
};
\addlegendentry{Greedy Antenna Selection}

\addplot [color=red, line width=1.0pt, mark=o, mark options={solid, red}]
table[row sep=crcr]{%
	1	7.74\\
	2	11.63\\
	3	23.56\\
	4	25.8\\
	5	28.24\\
	6	30.95\\
	7	34.47\\
	8	36.62\\
	9	38.19\\
	10	38.78\\
	11	39.68\\
	12	40.93\\
	13	42.04\\
	14	43.88\\
	15	44.37\\
	16	44.88\\
	17	45.78\\
	18	46.68\\
	19	47.3\\
	20	47.84\\
	21	48.56\\
	22	49.04\\
	23	49.31\\
	24	50\\
	25	50.5\\
	26	50.36\\
	27	50.53\\
	28	51.11\\
	29	51\\
	30	51.19\\
	31	51.93\\
	32	51.92\\
	33	52.42\\
	34	52.66\\
	35	52.47\\
	36	53.19\\
	37	53.66\\
	38	53.3\\
	39	53.3\\
	40	53.45\\
	41	53.83\\
	42	53.93\\
	43	53.92\\
	44	54.07\\
	45	54.03\\
	46	54\\
	47	54.26\\
	48	54.28\\
	49	54.48\\
	50	54.62\\
};
\addlegendentry{PDD-based}

\addplot [color=blue, line width=1.0pt, mark=x, mark options={solid, blue}]
  table[row sep=crcr]{%
1	5.84\\
2	9.74\\
3	20.7975\\
4	23.2625\\
5	24.765\\
6	27.655\\
7	30.835\\
8	32.93\\
9	34.1675\\
10	35.095\\
11	36.335\\
12	37.5075\\
13	38.7125\\
14	39.635\\
15	40.24\\
16	40.9325\\
17	41.3325\\
18	41.9925\\
19	42.7\\
20	43.4025\\
21	43.7925\\
22	44.44\\
23	44.8975\\
24	45.285\\
25	45.7475\\
26	46.365\\
27	46.3075\\
28	46.7425\\
29	47.12\\
30	47.9325\\
31	48.3475\\
32	49.005\\
33	49.405\\
34	49.58\\
35	49.9575\\
36	50.425\\
37	50.605\\
38	50.82\\
39	51.3175\\
40	51.67\\
41	52.075\\
42	51.9975\\
43	52.22\\
44	51.9775\\
45	51.8275\\
46	51.78\\
47	51.8575\\
48	52.06\\
49	52.175\\
50	52.5325\\
};
\addlegendentry{FISTA}

\addplot [color=mycolor3, line width=1.0pt, mark=star, mark options={solid, mycolor3}]
table[row sep=crcr]{%
1	5.73\\
2	8.95\\
3	20.56\\
4	23.39\\
5	24.93\\
6	27.77\\
7	30.99\\
8	33.85\\
9	35.12\\
10	35.84\\
11	36.64\\
12	38.16\\
13	39.27\\
14	40.62\\
15	41.14\\
16	41.59\\
17	42.19\\
18	42.65\\
19	44.02\\
20	44.6\\
21	45.2\\
22	45.71\\
23	46.46\\
24	46.43\\
25	47.16\\
26	47.88\\
27	47.87\\
28	48.46\\
29	48.74\\
30	49.49\\
31	50.34\\
32	50.53\\
33	50.84\\
34	50.9\\
35	51\\
36	51.68\\
37	51.42\\
38	51.37\\
39	51.53\\
40	51.92\\
41	52.72\\
42	52.65\\
43	52.63\\
44	52.34\\
45	52.18\\
46	52.36\\
47	52.74\\
48	52.26\\
49	52.6\\
50	52.68\\
};
\addlegendentry{Lasso-based}

\addplot [color=mycolor1, dashed, line width=1.0pt]
table[row sep=crcr]{%
	1	10.64\\
	2	15.0675\\
	3	25.67\\
	4	28.47\\
	5	31.01\\
	6	34.1925\\
	7	36.795\\
	8	38.69\\
	9	39.9325\\
	10	40.9925\\
	11	42.12\\
	12	43.3125\\
	13	44.225\\
	14	45.38\\
	15	46.175\\
	16	46.9075\\
	17	47.6275\\
	18	48.2225\\
	19	48.595\\
	20	49.1475\\
	21	49.705\\
	22	50.2475\\
	23	50.56\\
	24	51.02\\
	25	51.35\\
	26	51.6625\\
	27	51.9625\\
	28	52.41\\
	29	52.9625\\
	30	53.175\\
	31	53.4075\\
	32	53.74\\
	33	53.975\\
	34	54.21\\
	35	54.3625\\
	36	54.5875\\
	37	54.7525\\
	38	54.85\\
	39	55.26\\
	40	55.6475\\
	41	55.7875\\
	42	56.055\\
	43	56.3975\\
	44	56.6\\
	45	56.6975\\
	46	56.805\\
	47	56.9725\\
	48	57.085\\
	49	57.2075\\
	50	57.2675\\
};
\addlegendentry{All Antenna selection}
\end{axis}
\end{tikzpicture}%
		\caption{Non-\ac{iid} CIFAR10 dataset}
		\label{fig4b}
	\end{subfigure}
	\caption{Testing accuracy on CIFAR10 dataset for both \ac{iid} and non-\ac{iid} dataset when $L=64$ and SNR=20 dB.}
	\label{fig:fig5}
\end{figure*}
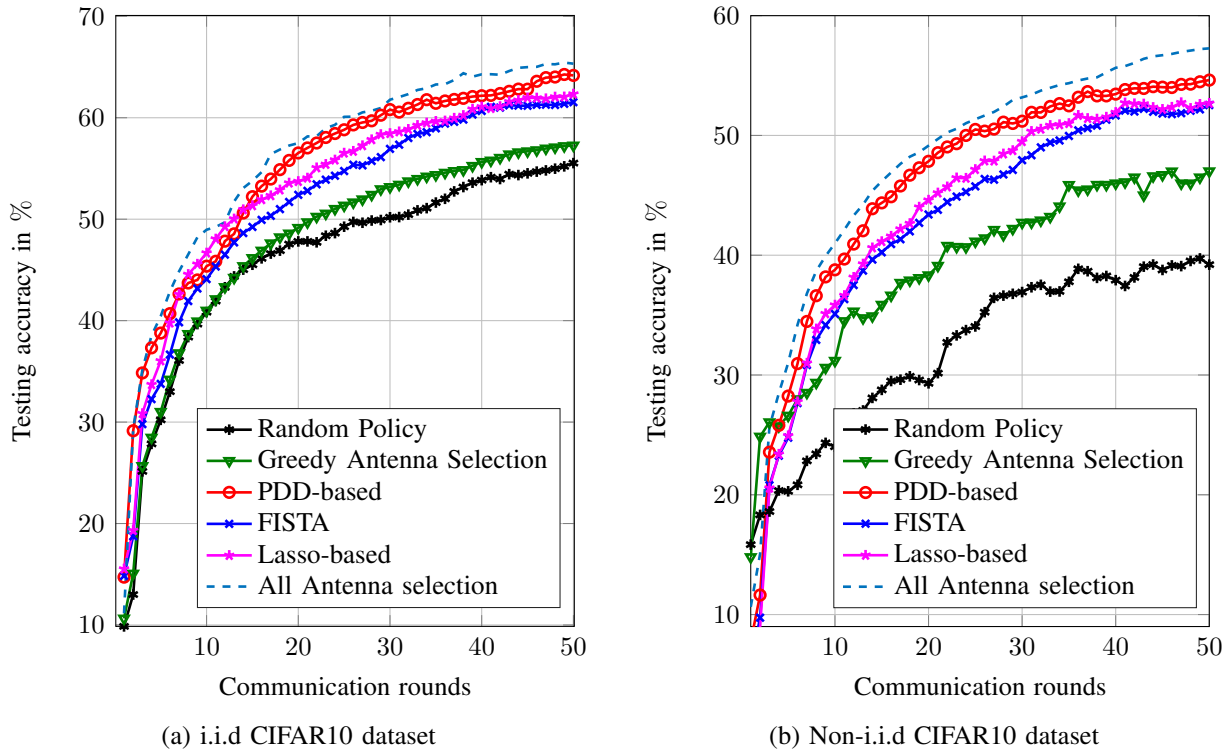
\color{black}
\section{Conclusions}
This paper studied joint communication and aggregation design for \ac{ota-fl} in massive \ac{mimo} systems with reduced \ac{rf} complexity. Our design invoked the \ac{as} scheme to reduce the implementational cost and complexity. We addressed the joint design problem by developing three algorithms based on the \ac{pdd}, Lasso and soft-thresholding methods, catering to a variety of network scenarios with different computational complexity requirements. %
Our investigations demonstrate that the algorithms based on Lasso and soft-thresholding perform closely with a minor performance-complexity trade-off: the Lasso-type algorithm slightly outperforms the soft-thresholding approach, while the latter requires less efforts for tuning and thus has lower complexity. This behavior was expected, since both approaches select antennas via $\ell_1$-norm sparse recovery. The \ac{pdd}-based algorithm however performs differently as compared to the sparse-recovery-based techniques. When the number of active antennas is small, the Lasso-type and soft-thresholding-based algorithms outperform the \ac{pdd}-based approach. Nevertheless, as the number of active antennas increases, the \ac{pdd}-based approach starts to perform superior. This observation follows the fact that both the Lasso-type and soft-thresholding-based algorithms are sparse recovery techniques which perform well at high sparsity, i.e., extremely lower number of active antennas as compared to the \ac{ps} array size. Comparing our results (with only a few active antennas) with the all antenna selection scenario (with all the antennas active), we found that the achievable testing accuracy closely tracks the all antenna selection performance. This indicates the effectiveness of \ac{as}-based massive \ac{mimo} transmission for \ac{ota-fl}.
\appendices
\section{Proof of Theorem~\ref{theorem:P4}}
\label{Proof_th_P4}
We start the proof by considering a selection vector $\bss$ whose $\ell_0$-norm reads $\norm{\bss}_0 < L$. Let $\ell$ be an index out of the support of $\bss$. This means that $s_\ell = 0$. We now construct a new selection vector $\hat{\bss}$ as $\hat{\bss} = \bss + \be_\ell$, where $\be_\ell$ denotes the one-hot vector of length $N$ with the non-zero entry at index $\ell$, i.e., $\rme_\ell = 1$ and $\rme_j = 0$ for $j\neq \ell$. For this selection, the aggregation error reads
\begin{subequations}
\begin{align}
	\epsilon \brc{ \bmm, \hat{\bss}, \mB} &= \Vert {\mathbf m}^\her \Diag{\hat{\bss}}\mH\mB -\boldsymbol{\phi}^\her\Vert^2+\sigma^2 \Vert {\mathbf m}^\her \Diag{\hat{\bss}}\Vert^2\\
	&= \Vert {\mathbf m}^\her \mS \mH\mB -\boldsymbol{\phi}^\her + m_\ell^* \baa^\trp_\ell \Vert^2+\sigma^2 \brc{\Vert {\mathbf m}^\her \mS \Vert^2 + \abs{m_\ell}^2},
\end{align}
\end{subequations}
where $\mS = \Diag{\bss}$ and $\baa_\ell^\trp$ denotes the $\ell$-th row of $\mH\mB$.
We now construct the receiver vector $\bar{\bmm}$ from $\bmm$ by setting its entry $\ell$ to zero, i.e., $\bar{m}_\ell = 0$ and $\bar{m}_n = m_n$ for $n\neq \ell$. Noting that $s_\ell = 0$, we can conclude that $\bmm^\her \mS = \bar{\bmm}^\her \mS$, and thus
\begin{subequations}
\begin{align}
	\epsilon \brc{ \bmm, \hat{\bss}, \mB} &= \Vert \bar{\bmm}^\her \mS \mH\mB -\boldsymbol{\phi}^\her + m_\ell^* \baa^\trp_\ell \Vert^2+\sigma^2 \brc{\Vert \bar{\bmm}^\her \mS \Vert^2 + \abs{m_\ell}^2}\\
	&= \bar{\epsilon} \brc{ \bar{\bmm}, m_\ell, \hat{\bss}, \mB},
\end{align}
\end{subequations}
where we define $\bar{\epsilon} \brc{ \bar{\bmm}, m_\ell, \hat{\bss}, \mB}$ for consistency. From the notation, it is readily seen that
\begin{align}
\epsilon \brc{ \bmm, {\bss}, \mB} = \bar{\epsilon} \brc{ \bar{\bmm}, 0, \hat{\bss}, \mB}.   
\end{align}
For the vector $\hat{\bss}$, the minimum aggregation error subject to the transmit power is given~by
\begin{align}\label{Q_1}
	\minSub{ \min_{m_\ell} \bar{\epsilon} \brc{  \bar{\bmm}, m_\ell, \hat{\bss}, \mB}}
	{\bar{\bmm}, \mB }
	{C: \abs{b_k}^2 < P \text{ for } k \in \dbc{K}. }
	{\maq_1}
\end{align}
The inner minimization is a standard quadratic optimization; therefore, we can conclude that if
	$\norm{\bar{\bmm}^\her \mS \mH\mB -\boldsymbol{\phi}^\her}_\infty \neq 0$
then the minimizer, i.e., $m_\ell^\star = \argmin_{m_\ell} \bar{\epsilon} \brc{  \bar{\bmm}, m_\ell, \hat{\bss}, \mB}$ is non zero. 
This means that
$\min_{m_\ell} \bar{\epsilon} \brc{  \bar{\bmm}, m_\ell, \hat{\bss}, \mB} < \bar{\epsilon} \brc{  \bar{\bmm}, 0 , \hat{\bss}, \mB},$
for a given pair $\brc{\bar{\bmm}, \mB}$. Hence, we have
\begin{align}\label{eq:Non_eq}
	\min_{\bmm,\mB} \epsilon \brc{ \bmm, \hat{\bss}, \mB} \; \text{ s.t. } C < \min_{\bmm,\mB} \epsilon \brc{ \bmm, {\bss}, \mB} \; \text{ s.t. } C
\end{align}
with $C$ referring to the constraint in \ref{Q_1}. The inequality in \eqref{eq:Non_eq} indicates that given the constraint in $	\norm{\bar{\bmm}^\her \mS \mH\mB -\boldsymbol{\phi}^\her}_\infty \neq 0$, by setting an entry of the selection vector to zero the objective function in \ref{P_4} always increases. Hence, the solution of \ref{P_4} satisfies $\norm{\bss}_0 = L$. This concludes the proof.

\bibliographystyle{IEEEtran}
\bibliography{references}

\begin{thebibliography}{10}
\providecommand{\url}[1]{#1}
\csname url@samestyle\endcsname
\providecommand{\newblock}{\relax}
\providecommand{\bibinfo}[2]{#2}
\providecommand{\BIBentrySTDinterwordspacing}{\spaceskip=0pt\relax}
\providecommand{\BIBentryALTinterwordstretchfactor}{4}
\providecommand{\BIBentryALTinterwordspacing}{\spaceskip=\fontdimen2\font plus
\BIBentryALTinterwordstretchfactor\fontdimen3\font minus
  \fontdimen4\font\relax}
\providecommand{\BIBforeignlanguage}[2]{{%
\expandafter\ifx\csname l@#1\endcsname\relax
\typeout{** WARNING: IEEEtran.bst: No hyphenation pattern has been}%
\typeout{** loaded for the language `#1'. Using the pattern for}%
\typeout{** the default language instead.}%
\else
\language=\csname l@#1\endcsname
\fi
#2}}
\providecommand{\BIBdecl}{\relax}
\BIBdecl

\bibitem{konevcny2016federated}
J.~Kone{\v{c}}n{\`y}, H.~B. McMahan, F.~X. Yu, P.~Richt{\'a}rik, A.~T. Suresh,
  and D.~Bacon, ``Federated learning: Strategies for improving communication
  efficiency,'' \emph{arXiv preprint at arXiv:1610.05492}, 2016.

\bibitem{bonawitz2019towards}
K.~Bonawitz, H.~Eichner, W.~Grieskamp \emph{et~al.}, ``Towards federated
  learning at scale: System design,'' \emph{Proc. Mach. Learn. Syst.}, vol.~1,
  pp. 374--388, 2019.

\bibitem{kairouz2021advances}
P.~Kairouz, H.~B. McMahan, B.~Avent \emph{et~al.}, ``Advances and open problems
  in federated learning,'' \emph{Found. Trends Mach. Learn}, vol.~14, no. 1--2,
  pp. 1--210, 2021.

\bibitem{jordan2018communication}
M.~I. Jordan, J.~D. Lee, and Y.~Yang, ``Communication-efficient distributed
  statistical inference,'' \emph{J. Am. Stat. Assoc.}, 2018.

\bibitem{chen2020joint}
M.~Chen, Z.~Yang, W.~Saad, C.~Yin, H.~V. Poor, and S.~Cui, ``A joint learning
  and communications framework for federated learning over wireless networks,''
  \emph{IEEE Trans. Wirel. Commun.}, vol.~20, no.~1, pp. 269--283, 2020.

\bibitem{zeng2020federated}
T.~Zeng, O.~Semiari, M.~Mozaffari, M.~Chen, W.~Saad, and M.~Bennis, ``Federated
  learning in the sky: Joint power allocation and scheduling with {UAV}
  swarms,'' in \emph{IEEE Int. Conf. Commun. (ICC)}, pp. 1--6, Dublin, Ireland,
  June 2020.

\bibitem{amiri2020federated}
M.~M. Amiri and D.~G{\"u}nd{\"u}z, ``Federated learning over wireless fading
  channels,'' \emph{IEEE Trans. Wirel.}, vol.~19, no.~5, pp. 3546--3557, 2020.

\bibitem{nazer2007computation}
B.~Nazer and M.~Gastpar, ``Computation over multiple-access channels,''
  \emph{IEEE Trans. Inf. Theory}, vol.~53, no.~10, pp. 28--43, 2007.

\bibitem{liu2020over}
W.~Liu, X.~Zang, Y.~Li, and B.~Vucetic, ``Over-the-air computation systems:
  Optimization, analysis and scaling laws,'' \emph{IEEE Trans. Wirel.},
  vol.~19, no.~8, pp. 5488--5502, 2020.

\bibitem{xu2021learning}
C.~Xu, S.~Liu, Z.~Yang, Y.~Huang, and K.-K. Wong, ``Learning rate optimization
  for federated learning exploiting over-the-air computation,'' \emph{IEEE J.
  Sel. Areas Commun.}, vol.~39, no.~12, pp. 3742--3756, 2021.

\bibitem{yang2019federated}
K.~Yang, T.~Jiang, Y.~Shi, and Z.~Ding, ``Federated learning based on
  over-the-air computation,'' in \emph{IEEE Int. Conf. Commun. (ICC)}, pp.
  1--6, Shanghai, May 2019.

\bibitem{bereyhi2022matching}
A.~Bereyhi, A.~Vagollari, S.~Asaad, R.~R. M{\"u}ller, W.~Gerstacker, and H.~V.
  Poor, ``Matching pursuit based scheduling for over-the-air federated
  learning,'' \emph{arXiv preprint arXiv:2206.06679}, 2022.

\bibitem{cao2021optimized}
X.~Cao, G.~Zhu, J.~Xu, Z.~Wang, and S.~Cui, ``Optimized power control design
  for over-the-air federated edge learning,'' \emph{IEEE J. Sel. Areas
  Commun.}, vol.~40, no.~1, pp. 342--358, 2021.

\bibitem{zhai2021hybrid}
X.~Zhai, X.~Chen, J.~Xu, and D.~W.~K. Ng, ``Hybrid beamforming for massive
  {MIMO} over-the-air computation,'' \emph{IEEE Trans. Commun.}, vol.~69,
  no.~4, pp. 2737--2751, 2021.

\bibitem{wang2022edge}
S.~Wang, Y.~Hong, R.~Wang, Q.~Hao, Y.-C. Wu, and D.~W.~K. Ng, ``Edge federated
  learning via unit-modulus over-the-air computation,'' \emph{IEEE Trans.
  Commun.}, vol.~70, no.~5, pp. 3141--3156, 2022.

\bibitem{asaad2018massive}
S.~Asaad, A.~M. Rabiei, and R.~R. M{\"u}ller, ``Massive {MIMO} with antenna
  selection: Fundamental limits and applications,'' \emph{IEEE Trans. Wirel.},
  vol.~17, no.~12, pp. 8502--8516, 2018.

\bibitem{yang2020federated}
K.~Yang, T.~Jiang, Y.~Shi, and Z.~Ding, ``Federated learning via over-the-air
  computation,'' \emph{IEEE Trans. Wirel.}, vol.~19, no.~3, pp. 2022--2035,
  2020.

\bibitem{hu2021energy}
Y.~Hu, M.~Chen, M.~Chen, Z.~Yang, M.~Shikh-Bahaei, H.~V. Poor, and S.~Cui,
  ``Energy minimization for federated learning with {IRS}-assisted over-the-air
  computation,'' in \emph{Proc. IEEE Int. Conf. Acoust. (ICASSP)}, pp.
  105--109, Toronto, June 2021.

\bibitem{jahed}
M.~A. Sedaghat, A.~Bereyhi, S.~Asaad, and R.~R. M{\"u}ller, ``A novel
  tree-based algorithm for device coordination in over-the-air federated
  learning,'' in \emph{Int. ITG Workshop Smart Ant. and Conf. on Sys., Commun.,
  and Coding (WSA \& SCC), VDE}, pp. 1--8, Braunschweig, March 2023.

\bibitem{cao2019optimal}
X.~Cao, G.~Zhu, J.~Xu, and K.~Huang, ``Optimal power control for over-the-air
  computation,'' in \emph{IEEE Glob. Commun. Conf. (GLOBECOM)}, pp. 1--6,
  Waikoloa, HI, Feb. 2019.

\bibitem{zhang2021gradient}
N.~Zhang and M.~Tao, ``Gradient statistics aware power control for over-the-air
  federated learning,'' \emph{IEEE Trans. Wirel.}, vol.~20, no.~8, pp.
  5115--5128, 2021.

\bibitem{zhu2019broadband}
G.~Zhu, Y.~Wang, and K.~Huang, ``Broadband analog aggregation for low-latency
  federated edge learning,'' \emph{IEEE Trans. Wirel.}, vol.~19, no.~1, pp.
  491--506, 2019.

\bibitem{zou2022knowledge}
Y.~Zou, Z.~Wang, X.~Chen, H.~Zhou, and Y.~Zhou, ``Knowledge-guided learning for
  transceiver design in over-the-air federated learning,'' \emph{IEEE Trans.
  Wirel.}, vol.~22, no.~1, pp. 270--285, 2022.

\bibitem{wang2021federated}
Z.~Wang, J.~Qiu, Y.~Zhou, Y.~Shi, L.~Fu, W.~Chen, and K.~B. Letaief,
  ``Federated learning via intelligent reflecting surface,'' \emph{IEEE Trans.
  Wirel.}, vol.~21, no.~2, pp. 808--822, 2021.

\bibitem{bereyhi2018stepwise}
A.~Bereyhi, S.~Asaad, and R.~R. Mueller, ``Stepwise transmit antenna selection
  in downlink massive multiuser {MIMO},'' in \emph{Int. ITG Workshop Smart Ant.
  (WSA), VDE}, pp. 1--8, Bochum, March 2018.

\bibitem{shi2020penalty}
Q.~Shi, M.~Hong, X.~Fu, and T.-H. Chang, ``Penalty dual decomposition method
  for nonsmooth nonconvex optimization—{Part II}: Applications,'' \emph{IEEE
  Trans. Signal Process.}, vol.~68, pp. 4242--4257, Jun. 2020.

\bibitem{gharavi2004fast}
M.~Gharavi-Alkhansari and A.~B. Gershman, ``Fast antenna subset selection in
  {MIMO} systems,'' \emph{IEEE Trans. Signal Process.}, vol.~52, no.~2, pp.
  339--347, 2004.

\bibitem{lee2020bayesian}
S.~Lee, C.~Park, S.-N. Hong, Y.~C. Eldar, and N.~Lee, ``Bayesian federated
  learning over wireless networks,'' \emph{arXiv preprint arXiv:2012.15486},
  2020.

\bibitem{boyd2004convex}
S.~Boyd and L.~Vandenberghe, \emph{Convex optimization}.\hskip 1em plus 0.5em
  minus 0.4em\relax Cambridge UP, 2004.

\bibitem{bereyhi2017asymptotics}
A.~Bereyhi, M.~A. Sedaghat, and R.~R. M{\"u}ller, ``Asymptotics of nonlinear
  {LSE} precoders with applications to transmit antenna selection,'' in
  \emph{Proc. IEEE Int. Symp. Inf. Theory (ISIT)}, pp. 81--85, Aachen, June
  2017.

\bibitem{bereyhi2019glse}
A.~Bereyhi, M.~A. Sedaghat, R.~R. M{\"u}ller, and G.~Fischer, ``{GLSE}
  precoders for massive {MIMO} systems: Analysis and applications,'' \emph{IEEE
  Trans. Wirel.}, vol.~18, no.~9, pp. 4450--4465, 2019.

\bibitem{tibshirani1996regression}
R.~Tibshirani, ``Regression shrinkage and selection via the {Lasso},'' \emph{J.
  R. Stat. Soc.}, vol.~58, no.~1, pp. 267--288, 1996.

\bibitem{foucart13}
S.~Foucart, H.~Rauhut, S.~Foucart, and H.~Rauhut, \emph{An invitation to
  compressive sensing}.\hskip 1em plus 0.5em minus 0.4em\relax Springer, 2013.

\bibitem{bereyhi2021detection}
A.~Bereyhi, S.~Asaad, B.~Gade, R.~R. M{\"u}ller, and H.~V. Poor, ``Detection of
  spatially modulated signals via {RLS}: Theoretical bounds and applications,''
  \emph{IEEE Trans. Wirel.}, vol.~21, no.~4, pp. 2291--2304, 2021.

\bibitem{arnold2016efficient}
T.~B. Arnold and R.~J. Tibshirani, ``Efficient implementations of the
  generalized {Lasso} dual path algorithm,'' \emph{J. Comput. Graph. Stat.},
  vol.~25, no.~1, pp. 1--27, 2016.

\bibitem{leng2006note}
C.~Leng, Y.~Lin, and G.~Wahba, ``A note on the {Lasso} and related procedures
  in model selection,'' \emph{Statistica Sinica}, pp. 73--84, 2006.

\bibitem{bereyhi2019rls}
A.~Bereyhi, S.~Asaad, B.~G{\"a}de, and R.~R. M{\"u}ller, ``{RLS}-based
  detection for massive spatial modulation {MIMO},'' in \emph{Proc. IEEE Int.
  Symp. Inf. Theory (ISIT)}, pp. 1167--1171, Paris, July 2019.

\bibitem{efron2004least}
B.~Efron, T.~Hastie, I.~Johnstone, and R.~Tibshirani, ``Least angle
  regression,'' \emph{Ann. Stat.}, vol.~32, no.~2, pp. 407--499, 2004.

\bibitem{friedlander2012hybrid}
M.~P. Friedlander and M.~Schmidt, ``Hybrid deterministic-stochastic methods for
  data fitting,'' \emph{SIAM Journal on Scientific Computing}, vol.~34, no.~3,
  pp. A1380--A1405, 2012.

\bibitem{bjornson2017massive}
E.~Bj{\"o}rnson, J.~Hoydis, L.~Sanguinetti \emph{et~al.}, ``Massive {MIMO}
  networks: Spectral, energy, and hardware efficiency,'' \emph{Found. Trends
  Signal Process.}, vol.~11, no. 3-4, pp. 154--655, 2017.

\bibitem{krizhevsky2009learning}
A.~Krizhevsky, G.~Hinton \emph{et~al.}, \emph{Learning multiple layers of
  features from tiny images}, Toronto, ON, Canada, 2009.

\bibitem{xiao2017fashion}
H.~Xiao, K.~Rasul, and R.~Vollgraf, ``Fashion-{MNIST}: a novel image dataset
  for benchmarking machine learning algorithms,'' \emph{arXiv preprint
  arXiv:1708.07747}, 2017.

\bibitem{simonyan2014very}
K.~Simonyan and A.~Zisserman, ``Very deep convolutional networks for
  large-scale image recognition,'' \emph{arXiv preprint arXiv:1409.1556}, 2014.

\end{thebibliography}
\end{document}